\DeclareMathOperator*{\argmax}{arg\,max}
\newcommand \whole {\mathbb{Z}}
\newcommand \reals {\mathbb{R}}
\newcommand \sgn {\mathrm{sgn}}
\newcommand \bigO[1] {O{\left(#1\right)}}
\newcommand \bigOm[1] {\Omega{\left(#1\right)}}
\newcommand \bigTh[1] {\Theta{\left(#1\right)}}
\NewDocumentCommand \propbip {O{n}} {p_{#1}}
\NewDocumentCommand \approxpropbip {O{n}} {\tilde{p}_{#1}}
\NewDocumentCommand \maximizablebip {O{n}} {b_{#1}}
\NewDocumentCommand \maximizableone {O{n}} {o_{#1}}
\NewDocumentCommand \propone {O{n}} {q_{#1}}
\NewDocumentCommand \approxpropone {O{n}} {\tilde{q}_{#1}}
\NewDocumentCommand \totaltrees {O{n}} {t_{#1}}
\NewDocumentCommand \graph {} { G }
\NewDocumentCommand \bipgraph {} { B }
\NewDocumentCommand \tree {} { T }
\NewDocumentCommand \Root {} { r }
\NewDocumentCommand \rtree { O{\Root} O{\tree} }{ {#2}^{#1} }
\NewDocumentCommand \subtree { m O{\Root} O{\tree} } { {#3}^{#2}_{#1} }
\NewDocumentCommand \freetrees { O{n} } { \mathcal{F}_{#1} }
\NewDocumentCommand \pathgraphsymbol {} { P }
\NewDocumentCommand \pathgraph {} { \pathgraphsymbol }
\NewDocumentCommand \pathgraphclass { O{n} } { \mathcal{\pathgraph}_{#1} }
\NewDocumentCommand \cyclesymbol {} { C }
\NewDocumentCommand \cycle {} { \cyclesymbol }
\NewDocumentCommand \cycleclass { O{n} } { \mathcal{\cycle}_{#1} }
\NewDocumentCommand \regularsymbol {} { R }
\NewDocumentCommand \regular {} { \regularsymbol }
\NewDocumentCommand \kregularclass { O{k} O{n} } { \mathcal{\regular}^{(#1)}_{#2} }
\NewDocumentCommand \spidersymbol {} { A }
\NewDocumentCommand \spider {} { \spidersymbol }
\NewDocumentCommand \spiderclass { O{n} } { \mathcal{\spider}_{#1} }
\NewDocumentCommand \twolinearsymbol {} { L }
\NewDocumentCommand \twolinear {} { \twolinearsymbol }
\NewDocumentCommand \twolinearclass { O{n} } { \mathcal{\twolinear}_{#1} }
\NewDocumentCommand \klinearsymbol {} { L }
\NewDocumentCommand \klinear {} { \klinearsymbol }
\NewDocumentCommand \klinearclass { O{k} O{n} } { \mathcal{\klinear}^{(#1)}_{#2} }
\NewDocumentCommand \bistarsymbol {} { H }
\NewDocumentCommand \bistar {} { \bistarsymbol }
\NewDocumentCommand \bistarclass { O{n} } { \mathcal{\bistar}_{#1} }
\NewDocumentCommand \balancedbistarsymbol {} { O }
\NewDocumentCommand \balancedbistar {} { \balancedbistarsymbol }
\NewDocumentCommand \balancedbistarclass { O{n} } { \mathcal{\balancedbistar}_{#1} }
\NewDocumentCommand \startsymbol {} { S }
\NewDocumentCommand \start {} { \startsymbol }
\NewDocumentCommand \startclass { O{n} } { \mathcal{\start}_{#1} }
\NewDocumentCommand \quasistarsymbol {} { Q }
\NewDocumentCommand \quasistar {} { \quasistarsymbol }
\NewDocumentCommand \quasistarclass { O{n} } { \mathcal{\quasistar}_{#1} }
\NewDocumentCommand \kquasistar { O{k} } { \quasistar^{(#1)} }
\NewDocumentCommand \kquasistarclass { O{n} O{k} } { \quasistarclass[#1]^{(#2)} }
\NewDocumentCommand \degreesymbol {} { d }
\NewDocumentCommand \degreegraph { m m } { \degreesymbol_{#2}(#1) }
\NewDocumentCommand \degree { m } { \degreegraph{#1}{} }
\NewDocumentCommand \degreeG { m } { \degreegraph{#1}{\graph} }
\NewDocumentCommand \maxdegreesymbol {} { \Delta }
\NewDocumentCommand \maxdegreeset { m m } { \maxdegreesymbol_{#2}(#1) }
\NewDocumentCommand \maxdegreesetG { m } { \maxdegreeset{\graph}{#1} }
\NewDocumentCommand \maxdegreesetT { m } { \maxdegreeset{\tree}{#1} }
\NewDocumentCommand \maxdegree { m } { \maxdegreesymbol(#1) }
\NewDocumentCommand \maxdegreeG {} { \maxdegree{\graph} }
\NewDocumentCommand \maxdegreeT {} { \maxdegree{\tree} }
\NewDocumentCommand \neighborssymbol {} { \Gamma }
\NewDocumentCommand \neighbors { m } { \neighborssymbol(#1) }
\NewDocumentCommand \neighborsG { m } { \neighborssymbol_{\graph}(#1) }
\NewDocumentCommand \epotthistlessymbol {} { \psi }
\NewDocumentCommand \epotthistles { m } { \epotthistlessymbol(#1) }
\NewDocumentCommand \epotthistlesT {} { \epotthistles{\tree} }
\NewDocumentCommand \numepotthistlesT {} { |\epotthistlesT| }
\NewDocumentCommand \arr { O{\pi} } {#1}
\NewDocumentCommand \invarr { O{\arr} } {{#1}^{-1}}
\NewDocumentCommand \marr { O{\pi} } {\tilde{#1}}
\NewDocumentCommand \invmarr { O{\marr} } {{#1}^{-1}}
\NewDocumentCommand \oarr { O{\phi} } {#1}
\NewDocumentCommand \oinvarr { O{\oarr} } {{#1}^{-1}}
\NewDocumentCommand \maxarr { O{\arr} } {{#1}^*}
\NewDocumentCommand \biparr { O{\arr} } {{#1}_{B}}
\NewDocumentCommand \maxbiparr { O{\arr} } {\biparr[#1]^*}
\NewDocumentCommand \imaxbiparr { O{\arr} } {{#1}_{+}}
\NewDocumentCommand \nonbiparr { O{\arr} } {{#1}_{\overline{B}}}
\NewDocumentCommand \maxnonbiparr { O{\arr} } {\nonbiparr[#1]^*}
\NewDocumentCommand \diffarr { O{\tree} O{\Delta} } { {#2}(#1) }
\NewDocumentCommand \idiffarr { O{\tree} O{\Lambda} } { {#2}(#1) }
\NewDocumentCommand \lengthedgesymbol {} { \delta }
\NewDocumentCommand \dl { m O{\arr} } { \lengthedgesymbol_{#2}(#1) }
\NewDocumentCommand \incidentDsymbol {} { \gamma }
\NewDocumentCommand \iD { m O{\arr} } { \incidentDsymbol_{#2}(#1) }
\NewDocumentCommand \Dsymbol {} { D }
\NewDocumentCommand \D { O{\arr} O{\graph} } { \Dsymbol_{#1}(#2) }
\NewDocumentCommand \gDmin { m m } { m_{#1}[\Dsymbol(#2)] }
\NewDocumentCommand \DminProj { O{\Rtree} } { \gDmin{\projective}{#1} }
\NewDocumentCommand \DminPlan { O{\tree} } { \gDmin{\planar}{#1} }
\NewDocumentCommand \DminG { O{\graph} } { \gDmin{}{#1} }
\NewDocumentCommand \Dmin { O{\tree} } { \gDmin{}{#1} }
\NewDocumentCommand \gDMax { m m } { M_{#1}[\Dsymbol(#2)] }
\NewDocumentCommand \DMaxProj { O{\Rtree} } { \gDMax{\projective}{#1} }
\NewDocumentCommand \DMaxPlan { O{\tree} } { \gDMax{\planar}{#1} }
\NewDocumentCommand \DMax { m } { \gDMax{}{#1} }
\NewDocumentCommand \DMaxG {} { \DMax{\graph} }
\NewDocumentCommand \DMaxT {} { \DMax{\tree} }
\NewDocumentCommand \DMaxBipartite { m } {\gDMax{\mathrm{bip}}{#1}}
\NewDocumentCommand \DMaxBipartiteG {} {\DMaxBipartite{\graph}}
\NewDocumentCommand \DMaxBipartiteT {} {\DMaxBipartite{\tree}}
\NewDocumentCommand \DMaxNonBipartite { m } {\gDMax{\text{non-bip}}{#1}}
\NewDocumentCommand \DMaxNonBipartiteG {} {\DMaxNonBipartite{\graph}}
\NewDocumentCommand \DMaxOneThistle { m } {\gDMax{\mathrm{1}t}{#1}}
\NewDocumentCommand \DMaxOneThistleG {} {\DMaxOneThistle{\graph}}
\NewDocumentCommand \maxarrset { m } {\mathcal{\Dsymbol}^*(#1)}
\NewDocumentCommand \maxarrsetG {} {\maxarrset{\graph}}
\NewDocumentCommand \biparrset { m O{} } {\mathcal{\bipgraph}^{#2}(#1)}
\NewDocumentCommand \biparrsetG {} {\biparrset{\graph}[]}
\NewDocumentCommand \maxbiparrset { m } {\biparrset{#1}[*]}
\NewDocumentCommand \maxbiparrsetG {} {\maxbiparrset{\graph}}
\NewDocumentCommand \nonbiparrset { m O{} } {\overline{\mathcal{\bipgraph}}^{#2}(#1)}
\NewDocumentCommand \nonbiparrsetG {} {\nonbiparrset{\graph}[]}
\NewDocumentCommand \maxnonbiparrset { m } {\nonbiparrset{#1}[*]}
\NewDocumentCommand \maxnonbiparrsetG {} {\maxnonbiparrset{\graph}}
\NewDocumentCommand \nonbiparrsetonethistle { m O{} } {\overline{\mathcal{\bipgraph}}_{1t}^{#2}(#1)}
\NewDocumentCommand \nonbiparrsetonethistleG {} {\nonbiparrsetonethistle{\graph}[]}
\NewDocumentCommand \maxnonbiparrsetonethistle { m } {\nonbiparrsetonethistle{#1}[*]}
\NewDocumentCommand \maxnonbiparrsetonethistleG {} {\maxnonbiparrsetonethistle{\graph}}
\NewDocumentCommand \NB {} {{\em NB}}
\NewDocumentCommand \OTk {O{k}} {$#1${\em-thistle}}
\NewDocumentCommand \NBOTk {O{k}} {\NB$\times$\OTk[#1]}
\NewDocumentCommand \cutsignaturesymbol {} { C }
\NewDocumentCommand \cutsignature { m O{\arr} } { \cutsignaturesymbol_{#2}(#1) }
\NewDocumentCommand \cutsignatureG { O{\arr} } { \cutsignature{\graph}[#1] }
\NewDocumentCommand \cutsignatureT { O{\arr} } { \cutsignature{\tree}[#1] }
\NewDocumentCommand \cutsymbol {} { c }
\NewDocumentCommand \cut { m O{\arr} } { \cutsymbol_{#2}(#1) }
\NewDocumentCommand \levelsignaturesymbol {} { L }
\NewDocumentCommand \levelsignature { m O{\arr} } { \levelsignaturesymbol_{#2}(#1) }
\NewDocumentCommand \levelsignatureG { O{\arr} } { \levelsignature{\graph}[#1] }
\NewDocumentCommand \levelsignatureT { O{\arr} } { \levelsignature{\tree}[#1] }
\NewDocumentCommand \levelsymbol {} {l}
\NewDocumentCommand \level { m O{\arr} } { \levelsymbol_{#2}{\left(#1\right)} }
\NewDocumentCommand \leftdeg { m O{\arr} } { \mathfrak{l}_{#2}(#1) }
\NewDocumentCommand \rightdeg { m O{\arr} } { \mathfrak{r}_{#2}(#1) }
\NewDocumentCommand \neighborsarrsymbol { } { \Phi }
\NewDocumentCommand \neighborsarr { m m O{\arr} } { \neighborsarrsymbol_{#3}^{#2}(#1) }
\NewDocumentCommand \assignment {} { b }
\NewDocumentCommand \levelfunc { O{\assignment} } {\levelsymbol_{#1}}
\NewDocumentCommand \thistleset { O{\arr} } { \theta(#1) }
\NewDocumentCommand \thistlev {} { t }
\NewDocumentCommand \h {} { h }
\NewDocumentCommand \g {} { g } % it's just another hub...
\NewDocumentCommand \w {} { w_0 } % internal vertex
\NewDocumentCommand \degreehub { m } { \degreesymbol_{#1} }
\NewDocumentCommand \degreeh {} { \degreehub{\h} }
\NewDocumentCommand \degreeg {} { \degreehub{\g} }
\NewDocumentCommand \xyamo { m m } { {#1}_{#2} }
\NewDocumentCommand \nptwo {} { \xyamo{n}{+2} }
\NewDocumentCommand \npone {} { \xyamo{n}{+1} }
\NewDocumentCommand \nzero {} { \xyamo{n}{0} }
\NewDocumentCommand \nmone {} { \xyamo{n}{-1} }
\NewDocumentCommand \nmtwo {} { \xyamo{n}{-2} }
\NewDocumentCommand \hptwo {} { \xyamo{\h}{+2} }
\NewDocumentCommand \hpone {} { \xyamo{\h}{+1} }
\NewDocumentCommand \hmone {} { \xyamo{\h}{-1} }
\NewDocumentCommand \hmtwo {} { \xyamo{\h}{-2} }
\NewDocumentCommand \proofcase {} {\rho}
\NewDocumentCommand \agh {} {a_{\g\h}}
\NewDocumentCommand \awh {} {a_{\w\h}}
\NewDocumentCommand \pathoptimization   {} {\cref{lemma:properties_max_arrs:new:POL}}
\NewDocumentCommand \pathoptimizationO  {} {\cref{lemma:properties_max_arrs:new:POL}(\ref{lemma:properties_max_arrs:new:POL:0})}
\NewDocumentCommand \pathoptimizationOO {} {Lemma \ref{lemma:properties_max_arrs:new:POL}(\ref{lemma:properties_max_arrs:new:POL:00})}
\NewDocumentCommand \alternation   {} {\cref{lemma:properties_max_arrs:new:alternation}}
\NewDocumentCommand \alternationi  {} {\cref{lemma:properties_max_arrs:new:alternation}(\ref{lemma:properties_max_arrs:new:alternation:antenna})}
\NewDocumentCommand \alternationia  {} {\cref{lemma:properties_max_arrs:new:alternation}(\ref{lemma:properties_max_arrs:new:alternation:antenna:nonsingular})}
\NewDocumentCommand \alternationib {} {Lemma \ref{lemma:properties_max_arrs:new:alternation}(\ref{lemma:properties_max_arrs:new:alternation:antenna:singular})}
\NewDocumentCommand \alternationii  {} {\cref{lemma:properties_max_arrs:new:alternation}(\ref{lemma:properties_max_arrs:new:alternation:bridge})}
\NewDocumentCommand \alternationiia  {} {\cref{lemma:properties_max_arrs:new:alternation}(\ref{lemma:properties_max_arrs:new:alternation:bridge:nonthistle})}
\NewDocumentCommand \alternationiib {} {Lemma \ref{lemma:properties_max_arrs:new:alternation}(\ref{lemma:properties_max_arrs:new:alternation:bridge:thistle})}
\NewDocumentCommand \propagation    {} {\cref{lemma:properties_max_arrs:new:propagation}}
\NewDocumentCommand \propagationi   {} {\cref{lemma:properties_max_arrs:new:propagation}(\ref{lemma:properties_max_arrs:new:propagation:antenna})}
\NewDocumentCommand \propagationiic {} {\cref{lemma:properties_max_arrs:new:propagation}(\ref{lemma:properties_max_arrs:new:propagation:bridge:z2})}
\NewDocumentCommand \propagationiid {} {\cref{lemma:properties_max_arrs:new:propagation}(\ref{lemma:properties_max_arrs:new:propagation:bridge:z0})}
\NewDocumentCommand \propagationiie {} {\cref{lemma:properties_max_arrs:new:propagation}(\ref{lemma:properties_max_arrs:new:propagation:bridge:z_neighbors})}
\NewDocumentCommand \propagatefunc {m m} {\lambda(#1,#2)}
\NewDocumentCommand \Nurse {} {\cref{propos:properties_max_arrs:known:non_increasing_levsig,propos:properties_max_arrs:known:no_neighs_in_same_level,propos:properties_max_arrs:known:permutation_of_equal_level}}
\NewDocumentCommand \cir { m } {\tikz[baseline]{\node[anchor=base, draw, circle, inner sep=0, minimum width=1.0em]{{\footnotesize #1}};}}
\NewDocumentCommand \proofcasemone {} {\cir{$a$}}
\NewDocumentCommand \proofcasezero {} {\cir{$b$}}
\NewDocumentCommand \proofcasepone {} {\cir{$c$}}
\begin{document}

\title{Maximum Linear Arrangement: exact algorithms for specific classes of graphs and approximation algorithms for wide classes of graphs}
\author{Llu\'{i}s Alemany-Puig \and Juan Luis Esteban \and Ramon Ferrer-i-Cancho}
\institute{Quantitative, Mathematical and Computational Linguistics Research Group, Department of Computer Science, Universitat Politècnica de Catalunya, Jordi Girona 1-3, 08034, Barcelona, Catalonia, Spain \at \email{lluis.alemany.puig@upc.edu} \and \at \email{esteban@cs.upc.edu} \and \at \email {rferrericancho@cs.upc.edu}}

\titlerunning{Maximum Linear Arrangement: exact and approximate algorithms}
\authorrunning{Alemany-Puig, Esteban, Ferrer-i-Cancho}

\maketitle

\begin{abstract}
Linear arrangements of graphs are a well-known type of graph labeling and are found in many important computational problems. A linear arrangement is usually defined as a permutation of the $n$ vertices of a graph. An intuitive geometric setting is that of vertices lying on consecutive integer positions in the real line, starting at 1; edges are often drawn as semicircles above the real line. A well-known computational problem is the Minimum Linear Arrangement Problem ({\tt minLA}) where the goal is to find an arrangement that minimizes the sum of edge lengths. In this paper we study the Maximum Linear Arrangement problem ({\tt MaxLA}), the counterpart of {\tt minLA}. We devise a new characterization of maximum arrangements of general graphs, and prove that {\tt MaxLA} can be solved for $k$-regular graphs ($k\le2$) in time $\bigO{n}$, and for $k$-linear trees ($k\le2$) in time $\bigO{n}$. We present two constrained variants of {\tt MaxLA} we call {\tt bipartite MaxLA} and {\tt 1-thistle MaxLA}. We prove that the former can be solved in time $\bigO{n}$ for any connected bipartite graph; the latter can be solved by an algorithm that typically runs in time $\bigO{n^3\log n}$ on unlabeled trees. We show that {\tt bipartite MaxLA} is a $3/2$-approximation algorithm for {\tt MaxLA} for trees.
\keywords{Linear arrangements \and Combinatorial optimization \and Graph Theory \and Approximation algorithms}
\end{abstract}

%~ \tableofcontents

%%%%%%%%%%%%%%%%%%%%%%%%%%%%%%%%%%%%%%%%%%%%%%%%%%%%%%%%%%%%%%%%%%%%%%%%%%%%%%%%

%--------------------------------------------%
% <automatic inline of '1-introduction.tex'> %
%--------------------------------------------%
\section{Introduction}
\label{sec:introduction}

There exist many graph labeling problems in the literature \parencite{Gallian2018a}. One notorious example is the {\em Graceful Tree Conjecture}, first posed by \textcite{Rosa1967a} under the term `$\beta$-valuation', and coined later as `graceful' by \textcite{Golomb1972a}. This conjecture claims that all trees are {\em graceful}, that is, every tree admits a {\em graceful labeling}. To see that a tree $\tree=(V,E)$, where $V$ ($n=|V|$) and $E$ are the vertex and edge sets respectively, admits a graceful labeling, one only needs to find a bijection $\arr\,:\,V\rightarrow [n]$ such that every edge $uv\in E$ is uniquely identified by the value
\begin{equation}
\label{eq:introduction:d}
\dl{uv}=|\arr(u) - \arr(v)|.
\end{equation}
\cref{fig:introduction:examples_arrangements}(a) illustrates a caterpillar tree and a graceful labeling.

Another notorious example of a labeling problem is the Minimum Linear Arrangement Problem,\footnote{The Minimum Linear Arrangement Problem is also said to be a graph layout problem \parencite{Diaz1999a,Petit2003a,Petit2011a}.} henceforth abbreviated as {\tt minLA}, where the vertices of a graph $\graph=(V,E)$ are also labeled with a bijection $\arr\,:\,V\rightarrow [n]$, called {\em linear arrangement}, but the goal of the problem is to find the labeling that minimizes the {\em cost} associated to the labeling, $\D$, defined as
\begin{equation}
\label{eq:introduction:D}
\D = \sum_{uv\in E} \dl{uv}.
\end{equation}
More formally, the solution to {\tt minLA} is
\begin{equation*}
\DminG = \min_{\arr}\{\D\},
\end{equation*}
where the minimum is taken over all $n!$ linear arrangements. \cref{fig:introduction:examples_arrangements}(b) illustrates a minimum arrangement of the tree in \cref{fig:introduction:examples_arrangements}(a). Among applications for both {\tt minLA} and {\tt MaxLA} we find the study of cognitive principles in human languages via statistical normalization of $\D$ when $\graph$ is the tree of syntactic relationships among the words of a sentence and $\arr$ is the ordering of the words \parencite{Alemany2026a}. One specific application of {\tt MaxLA} is the placement of obnoxious facilities \parencite{Tamir1991a}.

Both labeling problems share the same geometric setting: place the vertices of the graph along consecutive integer positions of the real line, starting at $1$ and ending at $n$, where every vertex $u$ is placed at position $\arr(u)$, and draw the edges of the graph as semicircles above the line. The value $\lengthedgesymbol$ associated to an edge $uv\in E$ can be interpreted as the length of said edge in the arrangement; this is why the cost of an arrangement is often called the {\em sum of edge lengths} (e.g., \textcite{Ferrer2021a}). \cref{fig:introduction:examples_arrangements}(c) illustrates the graceful labeling in \cref{fig:introduction:examples_arrangements}(a) as a linear arrangement.

There exists sizable literature regarding {\tt minLA}. It is known to be {\bf NP}-Hard for general graphs \parencite{Garey1976a}, but there exist several polynomial-time algorithms for specific classes of graphs. For instance, {\tt minLA} is polynomial-time (in $n$) solvable in trees \parencite{Goldberg1976a,Shiloach1979a,Chung1984a}; the fastest algorithm is, to the best of our knowledge, due to \textcite{Chung1984a} with cost $\bigO{n^\lambda}$, where $\lambda$ is the smallest number such that $\lambda>\log{3}/\log{2}\approx1.58$. See \textcite{Diaz2002a,Petit2011a} for a comprehensive survey of polynomial-time algorithms for {\tt minLA} on several classes of graphs, including, but not limited to, rectangular grids, square grids, outerplanar graphs and chord graphs. There also exist promising linear programming approaches to solve {\tt minLA} \parencite{Andrade2017a}. There are studies concerning upper and lower bounds of {\tt minLA} for general graphs \parencite{Diaz1999a} and also several heuristic methods have been investigated \parencite{Petit2003a}. There exist several variants of {\tt minLA}. For instance, there exists the {\em planar} variant for free trees ({\tt planar minLA}) where the minimum is taken over all {\em planar arrangements}, where edges are not allowed to cross \parencite{Iordanskii1987a,Hochberg2003a}. There also exists the {\em projective} variant for rooted trees ({\tt projective minLA}) where the minimum is taken over all {\em projective arrangements}, planar arrangements where the root cannot be covered \parencite{Gildea2007a,Alemany2022a}. {\tt minLA} also features in language research acting as a universal linguistic principle \parencite{Ferrer2004a,Ferrer2022a}.

Comparatively less research exists on the Maximum Linear Arrangement problem, the maximum variant of {\tt minLA}, henceforth abbreviated as {\tt MaxLA} and formally defined as
\begin{equation}
\label{eq:introduction:MaxLA}
\DMaxG = \max_{\arr}\{\D\},
\end{equation}
where the maximum is taken over all $n!$ linear arrangements. \cref{fig:introduction:examples_arrangements}(d) illustrates a maximum arrangement of the tree in \cref{fig:introduction:examples_arrangements}(a). It is also known to be {\bf NP}-Hard for general graphs \parencite{Even1975a}, but in this case very few polynomial-time algorithms for classes of graphs are known \parencite{Ferrer2013a,Ferrer2021a,Ferrer2022a}. Bounds have been studied via direct construction of solutions \parencite{Ferrer2013a,Ferrer2021a,Ferrer2022a}, and also via the probabilistic method \parencite{Hassin2001a}, and there exist linear programming approaches to solve it for general graphs \parencite{Zundorf2022a}. Characterizations of maximum arrangements of graphs were only recently put forward by \textcite{Nurse2018a,Nurse2019a}. Moreover, they devised a $\bigO{n^d}$-time algorithm for rooted trees with maximum degree bounded by a constant $d$. {\tt MaxLA} for trees has also been studied in the {\em planar} and {\em projective} variants ({\tt projective/planar MaxLA}) \parencite{Alemany2024a}. For example, \cref{fig:introduction:examples_arrangements}(c) illustrates a maximum planar arrangement (in which edges are not allowed to cross) of the tree in \cref{fig:introduction:examples_arrangements}(a).

\begin{figure}
	\centering
	\includegraphics{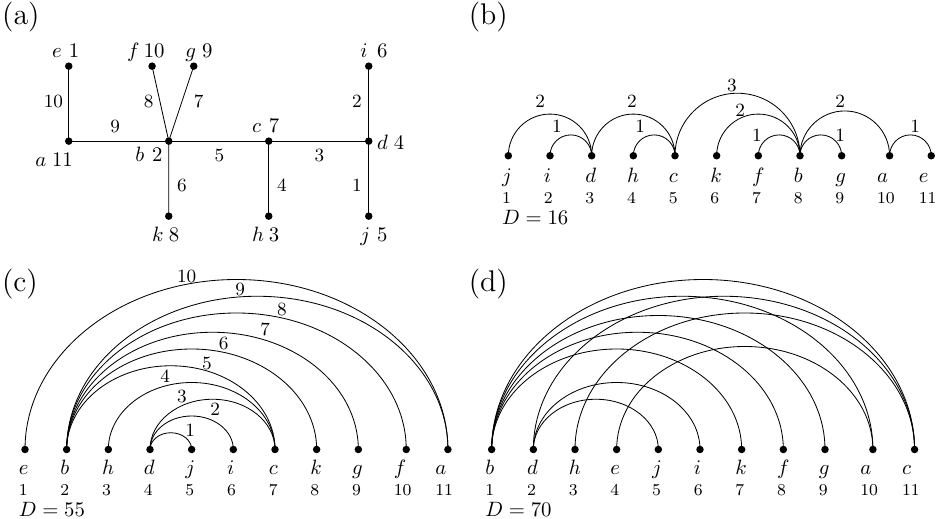}
	\caption{(a) A tree, whose vertices are labeled with letters $a,\dots,h$, and its graceful labeling, in numbers next to the vertices. Edges are labeled with the difference of the values in their endpoints. (b) A minimum arrangement of the tree in (a). (c) Linear arrangement of the graceful labeling in (a); it is also a maximum planar arrangement of the same tree \parencite{Alemany2024a}. (d) A maximum arrangement of the tree in (a).}
	\label{fig:introduction:examples_arrangements}
\end{figure}

Although {\tt MaxLA} has not been studied as extensively as {\tt minLA}, there also exist known solutions on particular classes of trees \parencite{Ferrer2021a,Ferrer2022a}. For path graphs $\pathgraph\in\pathgraphclass$ \parencite{Ferrer2021a},
\begin{equation}
\label{eq:introduction:DMax_path_graphs}
\DMax{\pathgraph} = \left\lfloor \frac{n^2}{2} \right\rfloor - 1.
\end{equation}
A bistar graph $\bistar\in\bistarclass$ ($n\ge2$) is a tree that consists of two star graphs joined at their {\em hubs}, denoted as $\h_1$ and $\h_2$, that is $\h_1\h_2\in E(\bistar)$. For any $\bistar\in\bistarclass$ \parencite{Ferrer2021a},
\begin{equation}
\label{eq:introduction:DMax_bistar_trees}
\DMax{\bistar} = \degreesymbol_1\degreesymbol_2 + {n-1 \choose 2},
\end{equation}
where $\degreesymbol_1$, $\degreesymbol_2$ denote the degrees of $\h_1$ and $\h_2$ respectively. Three types of bistar graphs are relevant. First, balanced bistar graphs $\balancedbistarclass$, trees where $|\degreesymbol_1 - \degreesymbol_2|\le 1$. For any $\balancedbistar\in\balancedbistarclass$ \parencite{Ferrer2021a},
\begin{equation*}
\DMax{\balancedbistar} = \frac{3(n - 1)^2 + 1 - (n\bmod 2)}{4}.
\end{equation*}
Second, star graphs $\start\in\startclass$ (bistar graphs in which $d_1=n-1$ and $d_2=1$) \parencite{Ferrer2021a}
\begin{equation*}
\DMax{\start} = {n \choose 2}
\end{equation*}
and, third, quasi-star graphs (bistar graphs in which $d_1=n-2$ and $d_2=2$). These trees have received special attention in Quantitative Linguistics research \parencite{Ferrer2013a,Ferrer2018a,Ferrer2022a}; their value of {\tt MaxLA} can be easily obtained by instantiating \cref{eq:introduction:DMax_bistar_trees}.

\textcite{Ferrer2022a} also presented a  solution to {\tt MaxLA} for $k$-quasistar graphs $\kquasistar\in\kquasistarclass$, star graphs where $k$ edges have been subdivided once (thus $0\le k\le n-1$ and $n = 2k + l + 1$, where $l\ge0$ is the number of edges of the star that are not subdivided). The quasistar graph above is an instance with $k=1$. It was shown that
\begin{equation*}
\DMax{\kquasistar} = \frac{(n - 1 - k)(n + 3k)}{2}.
\end{equation*}
All of these results were accompanied by a construction of the linear arrangement that yields the corresponding cost.

Finally, thanks to the characterization of the trees that maximize {\tt MaxLA} over all $n$-vertex free trees $\freetrees$, it was found that the maximum {\tt MaxLA} is achieved by a balanced bistar graph $\balancedbistar\in\balancedbistarclass$ \parencite{Ferrer2021a}
\begin{equation}
\label{eq:introduction:maximum_MaxLA}
\max_{\tree\in\freetrees}\{ \DMaxT \} = \DMax{\balancedbistar}.
\end{equation}
Moreover, it was also found that {\tt MaxLA} is minimized by star graphs \parencite{Ferrer2021a}
\begin{equation}
\label{eq:introduction:minimum_MaxLA}
\min_{\tree\in\freetrees}\{ \DMaxT \} = \DMax{\start}.
\end{equation}

In this paper we present a new characterization of maximum arrangements, and the solution to {\tt MaxLA} in $k$-regular graphs $\kregularclass$ and in $k$-linear trees $\klinearclass$ with $k \le2$. The class $k$-linear trees was introduced recently; a tree is a $k$-linear tree if all of its vertices of degree $\ge3$ lie on a single induced path \parencite{Johnson2020a}. For that class, we cover $k\le2$, which comprises path graphs ($k=0$), spider graphs ($k=1$) and $2$-linear trees ($k=2$). We present a constrained variant of {\tt MaxLA} for any given connected bipartite graph $\bipgraph=(V_1\cup V_2;E)$ which we call {\tt bipartite MaxLA} whereby the maximum sum of edge lengths is computed among all {\em bipartite arrangements}, namely arrangements where the first vertices are those from $V_1$ followed then by those from $V_2$ (or vice versa). We devise an algorithm that solves {\tt bipartite MaxLA} in time and space $\bigO{n}$ for any connected bipartite graph. We also prove that {\tt bipartite MaxLA} approximates {\tt MaxLA} by a factor of $3/2$ on trees. Moreover, we present another constrained variant we call {\tt 1-thistle MaxLA}, defined for any graph, whereby the maximum is taken over all arrangements that contain exactly one vertex (which we call {\em thistle}) whose neighbors are arranged to either side in the arrangement and thus makes the arrangement {\em non-bipartite}; we devise an algorithm that runs in time $\bigO{n^2\Delta2^\Delta}$, where $\Delta$ is the maximum degree of the tree, but that typically runs, for unlabeled trees, in time $\bigO{n^3 \cdot \log{n}}$ to solve it. The algorithms devised in this article rely heavily on a variant of bucket sort known as counting sort \parencite[Chapter 8, page 194]{Cormen2001a}. Counting sort is a non-comparison-based sorting algorithm that can sort $n$ elements within the range $[0,k]$ in time and space $\bigO{n + k}$. When $k = \bigO{n}$ as in our setting, counting sort has a time and space complexity of $\bigO{n}$, which is much lower than the lower bound $\bigOm{n\log{n}}$ of comparison-based algorithms \parencite[Chapter 8, page 193]{Cormen2001a}.

Our strategy in tackling {\tt MaxLA} has a parallel in research for {\tt minLA} on trees. The best known algorithm is due to \textcite{Chung1984a} with time complexity $\bigO{n^{1.58}}$. Although it runs in polynomial time in $n$, other researchers tried to approximate {\tt minLA} via a constrained variant, namely {\tt planar minLA} where edges are not allowed to cross; this variant has been solved in time $\bigO{n}$ by different authors \parencite{Iordanskii1987a,Hochberg2003a} and it has been claimed that it yields a $3/2$-approximation \parencite{Hochberg2003a}. Our approach is similar: there exists an algorithm for {\tt MaxLA} on trees due to \textcite{Nurse2018a,Nurse2019a} with time complexity $\bigO{n^{4\Delta}}$. Our goal is to approximate {\tt MaxLA} with an algorithm of lower time complexity by imposing constraints on the placement of the vertices ({\tt bipartite} and {\tt 1-thistle MaxLA} variants). In particular, {\tt bipartite MaxLA} can be solved in time $\bigO{n}$ and yields a $3/2$-approximation.

The implementations of {\tt bipartite MaxLA} and {\tt 1-thistle MaxLA} are all available in the Linear Arrangement Library \parencite{Alemany2021a}.\footnote{\url{https://github.com/LAL-project/linear-arrangement-library}} These algorithms have been tested with an exhaustive enumeration method of all linear arrangements up to $n=14$.

The remainder of the article is structured as follows. In \cref{sec:preliminaries}, we introduce core definitions and notation. In \cref{sec:properties_max_arrs:known}, we review past research on {\tt MaxLA}: we list and prove known characterizations of maximum arrangements. In \cref{sec:properties_max_arrs:new}, we present a new property of maximum linear arrangements of graphs, which is used throughout the remaining sections of the article. In \cref{sec:bip_and_nonbip_arrangements}, we present the division of arrangements into two types: bipartite and non-bipartite arrangements, which allows us to calculate {\tt MaxLA} as the maximum between a maximal bipartite arrangement (the solution to {\tt bipartite MaxLA}) and a maximal non-bipartite arrangement. In \cref{sec:bip_and_nonbip_arrangements:maximal_bip}, we show that the maximal bipartite arrangement of any connected bipartite graph can be calculated in time $\bigO{n}$, in \cref{sec:bip_and_nonbip_arrangements:maximal_nonbip} we devise an algorithm for {\tt 1-thistle MaxLA}, and in \cref{sec:bip_and_nonbip_arrangements:maximal_bip:relationship} we prove that {\tt bipartite MaxLA} approximates {\tt MaxLA} by a factor of $3/2$ on trees. In \cref{sec:max_for_classes}, we characterize the solution to {\tt MaxLA} for $k$-regular graphs and $k$-linear trees ($k\le2$). In this section we directly apply the aforementioned property of maximum linear arrangements. In \cref{sec:conclusions}, we draw some concluding remarks, and, in \cref{sec:future_work}, suggest future lines of research for {\tt MaxLA}.
%---------------------------------------------%
% </automatic inline of '1-introduction.tex'> %
%---------------------------------------------%
%---------------------------------------------%
% <automatic inline of '2-preliminaries.tex'> %
%---------------------------------------------%
\section{Preliminaries}
\label{sec:preliminaries}

Graphs are always assumed to be simple and are denoted as $\graph=(V,E)$ where $V$ and $E$ are, respectively, the set of vertices and edges; unless stated otherwise $n=|V|$. Edges $\{u,v\}\in E$ are denoted as $uv$. Bipartite graphs $\bipgraph=(V,E)=(V_1\cup V_2,E)$ make up a special class of graphs in which the vertex set $V$ can be partitioned into the sets $V_1$ and $V_2$ such that for every edge $uv\in E(\bipgraph)$ we have that $u\in V_1$ and $v\in V_2$ or vice versa. The vertices in $V_1$ (resp. $V_2$) are colored in blue (resp. red) in the figures of this paper.

The set of neighbors of a vertex $u$ in a graph $\graph$ is the set of vertices $v$ adjacent to $u$ in $\graph$, denoted as $\neighbors{u}=\neighborsG{u}$; the degree of a vertex $u$ in $\graph$ is the number of neighbors it has $\degree{u}=\degreeG{u}=|\neighborsG{u}|$. We omit the graph when it is clear from the context. The maximum degree over a set of vertices $S\subseteq V$ is denoted as $\maxdegreeset{\graph}{S} = \max_{v\in S} \{\degreeG{v}\}$; that of a graph is denoted as $\maxdegreeG=\maxdegreesetG{V}$. Recall that graphs can be represented using an $n\times n$ adjacency matrix $A=\{a_{uv}\}$ where $u,v\in V$ and $a_{uv}=1 \leftrightarrow uv\in E$.

We denote {\em free trees} (simply referred to as {\em trees}) as $\tree$. {\em Rooted trees} are denoted as $\rtree$ where $\Root$ is the root of the tree. In free trees, edges are undirected; in rooted trees edges are directed away from the root, and every vertex of the tree has a set of children which are said to be {\em siblings}. We denote the subtree of $\rtree$ rooted at $u\in V$, with $u\neq \Root$, as $\subtree{u}$. We denote the set of $n$-vertex free trees up to isomorphism as $\freetrees$; we denote its size as $\totaltrees=|\freetrees|$. See \textcite{OEIS_UlabFreeTrees} for a list of values of $\totaltrees$.

A {\em linear arrangement} $\arr$ of a graph $\graph=(V,E)$ is a bijection $\arr\::\:V\rightarrow [n]$. We use $\arr(u)$ to denote the position of vertex $u$ in $\arr$, and $\invarr(p)$ to denote the vertex at position $p$ in $\arr$. For simplicity, we also denote the arrangement of the vertices of an $n$-vertex graph $\graph$, where $V=\{u_1,\dots,u_n\}$ as an $n$-tuple $\arr = (u_1, \dots, u_n)$. For example, if $V=\{u,v,w,x\}$ and $\arr=(x,w,u,v)$ then $\arr(x)=1$, $\arr(w)=2$, and so on. We use the notation $(L_1 : L_2 : \cdots : L_{k-1}: L_k)$ from \textcite{Chung1984a} to denote an arrangement as the concatenation of the elements in the lists $L_1$, $L_2$, $\cdots$, $L_{k-1}$, $L_k$ in the order they appear in each list.

For any given graph $\graph$ and an arrangement $\arr$ of its vertices, we denote the length of any edge $uv\in E(\graph)$ as $\dl{uv}$ (\cref{eq:introduction:d}). We define the arrangement's cost as the sum of all edge lengths $\dl{uv}$, denoted as $\D$ (\cref{eq:introduction:D}). We denote the maximum $\D$ over all arrangements of $G$ as $\DMaxG$ (\cref{eq:introduction:MaxLA}). Furthermore, we denote the set of maximum arrangements of $G$ as
\begin{equation*}
\maxarrsetG = \{ \arr \;|\; \D = \DMaxG \}.
\end{equation*}

It will be useful to define {\em directional degrees} \parencite{deMier2007a} of a vertex in $\arr$. We can define the {\em left} (resp. {\em right}) {\em directional degree} of vertex $u$ as the number of neighbors of $u$ to the left (resp. right) of $u$ in $\arr$. More formally,
\begin{equation}
\label{eq:preliminaries:directional_degrees}
\leftdeg{u} = |\{ v\in\neighbors{u} \;|\; \arr(v)<\arr(u) \}|, \qquad
\rightdeg{u} = |\{ v\in\neighbors{u} \;|\; \arr(u)<\arr(v) \}|.
\end{equation}
Furthermore, we denote the set of neighbors of a vertex $u$ of a graph $\graph$ in a given interval of positions $I\subseteq[1,n]$ as
\begin{equation*}
\neighborsarr{u}{I} = \{ v\in\neighbors{u} \;|\; \arr(v)\in I \}.
\end{equation*}

In order to tackle {\tt MaxLA}, \textcite{Nurse2018a,Nurse2019a} identified two key features of arrangements of graphs: the {\em cut signature} and {\em level signature} of an arrangement.

\paragraph{Cut signature}
\label{sec:preliminaries:signature:cuts}

Cuts are a well-known concept in arrangements of graphs.\footnote{See, for instance, \textcite{Yannakakis1985a}.} For any given arrangement $\arr$ of $\graph$, the {\em cut width}, namely the {\em width of a cut}, between the $p$-th and the $(p+1)$-st positions of the arrangement is defined as the number of edges that cross the infinite line that goes through the midpoint between said positions and is perpendicular to the line of the arrangement (\cref{fig:preliminaries:illustration_cuts_levels}(b,c) shows an example of the cut widths of an arrangement). The width of the $p$-th cut, with $1\le p < n$, is defined as
\begin{equation}
\label{eq:preliminaries:cut_values}
\cut{p} = |\{ uv\in E(\graph) \;|\; \arr(u)\le p,\; p<\arr(v) \}|.
\end{equation}
We define $\cut{0}=\cut{n}=0$. For a given arrangement $\arr$ of $\graph$, \textcite{Nurse2018a,Nurse2019a} defined the {\em cut signature} of $\arr$ as the $(n-1)$-tuple
\begin{equation*}
\cutsignatureG = (\cut{1}, \dots, \cut{n-1})
\end{equation*}
where $\cut{p}$ denotes the {\em width of the $p$-th cut}.

Notice that $\D$ can be calculated as the sum of cut width values in the cut signature
\begin{equation}
\label{eq:preliminaries:arr_cost_as_sum_of_cuts}
\D = \sum_{p=1}^{n - 1} \cut{p}.
\end{equation}
A simple argument to prove this equality is that every unit of edge length contributes to a unit of width for all the cuts an edge goes through (\cref{fig:preliminaries:illustration_cuts_levels}(c)).

\paragraph{Level signature}
\label{sec:preliminaries:signature:levels}

For a given arrangement $\arr$ of $\graph$, \textcite{Nurse2018a,Nurse2019a} defined the {\em level value} of vertex $\invarr(i)$ as the difference of consecutive cut widths \parencite{Nurse2018a,Nurse2019a}
\begin{equation}
\label{eq:preliminaries:levels:difference_of_cuts}
\level{u_i} = \cut{i} - \cut{i - 1}.
\end{equation}
While this concept was defined by \textcite{Nurse2018a,Nurse2019a} in the context of arrangements of directed graphs, the definition also applies to undirected graphs. They also defined the {\em level signature} of an arrangement $\arr=(u_1,u_2,\dots,u_{n-1},u_n)$ as the $n$-tuple
\begin{equation*}
\levelsignatureG = ( \level{u_1}, \level{u_2}, \cdots, \level{u_{n-1}}, \level{u_n} ).
\end{equation*}

It is easy to see that an equivalent definition of level can be obtained with the difference of right and left directional degrees (\cref{eq:preliminaries:directional_degrees})
\begin{equation}
\label{eq:preliminaries:levels:difference_of_degrees}
\level{u} = \rightdeg{u} - \leftdeg{u},
\end{equation}
where $\rightdeg{u}$ and $\leftdeg{u}$ denote the right and left directional degrees (\cref{eq:preliminaries:directional_degrees}). \cref{fig:preliminaries:illustration_cuts_levels}(b) shows an example of a level signature. As noted by \textcite{Nurse2018a,Nurse2019a}, the $p$-th cut width value can be written as a function of the previous level values by applying \cref{eq:preliminaries:levels:difference_of_cuts} recursively
\begin{equation}
\label{eq:preliminaries:cut_values:sum_of_levels}
\cut{p}
	= \level{u_p} + \cut{p - 1}
	= \level{u_p} + \level{u_{p-1}} + \cut{p - 2}
	= \cdots
	= \sum_{q=1}^p \level{u_q}.
\end{equation}
Now, notice that the sum of all level values in any given arrangement $\arr$ of any given graph $\graph$ always equals $0$ since
\begin{equation}
\label{eq:preliminaries:sum_levels_equals_0}
\sum_{i=1}^n \level{u_i} = \cut{n} = 0.
\end{equation}
Finally, after straightforward calculations, one can see that the cost of an arrangement can also be calculated using only the level values
\begin{equation}
\label{eq:preliminaries:sum_edge_lengths_as_cutsum_levsum}
\D
	= \sum_{p=1}^{n - 1} \cut{p}
	= \sum_{p=1}^{n - 1} \sum_{q=1}^p \level{u_q}
	= \sum_{p=1}^{n - 1} (n - p)\level{u_p}.
\end{equation}

\begin{figure}
	\centering
	\begin{tabular}{l}
	\includegraphics{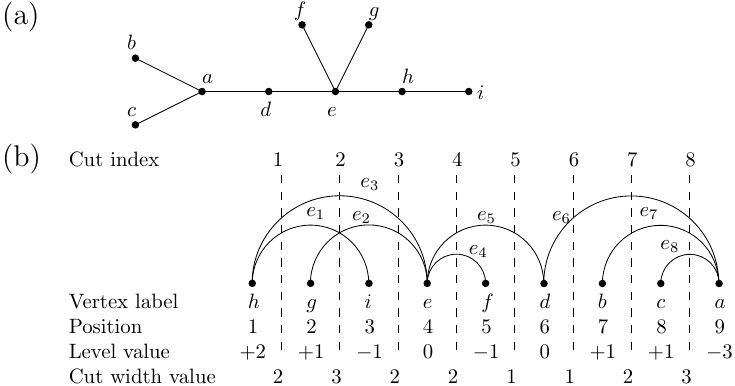} \\
	\\
	\begin{tabular}[t]{c}
	{\Large (c)}
	\end{tabular}
	\begin{tabular}[t]{rccccccccc}
	Edges     & \multicolumn{8}{c}{Cut index}                & Edge length \\
	          & 1   & 2   & 3   & 4   & 5   & 6   & 7   & 8  &             \\
	\midrule
	$e_{1}$   & x   & x   &     &     &     &     &     &    & 2           \\
	$e_{2}$   &     & x   & x   &     &     &     &     &    & 2           \\
	$e_{3}$   & x   & x   & x   &     &     &     &     &    & 3           \\
	$e_{4}$   &     &     &     & x   &     &     &     &    & 1           \\
	$e_{5}$   &     &     &     & x   & x   &     &     &    & 2           \\
	$e_{6}$   &     &     &     &     &     & x   & x   & x  & 3           \\
	$e_{7}$   &     &     &     &     &     &     & x   & x  & 2           \\
	$e_{8}$   &     &     &     &     &     &     &     & x  & 1           \\
	\midrule
	Cut width & 2   & 3   & 2   & 2   & 1   & 1   & 2   & 3  & Cost: 16    \\
	\end{tabular} \\
	\end{tabular}
	\caption{(a) A tree and (b) one of its arrangements. Under the arrangement are indicated, from top to bottom, the vertex labels of the tree, the positions of the vertices (from $1$ to $n$), the values of the levels per vertex, and the values of the cut widths. (c) Table indicating the cuts each vertex goes through, the length of each edge (rightmost column), the width of each cut (last row), and the total cost of the arrangement (bottom-right value).}
	\label{fig:preliminaries:illustration_cuts_levels}
\end{figure}
%----------------------------------------------%
% </automatic inline of '2-preliminaries.tex'> %
%----------------------------------------------%
%------------------------------------------%
% <automatic inline of '3-properties.tex'> %
%------------------------------------------%
\section{Properties of maximum linear arrangements}
\label{sec:properties_max_arrs}

In \cref{sec:properties_max_arrs:known}, we review known properties of maximum arrangements by \textcite{Nurse2018a,Nurse2019a} for any $n$-vertex graph $\graph=(V,E)$, where $V=\{u_1,\dots,u_n\}$. In all the properties presented below, $\maxarr$ is a maximum arrangement taken from the set of all (unconstrained) arrangements. In \cref{sec:properties_max_arrs:new}, we prove a new property of maximum arrangements.

\subsection{Overview of key contributions by Nurse and De Vos}
\label{sec:properties_max_arrs:known}

First, we introduce a new property (\cref{lemma:properties_max_arrs:known:vertex_swap}) that eases the proof of key properties by Nurse and De Vos and is applied later on in the article for other purposes. Second, we present the key properties by \textcite{Nurse2018a,Nurse2019a}. To make the article more streamlined while being instructive, the proofs of the properties in this section are given in \cref{sec:appendix:proof:properties:known}. Notice that the proofs of \cref{propos:properties_max_arrs:known:components_arranged_maximally,propos:properties_max_arrs:known:non_increasing_levsig,propos:properties_max_arrs:known:no_neighs_in_same_level,propos:properties_max_arrs:known:permutation_of_equal_level} are reelaborations with respect to the pioneering research of \textcite{Nurse2018a,Nurse2019a} that are instructive and connected by the new property (\cref{lemma:properties_max_arrs:known:vertex_swap}).

The new property relates two arrangements $\arr$ and $\arr'$, where $\arr'$ results from swapping two arbitrary vertices $v$ and $w$ in $\arr$. This relationship establishes that the level and cut values of $\arr'$ can be written as a function of the level and cut values of $\arr$ and the number of neighbor vertices that $v$ and $w$ have in the interval of positions $[\arr(v),\arr(w)]$. This relationship is independent of the graph structure.

\begin{lemma}
\label{lemma:properties_max_arrs:known:vertex_swap}
Let $\graph=(V,E)$ be any graph where $V=\{u_1,\dots,u_n\}$, and $\arr$ an arrangement of its vertices
\begin{equation*}
\arr=(u_1,\dots,u_{i-1},v,u_{i+1},\dots,u_{j-1},w,u_{j+1},\dots,u_n)
\end{equation*}
where $v=u_i$ and $w=u_j$ for two fixed $i,j\in[n]$, $i\neq j$. Now, let
\begin{equation*}
\arr'=(u_1,\dots,u_{i-1},w,u_{i+1},\dots,u_{j-1},v,u_{j+1},\dots,u_n)
\end{equation*}
be the arrangement where the positions of $v$ and $w$ have been swapped with respect to arrangement $\arr$ and the other vertices are not moved. Due to \cref{eq:preliminaries:levels:difference_of_degrees,eq:preliminaries:cut_values:sum_of_levels}, we find the following relationships between the level values in $\arr$ and $\arr'$:
\begin{alignat}{2}
\label{eq:properties_max_arrs:known:vertex_swap:levels:left}
\level{u_k}[\arr'] - \level{u_k} &= 0,
	& \quad & \forall k\in[1,i) \\ %]
\label{eq:properties_max_arrs:known:vertex_swap:levels:w}
\level{w}[\arr'] - \level{w} &= 2(|\neighborsarr{w}{(i,j)}| + a_{vw}), & & \\
\label{eq:properties_max_arrs:known:vertex_swap:levels:middle}
\level{u_k}[\arr'] - \level{u_k} &= 2(a_{vu_k} - a_{wu_k}),
	& \quad & \forall k\in(i,j) \\
\label{eq:properties_max_arrs:known:vertex_swap:levels:v}
\level{v}[\arr'] - \level{v} &= - 2(|\neighborsarr{v}{(i,j)}| + a_{vw}), & & \\
\label{eq:properties_max_arrs:known:vertex_swap:levels:right}
\level{u_k}[\arr'] - \level{u_k} &= 0,
	& \quad & \forall k\in(j,n]
\end{alignat}
and the following relationships between the cut widths in $\arr$ and $\arr'$:
\begin{alignat}{2}
\label{eq:properties_max_arrs:known:vertex_swap:cuts:left}
\cut{k}[\arr'] - \cut{k} &= 0,	& \quad & \forall k\in[1,i) \\ %]
\label{eq:properties_max_arrs:known:vertex_swap:cuts:middle}
\cut{k}[\arr'] - \cut{k} &=
	\level{w} - \level{v} + 2(
		a_{vw}
		+ |\neighborsarr{w}{(i,j)}|
		+ |\neighborsarr{v}{(i,k]}|
		- |\neighborsarr{w}{(i,k]}|
	),
	& \quad & \forall k\in[i,j) \\ %]
\label{eq:properties_max_arrs:known:vertex_swap:cuts:right}
\cut{k}[\arr'] - \cut{k} &= 0,
	& \quad & \forall k\in[j,n), %]
\end{alignat}
where, recall, $a_{vw}=1\leftrightarrow wv\in E$.
\end{lemma}

%-------------------------------------------------------------------------------
\begin{proposition}[\textcite{Nurse2018a,Nurse2019a}]
\label{propos:properties_max_arrs:known:components_arranged_maximally}
Let $\graph$ be a disconnected graph. The components of $G$, denoted as $H_i$, are arranged maximally in $\maxarr$. That is, $\maxarr$ restricted to $H_i$, denoted as $\maxarr(H_i)$, is a maximum arrangement of $H_i$.
\end{proposition}
%~ \begin{proof}
%~ By contrapositive. Let $\arr$ be an arrangement of $G$ where $\arr(H_i)$, the arrangement $\arr$ restricted to $H_i$, is not maximum. Then we can choose an arrangement of $H_i$ with higher cost than that of $\arr(H_i)$, and rearrange the vertices of $H_i$ within $\arr$ accordingly. The resulting arrangement has a higher cost than $\arr$ thus contradicting its maximality.
%~ \end{proof}
%-------------------------------------------------------------------------------

%~ The proofs of the remaining properties are all by contradiction.

%-------------------------------------------------------------------------------
\begin{proposition}[\textcite{Nurse2018a,Nurse2019a}]
\label{propos:properties_max_arrs:known:non_increasing_levsig}
The level signature of $\maxarr$ is monotonically non-increasing, that is, $\level{u_1}[\maxarr]\ge\dots\ge\level{u_n}[\maxarr]$.
\end{proposition}
%~ \begin{proof}
%~ Consider a maximum arrangement $\arr$ of $\graph$,
%~ \begin{equation*}
%~ \arr = (u_1,\dots,u_{i-1}, v, u_{i+1}, \dots, u_{j-1}, w, u_{j+1},\dots,u_n),
%~ \end{equation*}
%~ where, by way of contradiction, $\level{v} < \level{w}$. First, assume that $i+1=j$. Swap vertices $v$ and $w$ to obtain
%~ \begin{equation*}
%~ \arr' = (u_1,\dots,u_{i-1},w, v,u_{j+1},\dots,u_n).
%~ \end{equation*}
%~ Due to \cref{lemma:properties_max_arrs:known:vertex_swap} we have that
%~ \begin{alignat*}{2}
%~ \cut{k}[\arr'] - \cut{k} &= 0,                               & \quad & \forall           k\in[1,i)\cup[j,n] \\
%~ \cut{i}[\arr'] - \cut{i} &= \level{w} - \level{v} + 2a_{wv}. & \quad &
%~ \end{alignat*}
%~ Since we assumed that $\level{v}<\level{w}$, we have that $\cut{i}[\arr'] - \cut{i}>0$. Therefore $\D[\arr'] > \D$ (\cref{eq:preliminaries:arr_cost_as_sum_of_cuts}).

%~ In case $i+1<j$ then can find a pair of vertices $u_p$ and $u_{p+1}$ with $p\in[i,j]$ such that $\level{u_p} < \level{u_{p+1}}$. We can apply the same reasoning above to vertices $u_p$ and $u_{p+1}$ to reach the same contradiction.
%~ \end{proof}
%-------------------------------------------------------------------------------

%-------------------------------------------------------------------------------
\begin{proposition}[\textcite{Nurse2019a}]
\label{propos:properties_max_arrs:known:no_neighs_in_same_level}
There is no edge $uv\in E(\graph)$ such that $\level{u}[\maxarr]=\level{v}[\maxarr]$.
\end{proposition}
\begin{proposition}[\textcite{Nurse2019a}]
\label{propos:properties_max_arrs:known:permutation_of_equal_level}
Permuting vertices of equal level in a maximum arrangement $\maxarr$ does not change the cost of $\maxarr$.
\end{proposition}
%~ \begin{proof}
%~ Consider a maximum arrangement $\maxarr$ of $\graph$,
%~ \begin{equation*}
%~ \arr = (u_1,\dots,u_{p-1},u_p,\dots,u_{i-1},v,u_{i+1}, \dots, u_{j-1},w,u_{j+1},\dots,u_q,u_{q+1},\dots,u_n),
%~ \end{equation*}
%~ in which $\level{u_{p-1}}>\level{u_p}=\level{u_q}>\level{u_{q+1}}$ for some $1\le p<q\le n$. By \cref{propos:properties_max_arrs:known:non_increasing_levsig}, the vertices $u_p,\dots,u_q$ have the same level value, that is, $\level{u_p}=\cdots=\level{u_q}$. Due to \cref{propos:properties_max_arrs:known:no_neighs_in_same_level}, $a_{vw}=0$, and also $v$ and $w$ do not have neighbors in the interval of positions $(i,j)$ in $\arr$. Let $\arr'$ be the result of swapping $v$ and $w$ in $\arr$. Due to \cref{lemma:properties_max_arrs:known:vertex_swap}, $\cutsignatureG[\arr']=\cutsignatureG$ and therefore, due to \cref{eq:preliminaries:arr_cost_as_sum_of_cuts}, $\D[\arr'] = \D$.
%~ \end{proof}
%-------------------------------------------------------------------------------

Notice that \cref{propos:properties_max_arrs:known:non_increasing_levsig,propos:properties_max_arrs:known:no_neighs_in_same_level} establish necessary conditions for an arrangement of a graph to be maximum, and \cref{propos:properties_max_arrs:known:permutation_of_equal_level} states that all orderings of equal-level vertices are equivalent cost-wise in a maximum arrangement, thus greatly reducing the combinatorial exploration when looking for a maximum arrangement. Finally, notice that the proofs of \cref{propos:properties_max_arrs:known:components_arranged_maximally,propos:properties_max_arrs:known:non_increasing_levsig,propos:properties_max_arrs:known:no_neighs_in_same_level,propos:properties_max_arrs:known:permutation_of_equal_level} (\cref{sec:appendix:proof:properties:known}) are all independent of the graph's structure.

\subsection{A new property}
\label{sec:properties_max_arrs:new}

Here we identify a new property of maximum arrangements of graphs concerning the distribution of the vertices of branchless paths in maximum arrangements. In these paths of $k$ vertices, $k-2$ vertices have degree exactly $2$, and the endpoints have degree different from $2$. More formally, a branchless path $P$ (of $k-1$ edges) with vertices $(u_1,\dots,u_k)$ of a graph $\graph=(V,E)$ is a maximal sequence of vertices $u_1,\dots,u_k\in V$ such that
\begin{enumerate}
\item $\{u_i,u_{i+1}\}\in E$ for all $1\le i\le k-1$,
\item $\degree{u_i}=2$ for all $2\le i\le k-1$, and
\item $\degree{u_1}\neq2$ and $\degree{u_k}\neq2$.
\end{enumerate}
We refer to the vertices $\{u_2,\dots,u_{k-1}\}$ of $P$ as the {\em internal vertices} of the path. The {\em length} of these paths is measured in edges and equal to $k-1$. We say that a branchless path is a {\em bridge} if $\degree{u_1}\ge3$ and $\degree{u_k}\ge3$. We say that a branchless path is an {\em antenna} if $\degree{u_1}=1$ or $\degree{u_k}=1$.

Now follows a characterization of the level values of the vertices of branchless paths in a maximum arrangement of a given graph. In particular, we focus on the internal vertices of paths (those of degree $2$). For this, we define {\em thistle} vertices $v$ as those vertices whose neighbors are arranged to both sides of $v$ in the arrangement. More formally, a vertex $v$ is a {\em thistle vertex in an arrangement} $\arr$ if $|\level{v}|<\degree{v}$. Notice that a vertex $u$ of degree $2$ can only have three possible level values in an arbitrary arrangement: $\pm2$, and $0$ (the only case in which $u$ is a thistle). Leaves, by definition, can only have level values $\pm1$.

% ------------------------------------------------------------------------------
\begin{lemma}[Path Optimization Lemma]
\label{lemma:properties_max_arrs:new:POL}
Let $\maxarr$ be a maximum arrangement of a graph $\graph=(V,E)$. Let $P=(u_1,\dots,u_k)$ be any of the branchless paths in $\graph$. Then,
\begin{enumerate}[(i)]
\item \label{lemma:properties_max_arrs:new:POL:0} When $P$ is an antenna, none of its internal vertices is a thistle (that is, with level $0$) in $\maxarr$. Therefore, we have that $|\level{u}[\maxarr]|=2$ for all internal vertices $u$ of $P$.

\item \label{lemma:properties_max_arrs:new:POL:00} When $P$ is a bridge, then
	\begin{itemize}
	\item There is at most one internal vertex of $P$ that is a thistle (that is, with level $0$) in $\maxarr$,
	\item If one of the internal vertices is a thistle vertex in $\maxarr$, say $u$, then for any other internal vertex $v$ there is a maximum arrangement $\maxarr_2$ with $v$ as a thistle vertex and where $u$ is not a thistle vertex.
	\end{itemize}
	
	Therefore, for a bridge $P=(u_1,\dots,u_{j-1},u_j,u_{j+1},\dots,u_k)$ there are two possible configurations of level values in $\maxarr$. Either
	\begin{itemize}
	\item $|\level{u}[\maxarr]|=2$ for all $u\in\{u_2,\dots,u_{k-1}\}$ and thus there are no thistles, or
	
	\item given any $j\in[2,k-1]$, then $\level{u_j}[\maxarr]=0$ and $|\level{u}[\maxarr]|=2$ for all $u\in\{u_2,\dots,u_{j-1},u_{j+1},\dots,u_{k-1}\}$ and thus there is only one thistle ($u_j$).
	\end{itemize}
\end{enumerate}
\end{lemma}

For this proof we assume an orientation of the edges in $P$ in the direction from $u_1$ to $u_k$. We prove \pathoptimizationOO{} by performing swaps of certain vertices in a starting arrangement $\arr$ until we increase the cost or we reach a situation where \cref{propos:properties_max_arrs:known:no_neighs_in_same_level} is contradicted. All these swaps are illustrated in \cref{fig:properties_max_arrs:new:POL:swaps}. Their cost can be calculated applying \cref{lemma:properties_max_arrs:known:vertex_swap}.

\begin{figure}
	\centering
	\includegraphics{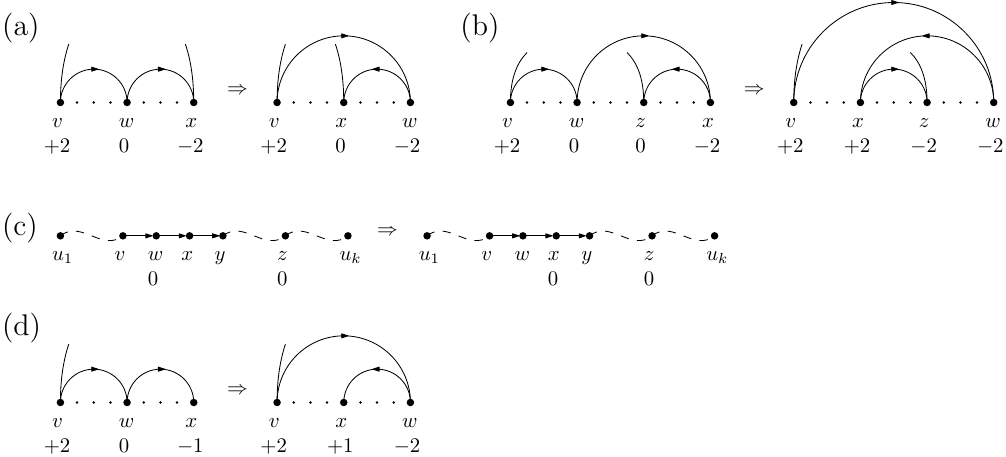}
	\caption{Proof of \pathoptimization. Different types of swaps of the vertices of a branchless path $P$. (a) Equal-cost swap of vertices such that the levels of the swapped vertices remain constant position-wise; $x$ has no neighbors to the right of $w$. The swap is of the same nature if the orientation is reversed. (b) Swap of two vertices that disturbs the level signature, thus increasing the cost of the arrangement; similar to case (a) but $x$ has a neighbor to the right of $w$. (c) Depiction of the level 0 being `pushed' along the path $P$ towards vertex $z$ when the swap in (a) is applied. (d) Increasing-cost swap applied to triples of vertices where the last vertex is a leaf.}
	\label{fig:properties_max_arrs:new:POL:swaps}
\end{figure}

% ------------------------------------------------------------------------------
\begin{proof}[Proof of \pathoptimizationOO]
Let $\arr$ be a maximum arrangement of $\graph$ (thus it satisfies \Nurse). Suppose that at least two internal and non-adjacent vertices of $P$ are placed in $\arr$ in such a way that their level value is $0$. In the orientation of $P$, let $v$ be the first vertex such that $\level{v}=0$, and let $z$ be the next vertex after $v$ in $P$ such that $\level{z}=0$. We do not assume any particular value for the level of the other vertices. Now, traverse the vertices of the path $P$ starting at $u_1$ towards $u_k$ while identifying triples of vertices $(s,w,x)$ (edges $(s,w)$, $(w,x)$ in the orientation of $P$) such that $\level{w}=0$ and $s$ is a vertex $u_i$ with $i\ge1$, and $x$ is a vertex $u_j$ with $j<k$. For every such triple, swap its last two vertices. This is illustrated in Figures \ref{fig:properties_max_arrs:new:POL:swaps}(a,b). Notice that after applying the swaps in Figures \ref{fig:properties_max_arrs:new:POL:swaps}(a,b), either (1) the level values change (\cref{fig:properties_max_arrs:new:POL:swaps}(b)) or (2) the level values remain constant position-wise (\cref{fig:properties_max_arrs:new:POL:swaps}(a)). If (1) the vertices are arranged in such a way that the levels change, then the cost of the arrangement increases (\cref{lemma:properties_max_arrs:known:vertex_swap}) and we are done. When (2) the level values remain constant position-wise, the cost of the arrangement does not change. In this case, notice that after every swap the vertex of $P$ with level 0 in $\arr$ closest to $u_1$ (in $P$) is `pushed' one vertex closer to $z$ (in $P$); this is illustrated in \cref{fig:properties_max_arrs:new:POL:swaps}(c). By repeated application of these swaps we obtain another arrangement, denoted as $\arr'$, where there will be two vertices $u_i$ and $u_{i+1}$ of $P$ ($u_iu_{i+1}\in E$) such that $\level{u_i}[\arr']=\level{u_{i+1}}[\arr']=0$ which contradicts \cref{propos:properties_max_arrs:known:no_neighs_in_same_level}. Since $\arr'$ cannot be maximum and has the same cost as $\arr$, we conclude that $\arr$ cannot be maximum either and thus at most one internal vertex of $P$ has level $0$ in $\arr$.

Now, notice that the above implies that, when there is only one internal vertex of $P$ with level $0$, we can apply the swaps in \cref{fig:properties_max_arrs:new:POL:swaps}(a) to change the thistle vertex of $P$ for another while keeping the cost of the arrangement constant.

Finally, it is easy to see that when there is one thistle in $\maxarr$, say $z$, we have that $\level{z}[\maxarr]=0$ since there is no other possibility for a degree-$2$ vertex to be a thistle, and the remaining internal vertices $u$ of $P$ are arranged such that $|\level{u}[\maxarr]|=2$.
\qed
\end{proof}
% ------------------------------------------------------------------------------

We now prove \pathoptimizationO{} by using the same type of argument as that in the proof for \pathoptimizationOO{} where one performs a series of swaps of vertices until the last swap in the series increases the cost of the arrangement.

% ------------------------------------------------------------------------------
\begin{proof}[Proof of \pathoptimizationO]
Again, let $\arr$ be a maximum arrangement of $\graph$ (thus it satisfies \Nurse). Due to \pathoptimizationOO{}, there is at most one thistle vertex with level 0. Assume that exactly one of the internal vertices of $P$ has level value $0$ in $\arr$. Using a reasoning very similar to the one above, traverse the vertices of $P$ starting at $u_1$ while identifying triples of vertices $(v,w,x)$ such that $\level{w}=0$; apply the equal-cost swaps on the triples $(v,w,x)$ where $v$ is a vertex $u_i$ with $i\ge1$, and $x$ is a vertex $u_j$ with $j\le k$ (\cref{fig:properties_max_arrs:new:POL:swaps}(a)). Eventually, the last such triple will be $(v,w,x)$ with $w=u_{k-1}$ and $x=u_k$, and thus $\degree{x}=1$ and $\level{x}\in\{+1,-1\}$. The swap to be applied in this case increases the cost of the arrangement (\cref{fig:properties_max_arrs:new:POL:swaps}(d)) thus contradicting the assumption that $\arr$ is maximum.

Finally, the internal vertices $u$ of $P$ are arranged such that $|\level{u}[\maxarr]|=2$ since the other possibility (level value equal to $0$) would make them thistles.
\qed
\end{proof}
% ------------------------------------------------------------------------------

As it will be seen in the next sections, \pathoptimization{} is instrumental in proving results on trees, since it tells us the distribution of the vertices of branchless paths of any graph in any maximum arrangement without making assumptions about the overall structure of the graph. We make this distribution more precise in \cref{lemma:properties_max_arrs:new:alternation} where we show that level values alternate sign along branchless paths. From now on, given a bridge path $P=(u_1,\dots,u_{i-1},z,u_{i+1},\dots,u_k)$ of any graph $\graph$, we designate a vertex $z$ to be the only vertex of $P$ allowed to be a thistle in a maximum arrangement; the case where $z$ is not a thistle ($|\level{z}[\maxarr]|=2$) is explained in \cref{lemma:properties_max_arrs:new:alternation}. Recall that by \pathoptimizationOO, we can consider $z$ to be the `only' vertex that can be a thistle in a maximum arrangement since it is interchangeable with other internal vertices of $P$ as thistles while keeping the cost constant.

\begin{lemma}
\label{lemma:properties_max_arrs:new:alternation}
Let $\maxarr$ be a maximum arrangement of a graph $\graph=(V,E)$. Let $P=(u_1,\dots,u_k)$ be any of the branchless paths in $\graph$. 
\begin{enumerate}[(i)]
	\item \label{lemma:properties_max_arrs:new:alternation:antenna} If $P$ is an antenna, where, w.l.o.g., $\degree{u_k}=1$, then the sign of the level value alternates from vertex to vertex, as in, w.l.o.g., $+2,-2,\dots,+2,-2,+1$. In particular,
	\begin{enumerate}[(a)]
		\item \label{lemma:properties_max_arrs:new:alternation:antenna:nonsingular} When $\degree{u_1}\ge3$ we have that $\level{u_j}[\maxarr]\level{u_{j+1}}[\maxarr]<0$, for $1<j< k$.
		
		\item \label{lemma:properties_max_arrs:new:alternation:antenna:singular} When $\degree{u_1}=1$ we have that $\level{u_j}[\maxarr]\level{u_{j+1}}[\maxarr]<0$, for $1\le j< k$.
	\end{enumerate}
	
	\item \label{lemma:properties_max_arrs:new:alternation:bridge} If $P=(u_1,\dots,u_{i-1},z,u_{i+1},\dots,u_k)$ is a bridge, then
	\begin{enumerate}[(a)]
		\item \label{lemma:properties_max_arrs:new:alternation:bridge:nonthistle} In case $|\level{z}[\maxarr]|=2$ then the level values alternate as in, w.l.o.g., $+2,-2,\dots,+2$. More precisely, $\level{u_j}[\maxarr]\level{u_{j+1}}<0$ for all $1 < j < k$.
		
		\item \label{lemma:properties_max_arrs:new:alternation:bridge:thistle} In case $\level{z}[\maxarr]=0$ then the pattern of level values follows, w.l.o.g., $+2,-2,\dots,+2,0,-2,+2,\dots,-2$. More formally,
		\begin{itemize}
			\item $\level{u_{i-1}}[\maxarr]\level{u_{i+1}}[\maxarr]<0$,
			\item $\level{u_j}[\maxarr]\level{u_{j+1}}[\maxarr]<0$ for all $1 < j < i-1$,
			\item $\level{u_j}[\maxarr]\level{u_{j+1}}[\maxarr]<0$ for all $i+1 < j < k-1$.
		\end{itemize}
	\end{enumerate}
\end{enumerate}
\end{lemma}
\begin{proof}
We prove only \alternationia; the case \alternationib{} follows immediately from similar arguments since $u_1$ is a leaf. Now, let $u_j\in\{u_2,\dots,u_k\}$ be a vertex such that $\level{u_j}[\maxarr]>0$. Then, it is easy to see that all its neighbors $v\in\neighbors{u_j}\setminus\{u_1\}=\{u_{j-1},u_{j+1}\}\setminus\{u_1\}$ must be such that $\level{v}[\maxarr]<0$: the situation in which $\level{v}[\maxarr]>0$ is simply not possible and $\level{v}[\maxarr]\neq0$ as per \pathoptimizationO. Moreover, the sign of level value also changes again for subsequent vertices $w\in\{u_{j-2},u_{j+2}\}\setminus\{u_1\}$ in $P$, that is, $\level{w}[\maxarr]>0$, and so on, due to similar arguments. Vertex $u_1$ is not included in case \alternationia{} since we assumed that $\degree{u_1}\ge3$. Therefore, the sign of $\level{u_j}$ for any $u_j$ is different from $\level{u_{j+1}}$ and $\level{u_{j-1}}$ hence the claim.

%~ By \pathoptimizationO, none of the internal vertices of an antenna $P$ can have level value $0$.

The case \alternationii{} can be proven with similar arguments. \alternationiia{} is proven with the same arguments above, not including $u_k$ since that is a vertex of degree $\ge3$ (by definition) and there are no thistles. \alternationiib{} is proven while taking into account that there may be a thistle in $P$, that is, the arguments can be applied to the vertices to each side of the thistle, and never including it.
\qed
\end{proof}

It is easy to see that one can determine the level values of many vertices given the level value of one of the internal vertices of a branchless path in a maximum arrangement $\arr$. We define this as a {\em propagation of level values} which follows almost trivially from \alternation. More precisely, consider any process of construction of a maximum arrangement $\arr$ of a graph $\graph$ and let $P=(u_1,\dots,x,\dots,u_i,\dots,y,\dots,u_k)$ be a branchless path of $\graph$ such that the level values of its vertices are unknown in $\arr$. Under the assumption that $\arr$ is maximum and that the level value of some internal vertex of $P$ can be ascertained say, $u_i$, then the sign of level value of $u_i$ can be propagated in $\arr$ through the vertices of $P$ starting at $u_i$ towards the endpoints as per \cref{lemma:properties_max_arrs:new:propagation}. Sometimes the propagation cannot include the endpoints, and has to stop at, say, $x$, or $y$.

\begin{lemma}
\label{lemma:properties_max_arrs:new:propagation}
Let $\graph$ be any graph and let $P$ be any branchless path of $\graph$. Define $\propagatefunc{i}{j} = (-1)^{|i - j|} \cdot \level{u_i}$, where $u_i$ is a vertex of $P$. Let $\maxarr$ be a maximum arrangement of $\graph$.
\begin{enumerate}[(i)]
\item \label{lemma:properties_max_arrs:new:propagation:antenna} Consider first the case where $P=(u_1,u_2,\dots,u_i,\dots,u_k)$ is an antenna with $\degree{u_k}=1$. Given the value of $\level{u_i}$ for $i\in[2,k]$ then
	\begin{itemize}
	\item $\level{u_j}[\maxarr] = \propagatefunc{i}{j}$ for all $j\in[2,k]$ when $\degree{u_1}\ge3$, or
	\item $\level{u_j}[\maxarr] = \propagatefunc{i}{j}$ for all $j\in[1,k]$ when $\degree{u_1}=1$.
	\end{itemize}

\item \label{lemma:properties_max_arrs:new:propagation:bridge} Consider now the case where $P$ is a bridge whose vertices are $P=(u_1,\dots,u_{i-1},z,u_{i+1},\dots,u_k)$.
	\begin{enumerate}[(a)]
	\item \label{lemma:properties_max_arrs:new:propagation:bridge:left_side} Given $\level{u_s}[\maxarr]$ for a fixed internal vertex $u_s$ to the left of $z$ (with $s\in[2,i-1]$) then $\level{u_j}[\maxarr] = \propagatefunc{s}{j}$ for all $j\in[2,i-1]$.
	
	\item \label{lemma:properties_max_arrs:new:propagation:bridge:right_side} Given $\level{u_s}[\maxarr]$ for a fixed internal vertex $u_s$ to the right of $z$ (with $s\in[i+1,k-1]$) then $\level{u_j}[\maxarr] = \propagatefunc{s}{j}$ for all $j\in[i+1,k-1]$.
	
	\item \label{lemma:properties_max_arrs:new:propagation:bridge:z2} Given $|\level{z}[\maxarr]| = 2$, then $\level{u_j}[\maxarr] = \propagatefunc{i}{j}$ for all $j\in[2,k-1]$, and $-\level{u_{i-1}}[\maxarr] = \level{z}[\maxarr] = -\level{u_{i+1}}[\maxarr]$.
	
	\item \label{lemma:properties_max_arrs:new:propagation:bridge:z0} Given $\level{z}[\maxarr] = 0$, then $\level{u_{i-1}}[\maxarr] = -\level{u_{i+1}}[\maxarr]$.
	
	\item \label{lemma:properties_max_arrs:new:propagation:bridge:z_neighbors} Given $\level{u_{i-1}}[\maxarr]$ and $\level{u_{i+1}}[\maxarr]$, then
		\begin{itemize}
		\item If $\level{u_{i-1}}[\maxarr] = \level{u_{i+1}}[\maxarr]$ then $-\level{u_{i-1}}[\maxarr] = \level{z}[\maxarr] = -\level{u_{i+1}}[\maxarr]$.
		\item If $\level{u_{i-1}}[\maxarr] = -\level{u_{i+1}}[\maxarr]$ then $\level{z}[\maxarr] = 0$.
		\end{itemize}
	\end{enumerate}
\end{enumerate}
\end{lemma}
\begin{proof}
\propagationi{} easily follows from \alternationi. The level values of the internal vertices of a bridge can also be expressed as a function of other internal vertices.
\begin{itemize}
\item \cref{lemma:properties_max_arrs:new:propagation}(\ref{lemma:properties_max_arrs:new:propagation:bridge:left_side},\ref{lemma:properties_max_arrs:new:propagation:bridge:right_side}) follow easily from \alternationii. Notice, however, that a single level value of an internal vertex at either side of $z$ is not enough to determine the level value of $z$.

\item \propagationiic{} follows directly from \alternationiia.

\item \propagationiid{} follows from \alternationiib.

\item To prove \propagationiie, first recall that $|\level{u_{i-1}}[\maxarr]| = |\level{u_{i+1}}[\maxarr]| = 2$ (\alternationii). If $\level{u_{i-1}}[\maxarr] = \level{u_{i+1}}[\maxarr]$ then the only possible values for $z$ is $\pm2$; following \alternationiia, the level value of $z$ has to have opposite sign to that of $\level{u_{i-1}}[\maxarr]$ and $\level{u_{i+1}}[\maxarr]$. If $\level{u_{i-1}}[\maxarr] = -\level{u_{i+1}}[\maxarr]$ then the only possible value for $z$ is $0$.
\end{itemize}
\qed
\end{proof}
%-------------------------------------------%
% </automatic inline of '3-properties.tex'> %
%-------------------------------------------%
%---------------------------------------------%
% <automatic inline of '4-0-bipartition.tex'> %
%---------------------------------------------%
\section{Bipartite and non-bipartite arrangements}
\label{sec:bip_and_nonbip_arrangements}

We start by formally defining the concepts of bipartite and non-bipartite arrangements in \cref{sec:bip_and_nonbip_arrangements:definitions}. In \cref{sec:bip_and_nonbip_arrangements:maximal_bip}, we prove that maximal bipartite arrangements of $n$-vertex trees can be constructed in time and space $\bigO{n}$. In \cref{sec:bip_and_nonbip_arrangements:maximal_bip:relationship} we study the approximation factor of {\tt bipartite MaxLA} to {\tt MaxLA}. Finally, in \cref{sec:bip_and_nonbip_arrangements:maximal_nonbip}, we devise an algorithm to construct a special case of maximal non-bipartite arrangements which will be useful in further sections.

%---------------------------------------------%
% <automatic inline of '4-1-definitions.tex'> %
%---------------------------------------------%
\subsection{Formal definitions}
\label{sec:bip_and_nonbip_arrangements:definitions}

{\em Bipartite} and {\em non-bipartite} arrangements split the whole set of $n!$ arrangements of any bipartite graph and help us tackle {\tt MaxLA} methodically. We say that a linear arrangement $\arr$ of a bipartite graph $\bipgraph=(V_1\cup V_2,E)$ is a {\em bipartite arrangement} if for every pair of vertices $u,v\in V_1$ there is no vertex $w\in V_2$ such that $\arr(u)<\arr(w)<\arr(v)$. Thus
\begin{equation*}
\max_{u\in V_1} \{ \arr(u) \} < \min_{v\in V_2} \{ \arr(v) \}
\qquad
\text{or}
\qquad
\max_{v\in V_2} \{ \arr(v) \} < \min_{u\in V_1} \{ \arr(u) \}.
\end{equation*}
Equivalently, let $V_1=\{u_1,\dots,u_{n_1}\}$ and $V_2=\{v_1,\dots,v_{n_2}\}$. The linear ordering of the vertices in a bipartite arrangement of a bipartite graph $\bipgraph$ is one of the following two
\begin{equation*}
	(u_{i_1}, \cdots, u_{i_p}, v_{j_1}, \cdots, v_{j_q}), \qquad
	(v_{j_1}, \cdots, v_{j_q}, u_{i_1}, \cdots, u_{i_p}),
\end{equation*}
where $\{i_1,\dots,i_p\}$ is a permutation of $\{1,\dots,n_1\}$, and $\{j_1,\dots,j_q\}$ is a permutation of $\{1,\dots,n_2\}$. A {\em non-bipartite arrangement} of a graph is simply an arrangement that is not bipartite.

We denote the set of bipartite arrangements of a graph $\graph$ as $\biparrsetG$. The maximum sum of edge lengths over bipartite arrangements, that is, the solution to {\tt bipartite MaxLA}, is denoted as $\DMaxBipartiteG$, and the set of maximal bipartite arrangements as $\maxbiparrsetG$. Formally,
\begin{equation*}
\DMaxBipartiteG = \max_{\arr \in \biparrsetG} \{ \D[\arr][\graph] \},\quad\text{and}\quad
%~ \maxbiparrsetG = \{ \arr\in\biparrsetG \;|\; \D[\arr][\graph] = \DMaxBipartiteG \}
\maxbiparrsetG = \argmax_{\arr \in \biparrsetG} \{ \D[\arr][\graph] \}.
\end{equation*}
Obviously, if $\graph$ is not bipartite, $\biparrsetG = \emptyset$ and thus $\DMaxBipartiteG=0$. % Notice that $\maxbiparrsetG$ is the red shaded region in \cref{fig:bip_and_nonbip_arrangements:definitions:maximum_maximal}.

Similarly, we denote the set of non-bipartite arrangements of a graph $\graph$ as $\nonbiparrsetG$. The maximum sum of edge lengths over non-bipartite arrangements, that is, the solution to {\tt non-bipartite MaxLA}, is denoted as $\DMaxNonBipartiteG$, and the set of maximal non-bipartite arrangements as $\maxnonbiparrsetG$. Formally,
\begin{equation*}
\DMaxNonBipartiteG = \max_{\arr \in \nonbiparrsetG} \{ \D[\arr][\graph] \},\quad\text{and}\quad
%~ \maxnonbiparrsetG = \{ \arr\in\nonbiparrsetG \;|\; \D[\arr][\graph] = \DMaxNonBipartiteG \}
\maxnonbiparrsetG = \argmax_{\arr \in \nonbiparrsetG} \{ \D[\arr][\graph] \}.
\end{equation*}
Obviously, each division contains arrangements that are maximum among arrangements of the same type. Thus we speak of {\em maximal bipartite arrangement} (resp. {\em maximal non-bipartite arrangement}) to refer to an arrangement that is maximum among bipartite (resp. non-bipartite) arrangements. In order to ease our discourse, we say that a graph $\graph$ is (resp. not) maximizable by a specific arrangement $\arr$ if its set of maximum arrangements $\maxarrsetG$ contains (resp. does not contain) $\arr$. More specifically, we say that {\em a graph $\graph$ is maximizable by a bipartite arrangement} (resp. {\em non-bipartite arrangement}) if $\maxbiparrsetG\subseteq\maxarrsetG$ (resp. $\maxnonbiparrsetG\subseteq\maxarrsetG$). If $\graph$ is maximizable by both bipartite and non-bipartite arrangements then $\maxarrsetG=\maxbiparrsetG\cup\maxnonbiparrsetG$. While an arrangement (of any type) can be maximal (maximum in its own part) it needs not be maximum among all $n!$ arrangements. \cref{fig:bip_and_nonbip_arrangements:definitions:maximum_maximal} illustrates all this terminology. Since the union of all bipartite and non-bipartite arrangements make up the whole space of $n!$ arrangements, it is easy to see that the solution to {\tt MaxLA} is simply the maximum between a maximal bipartite arrangement and a maximal non-bipartite arrangement,
\begin{equation*}
\DMaxG = \max \{ \DMaxBipartiteG, \DMaxNonBipartiteG \}.
\end{equation*}

\begin{figure}
	\centering
	\includegraphics{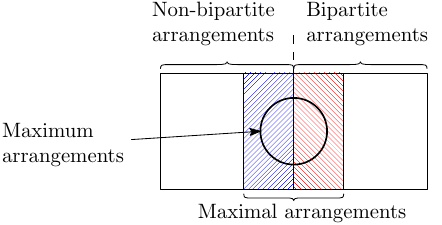}
	\caption{Diagram to illustrate the terminology concerning bipartite and non-bipartite arrangements. This figure shows all $n!$ arrangements of a graph divided into non-bipartite (left half) and bipartite (right half) arrangements. Left and right regions indicate, respectively, {\em maximal} non-bipartite and {\em maximal} bipartite arrangements. The filled circle in the intersection marks {\em maximum} arrangements. Notice that the actual intersection needs not cover both sides of the division.}
	\label{fig:bip_and_nonbip_arrangements:definitions:maximum_maximal}
\end{figure}

For the sake of brevity, we refer to the cost of a maximum arrangement of $\graph$, that is, the value of {\tt MaxLA} of $\graph$, as {\em the maximum cost} of $\graph$; we refer to the cost of a maximal bipartite arrangement (resp. maximal non-bipartite arrangement) of $\graph$ as {\em the maximal bipartite cost} (resp. {\em the maximal non-bipartite cost}) of $\graph$. The motivation for this division lies in the fact, proven later in this section, that any maximal bipartite arrangement of any $n$-vertex tree is constructible in time and space $\bigO{n}$.

It is easy to see that vertices in any bipartite arrangement $\arr$ always have as level value their degree (or minus degree). More formally, for every $u\in V$, $|\level{u}|=\degree{u}$. This basic fact follows from the definition of bipartite arrangements: all neighbors of $u$ are all arranged to either its left or to its right. On the other hand, many non-bipartite arrangements will contain at least one thistle vertex. Notice that there exist non-bipartite arrangements that do not contain thistle vertices.\footnote{Consider, for instance, the $4$-vertex path graph $\{1,2\},\{2,3\},\{3,4\}$ and the arrangement $(2,1,4,3)$. This arrangement is not bipartite but it does not contain any thistle vertex.} However, non-bipartite arrangements with no thistle vertices do not have a monotonic level signature, thus do not satisfy \cref{propos:properties_max_arrs:known:non_increasing_levsig}, and therefore cannot be maximum. Because of this, non-bipartite arrangements with no thistle vertices do not play any role in this article. Henceforth we use $\maxbiparr$ to denote a maximal bipartite arrangement, and $\maxnonbiparr$ to denote a maximal non-bipartite arrangement.

We now define yet another variant of {\tt MaxLA}, a generalization of {\tt bipartite MaxLA}, whereby the maximum is taken over all (non-bipartite) arrangements with exactly one thistle vertex, the set of which is denoted as $\nonbiparrsetonethistleG$. We call this variant {\tt 1-thistle MaxLA} and define it as
\begin{equation*}
\DMaxOneThistleG = \max_{\arr\in\nonbiparrsetonethistleG} \{ \D[\arr][\graph] \},\quad\text{and}\quad
%~ \maxnonbiparrsetonethistleG = \{ \arr\in\nonbiparrsetonethistleG \;|\; \D[\arr][\graph] = \DMaxOneThistleG \}
\maxnonbiparrsetonethistleG = \argmax_{\arr \in \nonbiparrsetonethistleG} \{ \D[\arr][\graph] \}
.
\end{equation*}

But not all vertices of a tree can be thistle vertices in a maximum arrangement. We say that a vertex of a tree has the potential to be a thistle (in a maximum arrangement) if no reason is known to discard it as one; for example, the leaves of a tree cannot be thistles by definition of leaf, and neither can the internal vertices of antennas (\pathoptimizationO). In the case of bridges, every vertex has the potential of being a thistle (\pathoptimizationOO). Even when a tree has many potential thistles, not all of them need to be thistles in the same maximum arrangement, nor all of them are thistles in some maximum arrangement. It is important to bear in mind that the property of being a thistle vertex depends on its position in an arrangement and it is unknown if that can be determined using the graph's structure alone in all cases.
%----------------------------------------------%
% </automatic inline of '4-1-definitions.tex'> %
%----------------------------------------------%
%-------------------------------------------%
% <automatic inline of '4-2-bipartite.tex'> %
%-------------------------------------------%
\subsection{{\tt Bipartite MaxLA}}
\label{sec:bip_and_nonbip_arrangements:maximal_bip}

In this section we explain how to construct maximal bipartite arrangements (\cref{sec:bip_and_nonbip_arrangements:maximal_bip:algorithm}) and how {\tt Bipartite MaxLA} for trees approximates the solution of {\tt MaxLA} for trees (\cref{sec:bip_and_nonbip_arrangements:maximal_bip:relationship}).

\subsubsection{Constructing maximal bipartite arrangements}
\label{sec:bip_and_nonbip_arrangements:maximal_bip:algorithm}

The algorithm to construct a maximal bipartite arrangement (\cref{algo:bip_and_nonbip_arrangements:maximal_bip:algorithm:Bipartite_MaxLA}): one simply has to compute a proper $2$-coloring of the tree, sort the set of vertices by degree for each color using counting sort \parencite{Cormen2001a}, and arrange the vertices to form a bipartite arrangement using said order.

\begin{algorithm}
	\caption{{\tt Bipartite MaxLA} in time and space $\bigO{n}$.}
	\label{algo:bip_and_nonbip_arrangements:maximal_bip:algorithm:Bipartite_MaxLA}
	\DontPrintSemicolon
	
	\KwIn{$\bipgraph$ a connected bipartite graph.}
	\KwOut{A maximal bipartite arrangement of $\bipgraph$.}
	
	\SetKwProg{Fn}{Function}{ is}{end}
	\Fn{\textsc{MaximalBipartiteArrangement}$(\bipgraph)$} {
		$V_1, V_2 \gets$ Proper coloring of the vertices of $\bipgraph$ with two colors. \tcp{Cost: $\bigO{n}$}
		\tcp{Cost: time and space $\bigO{n}$ with counting sort \parencite{Cormen2001a}}
		$V_1^* \gets$ a list, {\em non-increasing} ordering of $V_1$ by vertex degree \;
		$V_2^* \gets$ a list, {\em non-decreasing} ordering of $V_2$ by vertex degree \;
		
		$\maxbiparr \gets (V_1^* : V_2^*)$ \tcp{Construct $\maxbiparr$ by arranging $V_1^*$ and $V_2^*$ accordingly. Cost: $\bigO{n}$}
		\Return $\maxbiparr$
	}
\end{algorithm}

Now we prove that a maximal bipartite arrangement of any $n$-vertex connected bipartite graph, that is, the solution to {\tt bipartite MaxLA}, can be constructed in time and space $\bigO{n}$. This is key to many of the results proven later in this article and follows from \Nurse.

\begin{theorem}
\label{thm:bip_and_nonbip_arrangements:maximal_bip:algorithm}
\cref{algo:bip_and_nonbip_arrangements:maximal_bip:algorithm:Bipartite_MaxLA} solves {\tt bipartite MaxLA} for any connected bipartite graph in time and space $\bigO{n}$.
\end{theorem}
\begin{proof}
Let $\bipgraph=(V_1\cup V_2,E)$ be an $n$-vertex connected bipartite graph, where $V_1$ and $V_2$ are the two partitions of vertices for which every edge $uv\in E$ is such that $u\in V_1$ and $v\in V_2$. W.l.o.g., in any bipartite arrangement $\arr$ of $\bipgraph$ the vertices of $V_1$ are arranged first and the vertices of $V_2$ are arranged second in the arrangement. Therefore, for any $u\in V_1$ we must have $\level{u}=+\degree{u}$, and for $v\in V_2$ we must have $\level{v}=-\degree{v}$. By recalling the proof of \cref{propos:properties_max_arrs:known:non_increasing_levsig}, it is easy to see that vertices in $V_1$ must be placed in the arrangement non-increasingly by degree, and the vertices in $V_2$ must be placed non-decreasingly by degree, in order to construct a maximal bipartite arrangement. By construction of the arrangement, no two adjacent vertices will ever have the same level value. Given these two facts, and by recalling the proof of \cref{propos:properties_max_arrs:known:permutation_of_equal_level}, it is easy to see that the vertices with the same level value can also be ordered arbitrarily while keeping the cost constant. Finally, these conditions are also sufficient since a connected bipartite graph has a unique proper bicoloring.

For the sake of clarity, the steps to construct such arrangements are given in \cref{algo:bip_and_nonbip_arrangements:maximal_bip:algorithm:Bipartite_MaxLA}. The algorithm constructs a maximal bipartite arrangement by first ordering the vertices in $V_1$ and in $V_2$ following the description above; the result is denoted as $V_1^*$ and $V_2^*$ respectively. This sorting step can be done in time and space $\bigO{n}$ using counting sort \parencite{Cormen2001a} since we want to sort $\bigO{n}$ elements where the keys used to sort them (their degrees) range in the interval $[1,n)$. Then, it arranges first the vertices $V_1^*$ in the order they appear in $V_1^*$, then continue arranging the vertices of $V_2^*$ also in the order in which they appear in $V_2^*$. Both steps can be done in time $\bigO{n}$.
\qed
\end{proof}

Notice that when the partition $(V_1,V_2)$ of the bipartite graph is known, \cref{algo:bip_and_nonbip_arrangements:maximal_bip:algorithm:Bipartite_MaxLA} only needs knowledge on vertex degrees, that is, connectivity information (neighborhood of vertices) is no longer required.

\cref{algo:bip_and_nonbip_arrangements:maximal_bip:algorithm:Bipartite_MaxLA} is similar to the randomized algorithm devised by \textcite[Figure 4]{Hassin2001a}, that yields a $2$-approximation algorithm to solve {\tt MaxLA} in any general weighted graph $G=(V,E)$. Their algorithm makes a random partition of $V$ into two sets $V_1$ and $V_2$ and then arranges first the vertices in $V_1$ and then the vertices in $V_2$. The placement of the vertices is done such that the sum of the weight of the edges incident to a vertex is non-increasing among $V_1$ and non-decreasing among $V_2$. In our setting, the edges have all weight equal to $1$ and the sorting is by degree.

It is important to point out that the known solutions of {\tt MaxLA} for particular classes of trees cited in \cref{sec:introduction} are all obtained by constructing maximal bipartite arrangements.

\subsubsection{$3/2$-approximation}
\label{sec:bip_and_nonbip_arrangements:maximal_bip:relationship}

Here we study the approximation factor of {\tt bipartite MaxLA} with respect to {\tt MaxLA} over trees. \textcite[Figure 4]{Hassin2001a} already studied this relationship by means of their $2$-approximation algorithm. They showed that for any given graph $\graph$ the expected cost of the random arrangement produced by their approximation algorithm, denoted as $\phi(\graph)$, ``{\em is asymptotically $\frac{1}{2}opt$ with probability that approaches $1$ as a function of $n$}'' \parencite[Theorem 10, Figure 5]{Hassin2001a}, where $opt$ denotes the optimal value of {\tt MaxLA}. That is,
\begin{equation}
\label{eq:bip_and_nonbip_arrangements:maximal_bip:relationship:Hassin_2_factor}
\frac{\DMaxG}{\phi(\graph)} \le 2.
\end{equation}

This approximation factor can be improved easily in trees as an immediate consequence of previous results by \textcite{Ferrer2021a}.
\begin{corollary}
\label{cor:bip_and_nonbip_arrangements:relationship:our_factor}
Given any tree $\tree$,
\begin{equation}
\label{eq:bip_and_nonbip_arrangements:maximal_bip:relationship:our_factor}
\frac{\DMaxT}{\DMaxBipartiteT} \le \frac{3}{2}.
\end{equation}
\end{corollary}
\begin{proof}
It is easy to see that for any given $\tree\in\freetrees$,
\begin{equation*}
\frac{ \DMaxT                          }{ \DMaxBipartiteT } \le
\frac{ \max_{\tree'} \{\DMax{\tree'}\} }{ \min_{\tree'} \{\DMaxBipartite{\tree'}\} }.
\end{equation*}
The numerator is the maximum {\tt MaxLA} (\cref{eq:introduction:maximum_MaxLA}). The denominator is the minimum {\tt MaxLA}, which is achieved by a star graph whose maximum is a bipartite arrangement (\cref{eq:introduction:minimum_MaxLA}). Then
\begin{equation*}
\frac{ \max_{\tree'} \{\DMax{\tree'}\} }{ \min_{\tree'} \{\DMaxBipartite{\tree'}\} } =
\frac{ \DMax{\balancedbistar} }{ \DMax{\start} } \le
\frac{ 3n^2 - 6n + 4 }{ 2n^2 - 2n }.
\end{equation*}
Finally,
\begin{equation*}
\frac{ \DMaxT }{ \DMaxBipartiteT } \le
\lim_{n\rightarrow\infty} \frac{ 3n^2 - 6n + 4 }{ 2n^2 - 2n } = \frac{3}{2}.
\end{equation*}
\qed
\end{proof}
%--------------------------------------------%
% </automatic inline of '4-2-bipartite.tex'> %
%--------------------------------------------%
%-----------------------------------------------%
% <automatic inline of '4-3-non-bipartite.tex'> %
%-----------------------------------------------%
\subsection{Maximal non-bipartite arrangements of trees}
\label{sec:bip_and_nonbip_arrangements:maximal_nonbip}

It is easy to see that maximal non-bipartite arrangements with $k$ thistle vertices form (possibly empty) sequences of consecutive non-thistle vertices that satisfy \Nurse{} individually. By `satisfy' we mean that these sequences have the same properties as those stated in \Nurse. Notice, however, that a maximal arrangement of $k$ thistles needs not be maximum, and thus whether or not \Nurse{} hold in these arrangements, globally or in individual sequences, is not straightforward. We formalize this claim as follows.
\begin{proposition}
\label{propos:bip_and_nonbip_arrangements:maximal_nonbip:near_Nurse}
Let $\graph$ be any graph and let $\maxnonbiparr$ be any of its maximal (not necessarily maximum) non-bipartite arrangements in which vertices $\thistlev_1,\dots,\thistlev_k\in V$ are the only thistle vertices, such that, w.l.o.g., $\maxnonbiparr(\thistlev_1)<\cdots<\maxnonbiparr(\thistlev_k)$, that is, let
\begin{equation*}
\maxnonbiparr = \left(
	u_1^{1}, \dots, u_{i_1}^{1},
	\thistlev_1,
	u_1^{2}, \dots, u_{i_2}^{2},
	\thistlev_2,
	\dots,
	u_1^{k}, \dots, u_{i_k}^{k},
	\thistlev_k,
	u_1^{k+1}, \dots, u_{i_{k+1}}^{k+1}
\right),
\end{equation*}
where $i_1,\dots,i_{k+1} \ge 0$ are the number of non-thistle vertices in each sequence. Then for each $j\in[1,k+1]$, 
\begin{itemize}
	\item The sequence of level values $\level{u_1^{j}}[\maxnonbiparr]$, $\dots$, $\level{u_{i_j}^{j}}[\maxnonbiparr]$ satisfies \cref{propos:properties_max_arrs:known:non_increasing_levsig}.
	\item The edges with both endpoints within the $j$-th sequence satisfy \cref{propos:properties_max_arrs:known:no_neighs_in_same_level}.
	\item The two previous claims imply that the sequence $\level{u_1^{j}}[\maxnonbiparr]$, $\dots$, $\level{u_{i_j}^{j}}[\maxnonbiparr]$ also satisfies \cref{propos:properties_max_arrs:known:permutation_of_equal_level}.
\end{itemize}
\end{proposition}

The proof of \cref{propos:bip_and_nonbip_arrangements:maximal_nonbip:near_Nurse} is similar to that of \Nurse{}, but applied only to each sequence individually. Notice that applying the swaps in the proofs of \cref{propos:properties_max_arrs:known:non_increasing_levsig,propos:properties_max_arrs:known:no_neighs_in_same_level} to sequences that include a thistle vertex, may move the thistle vertex in such a way that it is no longer a thistle vertex in the new arrangement. For example, notice that \cref{propos:properties_max_arrs:known:no_neighs_in_same_level} is not satisfied in a maximal non-bipartite arrangement of a star graph of $4$ vertices when taking the entire arrangement into consideration: the thistle vertex (the center) and one of its leaves have both level value equal to $1$ in absolute value.

\subsubsection{{\tt 1-thistle MaxLA}}

In this section we devise an algorithm to construct a maximal non-bipartite arrangement of a tree $\tree$ such that the arrangement only has one thistle vertex which is given as part of the input (\cref{algo:bip_and_nonbip_arrangements:NonBipartite_MaxLA_one_thistle}). This algorithm is particularly useful to construct maximum arrangements of trees that are known to have at most one thistle vertex (\cref{sec:max_for_classes:k_linear:two_linear}). We use \cref{algo:bip_and_nonbip_arrangements:NonBipartite_MaxLA_one_thistle} later in this section to tackle {\tt 1-thistle MaxLA} (\cref{algo:bip_and_nonbip_arrangements:one_thistle_MaxLA}).

The central idea behind this algorithm is, given a thistle vertex $\thistlev$, its neighbors being $\neighbors{\thistlev}=\{v_1,\dots,v_d\}$, and given an assignment $\assignment \;:\; \neighbors{\thistlev} \rightarrow \{-1,+1\}$ of the side of the thistle to which each $v_i$ goes ($-1$ means {\tt left}, $+1$ means {\tt right}), to construct a maximal arrangement based on $\assignment$. This is an arrangement in which (1) the level value of $\thistlev$ is determined by the placement of its neighbors according to $\assignment$, and (2) the connected components $\tree_1, \dots, \tree_d$ (with $d=\degree{\thistlev}$) are arranged in maximal bipartite arrangements $\biparr^{(i)}$ (\cref{sec:bip_and_nonbip_arrangements:maximal_bip:algorithm}) such that the level value $\level{v_i}[\biparr^{(i)}]$ is in accordance to the side of $\thistlev$ on which $v_i$ is to be placed (at according to $\assignment$). \cref{algo:bip_and_nonbip_arrangements:NonBipartite_MaxLA_one_thistle} finds a maximal such arrangement for each possible $\assignment$ for a given $\thistlev$, and keeps track of the arrangement with largest cost. The crucial operation is to properly merge the different $\biparr^{(i)}$ and apply the necessary relocation of vertices so that it is the best possible for a given $\assignment$. \cref{fig:bip_and_nonbip_arrangements:NonBipartite_MaxLA_one_thistle:step_by_step} shows an example of the step-by-step construction of a maximal arrangement with $1$ thistle vertex.

\begin{algorithm}
	\caption{{\tt MaxLA} with exactly one given thistle vertex $\thistlev$, in time $\bigO{n\degree{\thistlev}2^{\degree{\thistlev}}}$ and space $\bigO{n}$.}
	\label{algo:bip_and_nonbip_arrangements:NonBipartite_MaxLA_one_thistle}
	\DontPrintSemicolon
	
	\KwIn{$\tree=(V_1\cup V_2, E)$ a free tree, $\thistlev$ a thistle vertex.}
	\KwOut{A maximal non-bipartite arrangement of $\tree$ with vertex $\thistlev$ as thistle vertex.}
	
	\SetKwProg{Fn}{Function}{ is}{end}
	\Fn{\textsc{1-Given-Thistle-MaxLA}$(\tree, \thistlev)$} {
		$\arr\gets \emptyset$ \tcp{empty arrangement of $n$ positions}
		
		\tcp{Generate all assignments of left- and right-side of $\neighbors{\thistlev}$. $\bigO{2^d}$ iterations}
		\For {every assignment $\assignment$} {
			
			$\levelfunc \gets$ computation of the level value to every vertex in the tree \;
			\lIf {$\levelfunc(\thistlev) < 0$} { skip to the next assignment }
			
			$\arr_1 \gets$ Step \ref{step:bip_and_nonbip_arrangements:maximal_nonbip_one_thistle:1}: construct an initial arrangement using $\levelfunc$: \;
			\quad \ $|$ Make it of the form (vertices with positive level : $\thistlev$ : vertices with negative level) \;
			\quad \ $|$ Sort the vertices in $\arr_1$ non-decreasingly by (positive) level value \;
			\quad \ $|$ Sort the vertices in $\arr_1$ non-decreasingly by (negative) level value \;
			
			$\arr_2 \gets$ Step \ref{step:bip_and_nonbip_arrangements:maximal_nonbip_one_thistle:2}: normalize $\arr_1$ by sorting the sequences of vertices of equal level value. \;
			
			$\arr_3 \gets$ Step \ref{step:bip_and_nonbip_arrangements:maximal_nonbip_one_thistle:3}: on $\arr_2$, move vertices $u\in V'$ that are found to the left of $\thistlev$ immediately to the right of $\thistlev$ as long as that increases the cost of the arrangement. \;
			
			$\arr \gets \max(\arr, \arr_3)$ \tcp{Keep the arrangement with maximum cost}
		}
		\Return $\arr$
	}
\end{algorithm}

\begin{figure}
	\centering
	\includegraphics[scale=0.95]{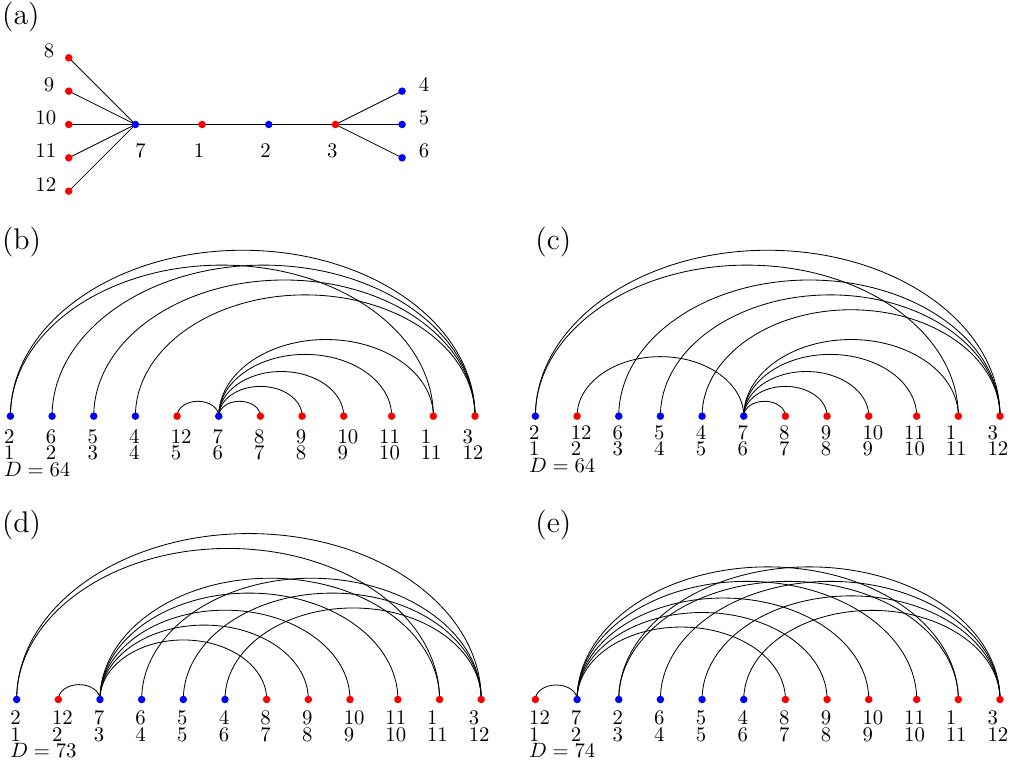}
	\caption{Step-by-step illustration of \cref{algo:bip_and_nonbip_arrangements:NonBipartite_MaxLA_one_thistle} on the tree in (a), for which we choose $\thistlev=7$ and assignment $\assignment$ such that $\assignment(12)=-1$ and $\assignment(8)=\assignment(9)=\assignment(10)=\assignment(11)=1$. (b) The arrangement that results after placing the vertices in maximal bipartite arrangements and placing $\thistlev$ between the vertices with positive level value and those with negative value (Step \ref{step:bip_and_nonbip_arrangements:maximal_nonbip_one_thistle:1}). (c) Normalization of the arrangement (Step \ref{step:bip_and_nonbip_arrangements:maximal_nonbip_one_thistle:2}). (d) Relocating the first block of non-neighbors of the thistle (Step \ref{step:bip_and_nonbip_arrangements:maximal_nonbip_one_thistle:3}). (e) Relocating the second block of non-neighbors of the thistle (Step \ref{step:bip_and_nonbip_arrangements:maximal_nonbip_one_thistle:3}).}
	\label{fig:bip_and_nonbip_arrangements:NonBipartite_MaxLA_one_thistle:step_by_step}
\end{figure}

Prior to detailing and proving the correctness of \cref{algo:bip_and_nonbip_arrangements:NonBipartite_MaxLA_one_thistle}, we first prove several intermediate results. We now prove that, in maximal non-bipartite arrangements of only $1$ thistle vertex $\thistlev$, the vertices that are not the thistle and are not neighbors of any of the thistle must be arranged so that they form a non-increasing level sequence. As before, this is obvious in arrangements with 1 thistle vertex that are {\em maximum}, but not so obvious in {\em maximal} arrangements of $1$ thistle vertex, which need not be maximum.

\begin{lemma}
\label{lemma:bip_and_nonbip_arrangements:maximal_nonbip:one_thistle_and_neighbors}
Let $\graph$ be any graph and let $\maxnonbiparr$ be any of its maximal non-bipartite arrangements as in \cref{propos:bip_and_nonbip_arrangements:maximal_nonbip:near_Nurse} with exactly one thistle vertex. Let $\thistlev\in V$ be the only thistle vertex in $\maxnonbiparr$. The vertices in $V'$
\begin{equation*}
V'	= V \setminus
	\left( \{\thistlev\} \cup \neighbors{\thistlev} \right)
	= \{r_1, \dots, r_\omega\}
\end{equation*}
for which we assume, w.l.o.g., that $\maxnonbiparr(r_1) < \cdots < \maxnonbiparr(r_\omega)$, are arranged in $\maxnonbiparr$ so that
\begin{equation*}
\level{r_1}[\maxnonbiparr] \ge \cdots \ge \level{r_\omega}[\maxnonbiparr].
\end{equation*}
Notice that the vertices in $V'$ need not be consecutive in the arrangement.
\end{lemma}
\begin{proof}
Let $\arr$ be a maximal non-bipartite arrangement of the form described above. First, recall that, by construction of $\arr$, we must have that $|\level{u}|=\degree{u}$ for any $u\in V'$. Let $u_1,\dots,u_x$ be all the vertices to the left of $\thistlev$, and let $v_1,\dots,v_y$ be all the vertices to the right of $\thistlev$ (\cref{fig:bip_and_nonbip_arrangements:one_thistle_and_neighbors:proof}). By \cref{propos:bip_and_nonbip_arrangements:maximal_nonbip:near_Nurse}
\begin{equation*}
\level{u_1} \ge \cdots \ge \level{u_x} \qquad \text{and} \qquad
\level{v_1} \ge \cdots \ge \level{v_y}.
\end{equation*}
and thus we have that the vertices in $V'$ to each side of $\thistlev$ satisfy the claim.

Now, let $u\in V'$ be the closest vertex to $\thistlev$ to its left in the arrangement; let $v\in V'$ be the closest vertex to $\thistlev$ to its right in the arrangement. In order to prove the claim it remains to show that $\level{u}\ge\level{v}$. Let $\{w_1,\dots,w_\alpha\}\subseteq \neighbors{\thistlev}$ be the set of vertices between $u$ and $\thistlev$ in $\arr$; similarly, let $\{z_1,\dots,z_\beta\}\subseteq \neighbors{\thistlev}$ be the set of vertices between $\thistlev$ and $v$ in $\arr$ (\cref{fig:bip_and_nonbip_arrangements:one_thistle_and_neighbors:proof}). $\alpha$ and $\beta$ denote the number of vertices between $u$ and $\thistlev$, and between $\thistlev$ and $v$ respectively.

\begin{figure}
	\centering
	\includegraphics{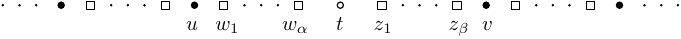}
	\caption{Proof of \cref{lemma:bip_and_nonbip_arrangements:maximal_nonbip:one_thistle_and_neighbors}. Squares denote vertices from $\neighbors{\thistlev}$ and filled dots denote vertices in $V'$.}
	\label{fig:bip_and_nonbip_arrangements:one_thistle_and_neighbors:proof}
\end{figure}

By way of contradiction, suppose that $\arr$ is a maximal non-bipartite arrangement with only one thistle vertex in which $\level{v}>\level{u}$. We have to show that by either moving $u$ to the right of $v$ or moving $v$ to the left of $u$ the cost of the arrangement increases and therefore $\arr$ is not maximal. This displacement is implemented as a sequence of swaps of consecutive vertices in the arrangement. For this, we apply \cref{lemma:properties_max_arrs:known:vertex_swap} (\cref{eq:properties_max_arrs:known:vertex_swap:cuts:middle}), in which the value of $\neighborsarrsymbol$ is always $0$.

\begin{itemize}
\item Consider the case in which $\level{u}>0$. Then, we have that $\level{v}>0$ as well, and $\alpha\ge0$. It is easy to see that $\beta=0$: if $\beta>0$ we would have that $\level{z_j}<0$ (for all $j\in[1,\beta]$) but also $\level{v}>0$ which would contradict \cref{propos:bip_and_nonbip_arrangements:maximal_nonbip:near_Nurse}. Notice that $uv\notin E$, otherwise $v$ would be a thistle, and $uw_i\notin E$, $vw_i\notin E$ for all $i\in[1,\alpha]$. Moving $v$ to the left of $u$ consists of swapping first $v$ with $\thistlev$, then swapping $v$ with $w_\alpha$, and so on. This changes the cost of the arrangement by (\cref{lemma:properties_max_arrs:known:vertex_swap} -- \cref{eq:properties_max_arrs:known:vertex_swap:cuts:middle})
\begin{align*}
\varphi_{\arr}(u\leftarrow v)
	&= \level{v} - \level{\thistlev}
	+ \sum_{i=1}^\alpha \left( \level{v} - \level{w_{\alpha - i + 1}} \right)
	+ \level{v} - \level{u} \\
	&= (\alpha + 2)\level{v} - \level{u} - \level{\thistlev} - \sum_{i=1}^\alpha \level{w_i}.
\end{align*}
Notice that since $vw_i\notin E$, the terms of the form $a_{vw_i}$ in \cref{eq:properties_max_arrs:known:vertex_swap:cuts:middle} do not contribute to the equation above. Similarly, moving $u$ to the right of $v$, changes the cost of the arrangement by
\begin{align*}
\varphi_{\arr}(u\rightarrow v)
	&= \sum_{i=1}^\alpha \left( \level{w_i} - \level{u} \right)
	+ \level{\thistlev} - \level{u}
	+ \level{v} - \level{u} \\
	&= \level{v} - (\alpha + 2)\level{u} + \level{\thistlev} + \sum_{i=1}^\alpha \level{w_i}.
\end{align*}
Recall that we assumed $\level{v}>\level{u}$: if one of the changes $\varphi_{\arr}(u\leftarrow v)$ or $\varphi_{\arr}(u\rightarrow v)$ is positive then the arrangement is not maximal and thus we cannot have $\level{v}>\level{u}$. Now we show that one of the changes is positive by showing that both cannot be negative. If both were indeed negative, then
\begin{align*}
\varphi_{\arr}(u\leftarrow v) + \varphi_{\arr}(u\rightarrow v) &< 0, \\
(\alpha + 3)\level{v} - (\alpha + 3)\level{u} &< 0, \\
\level{v} &< \level{u},
\end{align*}
which contradicts our initial assumption that $\level{v}>\level{u}$. Therefore, these two changes do not {\em both} decrease the cost of the arrangement, and thus {\em at least one of the two} increases it.

\item Consider now the case in which $\level{u}<0$. Here we distinguish two cases.
	\begin{itemize}
	\item If $\level{v}<0$ then we just have to mirror the arrangement and apply the arguments above to the result.
	
	\item In case $\level{v}>0$ then we have to notice that $\alpha=0$ and $\beta=0$ and $uv\notin E$ (otherwise $u$ and $v$ would be thistles). Applying arguments similar to the ones above we see that,
	\begin{align*}
	\varphi_{\arr}(u\rightarrow v)
		&= \level{\thistlev} + \level{v} - 2\level{u},\\
	\varphi_{\arr}(u\leftarrow v)
		&= 2\level{v} - \level{\thistlev} - \level{u}.
	\end{align*}
	Again, assuming both changes are negative we reach the conclusion that
	\begin{equation*}
	\varphi_{\arr}(u\rightarrow v) + \varphi_{\arr}(u\leftarrow v) = 3\level{v} - 3\level{u} < 0
	\longleftrightarrow \level{v} < \level{u}
	\end{equation*}
	which also contradicts our initial assumption.
	\end{itemize}
\end{itemize}
\qed
\end{proof}

Next we prove a result that highlights an interesting substructural organization of the vertices in maximal non-bipartite arrangements of exactly one thistle vertex.

\begin{corollary}
\label{cor:bip_and_nonbip_arrangements:components_maximal_bipartite}
Let $\arr$ be a maximal non-bipartite arrangement of a tree $\tree$ with exactly one thistle vertex, say $\thistlev$. By removing $\thistlev$ from $\tree$ we obtain $\degree{\thistlev}$ connected components $\tree_i$ each of which is arranged in a maximal bipartite arrangement in $\arr$. That is, the arrangements $\arr(\tree_i)$ are maximal bipartite arrangements.
\end{corollary}
\begin{proof}
Recall that $\tree$ is a connected bipartite graph and thus exists a unique proper two-coloring. To prove the statement in the corollary, we focus on a specific $\tree_i$; let $w_i$ be the only vertex of $\tree_i$ such that $w_i\in\neighbors{\thistlev}$. Let $\sgn$ be the sign function so that $\sgn(x)$ denotes the sign of $x\in\reals$. It is easy to see that
\begin{itemize}
\item $\sgn(\level{w_i})=\sgn(\level{u})$, for all $u\in\tree_i\setminus\{w_i\}$ of the same color as that of $w_i$,
\item $\sgn(\level{w_i})=-\sgn(\level{v})$ for all $v\in\tree_i\setminus\{w_i\}$ of the opposite color to that of $w_i$,
\end{itemize}
since $\tree_i$ is arranged in a maximal bipartite arrangement, the vertices of the same color lie on the same side of the arrangement. Now, all non-neighbors of $\thistlev$ are arranged by non-increasing degree (\cref{lemma:bip_and_nonbip_arrangements:maximal_nonbip:one_thistle_and_neighbors}), thus no vertex $v$ is found between two vertices $u$, and vice versa. Since all the vertices to each side of $\thistlev$ are arranged by non-increasing degree (\cref{propos:bip_and_nonbip_arrangements:maximal_nonbip:near_Nurse}), vertex $w_i$ is arranged in $\arr$ in the correct position as per a non-increasing degree order among all vertices to the same side of $\thistlev$ as $w_i$ and thus among all vertices of $\tree_i$ of the same color as $w_i$. Thus $\arr(\tree_i)$ is a maximal bipartite arrangement.
\qed
\end{proof}

This observation needs not hold for bipartite graphs with cycles, or more general graphs, since the removal of these vertices may not disconnect the graph. Notice that \cref{cor:bip_and_nonbip_arrangements:components_maximal_bipartite} does not involve maximal non-bipartite arrangements with more than one thistle vertex. Furthermore, although \cref{cor:bip_and_nonbip_arrangements:components_maximal_bipartite} applies only to {\em maximal} non-bipartite arrangements with one thistle vertex, it is easy to see that the same property also holds in {\em maximum} arrangements regardless of its number of thistle vertices. This last claim is illustrated in \cref{fig:bip_and_nonbip_arrangements:maximal_nonbip:combination_bipartite} with two maximum (non-bipartite) arrangements of one and two thistle vertices respectively.

\begin{figure}
	\centering
	\includegraphics[scale=0.925]{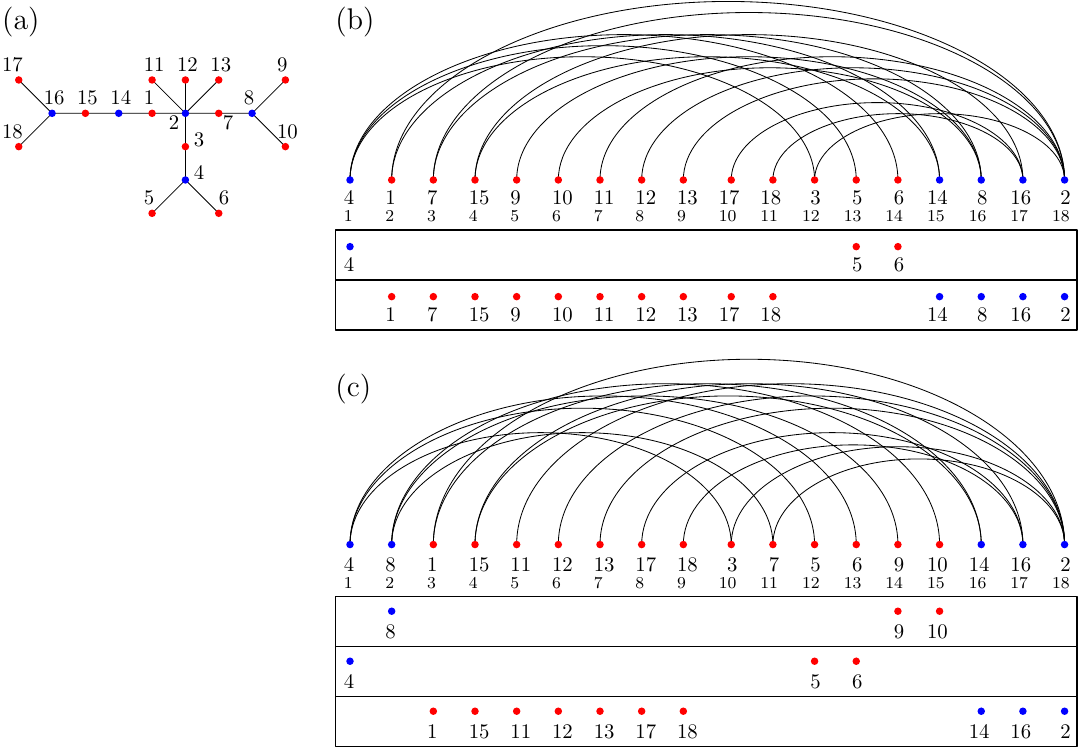}
	\caption{Maximum non-bipartite arrangements of the tree in (a) with cost $187$. Numbers indicate vertex labels. (b) A maximum non-bipartite arrangement with one thistle vertex ($3$). (c) A maximum non-bipartite arrangement with two thistle vertices ($3$ and $7$). In both (b) and (c), the arrangement of the vertices is split into the number of connected components that we obtain by removing the thistle vertices from the tree.} % L'arbre de la figura té "18 vèrtexos", "índex 80842" i "head vector: 0 1 2 3 4 4 2 7 8 8 2 2 2 1 14 15 16 16"
	\label{fig:bip_and_nonbip_arrangements:maximal_nonbip:combination_bipartite}
\end{figure}

Now follows a proof of correctness and cost analysis of \cref{algo:bip_and_nonbip_arrangements:NonBipartite_MaxLA_one_thistle}.

\begin{theorem}
\label{thm:bip_and_nonbip_arrangements:maximal_nonbip_one_thistle}
\cref{algo:bip_and_nonbip_arrangements:NonBipartite_MaxLA_one_thistle} constructs a maximal non-bipartite arrangement of any $n$-vertex tree $\tree=(V,E)$ with a single, given thistle vertex $\thistlev\in V$ in time $\bigO{nd2^d}$, where $d=\degree{\thistlev}$, and space $\bigO{n}$.
\end{theorem}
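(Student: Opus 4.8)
The plan is to reduce the problem to $2^d$ applications of \cref{thm:bip_and_nonbip_arrangements:maximal_bip}. Deleting the thistle $\thistlev$ from $\tree$ leaves exactly $d=\degree{\thistlev}$ subtrees, one per neighbour $w_i\in\neighs{\thistlev}$. In any arrangement whose only thistle is $\thistlev$, every other vertex $u$ has all of its neighbours on one side, so $\level{u}=\pm\degree{u}$: call $u$ a \emph{source} if its neighbours lie to its right and a \emph{sink} otherwise. Since adjacent vertices must be one source and one sink, the sources and sinks form a proper $2$-colouring of the forest $\tree-\thistlev$; on each connected component this colouring is exactly the bipartition, so the split into sources and sinks is fixed by one binary choice per component---equivalently, by deciding for each neighbour $w_i$ whether it is a source (and hence lies to the left of $\thistlev$) or a sink (to the right). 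This yields $2^d$ orientations, and only those with at least one source-neighbour and one sink-neighbour keep $\thistlev$ a genuine thistle.

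Next I would fix an admissible orientation and construct a maximum-cost arrangement realising it in $\bigO{n}$. The orientation fixes the level of every vertex but $\thistlev$, so by \cref{eq:preliminaries:sum_edge_lengths_as_cutsum_levsum} the cost is a weighted sum of known levels, maximised by placing them non-increasingly. I would thus take a maximal bipartite arrangement of the forest $\tree-\thistlev$ under the bipartition $(\text{sources},\text{sinks})$---computable in $\bigO{n}$ by \cref{thm:bip_and_nonbip_arrangements:maximal_bip} via \cref{algo:bip_and_nonbip_arrangements:Bipartite_MaxLA}, which sorts each class by degree with counting sort---and then insert $\thistlev$ into it. The only positional constraint is that $\thistlev$ must follow its source-neighbours and precede its sink-neighbours; within that admissible range it should be placed according to its own level $\level{\thistlev}$, which a single linear scan can locate while evaluating the resulting cost. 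The outcome is realisable and has $\thistlev$ as its unique thistle, since every source precedes all its (sink) neighbours, every sink follows all its (source) neighbours, and $\thistlev$ alone has neighbours on both sides.

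Ranging over the at most $2^d$ admissible orientations, reusing $\bigO{n}$ working space and retaining the arrangement of largest cost, then yields a maximal non-bipartite arrangement with a single thistle in total time $\bigO{n2^d}$ and space $\bigO{n}$, as claimed.

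The crux I expect is the per-orientation optimality argument. One cannot simply sort all $n$ levels non-increasingly and quote \cref{propos:previous_work:properties_max_linarr:non_increasing_levsig}, because $\thistlev$ is pinned between its source- and sink-neighbours while its level $\level{\thistlev}$ may be incompatible with that position, and because non-block no-thistle orderings of a component exist and must be excluded. I would settle this with an exchange argument restricted to pairs of vertices lying on the same side of $\thistlev$---where swaps provably keep $\thistlev$ the only thistle and leave every level unchanged---to show that sorting each class by degree is optimal, together with the explicit, clamped optimisation of the single remaining degree of freedom, namely the insertion point of $\thistlev$.
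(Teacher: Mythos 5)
Your proposal is correct and follows essentially the same route as the paper's own proof: delete the thistle $\thistlev$, observe that every other vertex must have level equal to $\pm$ its degree, enumerate the $2^{d}$ assignments of sides to the neighbours of $\thistlev$ (which is exactly the paper's choice of mirroring or not each component's maximal bipartite arrangement, built via \cref{thm:bip_and_nonbip_arrangements:maximal_bip}), produce the level-sorted merge in $\bigO{n}$ per assignment, and keep the best, for $\bigO{n2^{d}}$ time and $\bigO{n}$ space overall. The only noteworthy divergence is in the degenerate cases: you clamp $\thistlev$ into the window between its source- and sink-neighbours and optimise its insertion point when its level is incompatible with the sorted order, whereas the paper simply discards such combinations (together with those giving $|\level{\thistlev}|=d$) as unable to yield valid maximal arrangements.
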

\begin{proof}
Let $\neighbors{\thistlev}=\{v_1,\dots,v_d\}$ be the neighbors of $\thistlev$ and let $\tree_1,\dots,\tree_d$ be the connected components of $\tree$ obtained after removing $\thistlev$ from $\tree$; let component $\tree_i$ contain vertex $v_i$. The central idea behind this algorithm is, given an assignment $\assignment \;:\; \neighbors{\thistlev} \rightarrow \{-1,+1\}$ of the side of the thistle to which each $v_i$ goes ($-1$ means {\tt left}, $+1$ means {\tt right}), to construct a maximal arrangement in which the neighbors of $\thistlev$ are placed according to $\assignment$. As explained in \cref{cor:bip_and_nonbip_arrangements:components_maximal_bipartite}, the remaining vertices are distributed so that $\tree_1,\dots,\tree_d$ are arranged in maximal bipartite arrangements. To construct this maximal non-bipartite arrangement with $\assignment$ one has to figure out the optimal placement of the vertices in $V'=V\setminus (\{\thistlev\} \cup \neighbors{\thistlev})$ with respect to $\thistlev$. The maximal non-bipartite arrangement of $\tree$ with $\thistlev$ as its only thistle vertex is the arrangement with the highest cost over all possible assignments $\assignment$.

Each assignment $\assignment$ determines the level value each vertex has to have; we use $\levelfunc \;:\; V \rightarrow \whole$ to denote such level values. The level of $\thistlev$ is, simply,
\begin{equation*}
\levelfunc(\thistlev) = \sum_{i=1}^{d} \assignment(v_i).
\end{equation*}
Since the best assignment of sides to the neighbors of $\thistlev$ is not known, the algorithm tries all $2^d$ possible assignments. However, some of these assignments can be filtered out, such as those for which $|\levelfunc(\thistlev)|=\degreesymbol$ since $\thistlev$ is not a thistle, and those for which $\levelfunc(\thistlev)<0$ since there exists a symmetric assignment $\assignment'$ (by simple mirroring) such that $\assignment'(u) = -\assignment(u)$ (for any $u\in V$) and thus $\levelfunc(\thistlev) = -\levelfunc[\assignment'](\thistlev)$. Notice that the vertices in $V\setminus \{\thistlev\}$ cannot be thistles by assumption and, as such, we must have that $|\levelfunc(u)| = \degree{u}$; we only have to determine the sign of the level value. For any $v_i\in\neighbors{\thistlev}$, let $\levelfunc(v_i) = - \assignment(v_i) \cdot \degree{v_i}$. The level value of the rest of the vertices switches sign at every level of a breadth first search traversal in the subtree $\subtree{v_i}[\thistlev]$, starting at $v_i$, compared to that in the previous level.

Therefore, the goal is, for a fixed thistle vertex $\thistlev$ and assignment $\assignment$, to construct an arrangement $\arr$ that is maximal among those arrangements such that
\begin{itemize}
\item The neighbors of $\thistlev$ are placed in the arrangement according to $\assignment$, and thus $\thistlev$ has level value $\levelfunc(\thistlev)$.

\item The rest of the vertices have level value as specified by $\levelfunc$ (otherwise, other thistles would be present).

\item \cref{propos:bip_and_nonbip_arrangements:maximal_nonbip:near_Nurse,lemma:bip_and_nonbip_arrangements:maximal_nonbip:one_thistle_and_neighbors} hold.
\end{itemize}

\begin{figure}
	\centering
	\includegraphics{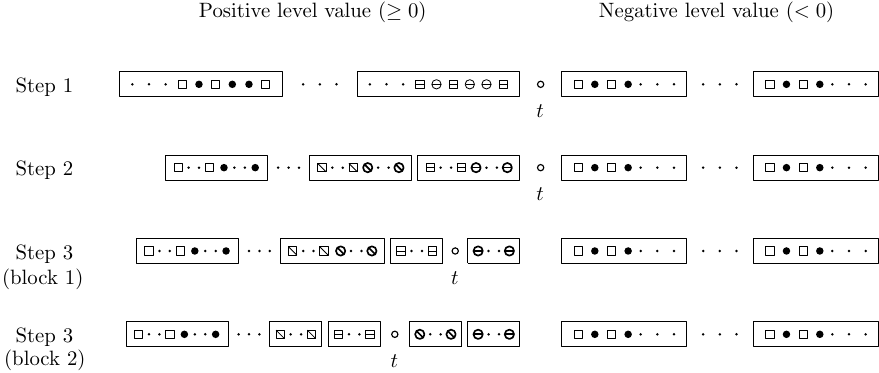}
	\caption{Steps to construct a maximal arrangement with one thistle vertex $\thistlev$. Squares (empty or stricken) denote vertices in $\neighbors{\thistlev}$, and circles (filled or stricken) denote vertices in $V\setminus(\{\thistlev\}\cup\neighbors{\thistlev})$. Vertices in the same rectangle are vertices of the same level value.}
	\label{fig:bip_and_nonbip_arrangements:maximal_nonbip_one_thistle:steps}
\end{figure}

The construction of the arrangement with $\thistlev$ as a thistle vertex and fixed assignment $\assignment$ consists of several steps (\cref{fig:bip_and_nonbip_arrangements:maximal_nonbip_one_thistle:steps}). W.l.o.g., we assume that $\levelfunc(\thistlev)\ge 0$.
\begin{enumerate}
\item \label{step:bip_and_nonbip_arrangements:maximal_nonbip_one_thistle:1} First, place the vertices in $V\setminus\{\thistlev\}$ with positive level value in non-increasing order. Then, place $\thistlev$ and, lastly, place the vertices in $V\setminus\{\thistlev\}$ with negative level value in non-increasing order. Let $\arr$ be the resulting arrangement. Notice that this initial arrangement satisfies the three conditions above. This step can be done in time $\bigO{n}$ using counting sort \parencite{Cormen2001a}.

\item \label{step:bip_and_nonbip_arrangements:maximal_nonbip_one_thistle:2} Permute all sequences of consecutive vertices of equal, and positive level value in $\arr$ so that we first find the neighbors of the thistle (if any) and then the non-neighbors (if any). Notice that $\thistlev$ can be in one of these sequences if the vertex immediately to its left in $\arr$ has the same level value as $\thistlev$. In this case, $\thistlev$ is to be placed between its neighbors and the non-neighbors of that level value. This keeps the relative order of the thistle and its neighbors and does not change the level signature of $\arr$. This can be seen as a normalization step that guarantees correctness of the next. This can be done in time $\bigO{n}$ using counting sort \parencite{Cormen2001a}, by assigning the value $0$ to the neighbors of the thistle, the value $1$ to the thistle, and the value $2$ to the other vertices.
\end{enumerate}

After Steps \ref{step:bip_and_nonbip_arrangements:maximal_nonbip_one_thistle:1} and \ref{step:bip_and_nonbip_arrangements:maximal_nonbip_one_thistle:2}, all vertices in $V'$ are arranged in a maximal bipartite arrangement with respect to their connected component, and vertex $\thistlev$ is placed so that its level value coincides with that given by $\assignment$, but $\thistlev$ may not be in its optimal position with respect to the vertices in $V'$. In order to find the maximal arrangement for fixed $\thistlev$ and $\assignment$, one has to figure out how to slide vertices in $V'$ in the arrangement while keeping their relative order among them as per \cref{lemma:bip_and_nonbip_arrangements:maximal_nonbip:one_thistle_and_neighbors}. To achieve this, it is sufficient to move vertices in $V'$ located at the left side of $\thistlev$ to the right of $\thistlev$, one vertex at a time, and starting at the rightmost vertex in $V'$ located at the left side of $\thistlev$ (Step \ref{step:bip_and_nonbip_arrangements:maximal_nonbip_one_thistle:3}). For the sake of clarity, we call this operation {\em relocation}.

Notice that there is no need to also try and relocate vertices in $V'$ from the right of $\thistlev$ to the left of $\thistlev$ since we either have that $\levelfunc(\thistlev)>0$ and this would decrease the cost of the arrangement, or $\levelfunc(\thistlev)=0$ and there exists another assignment $\assignment'$ for which all the neighbors of $\thistlev$ to its right in $\assignment$ are assigned to its left in $\assignment'$ and vice versa.

\begin{enumerate}
\setcounter{enumi}{2}
\item \label{step:bip_and_nonbip_arrangements:maximal_nonbip_one_thistle:3} Now, let $\arr$ be the result of the previous two steps, and of the form
\begin{equation*}
\arr = (
	\dots,
	\underbrace{u_1, \dots, u_{\delta}}_{\in\neighbors{\thistlev}},
	\overbrace{v_1, \dots, v_{\gamma}}^{\in V'},
	\underbrace{w_1, \dots, w_{\beta}}_{\in\neighbors{\thistlev}},
	\overbrace{y_1, \dots, y_{\alpha}}^{\in V'},
	\thistlev,
	\dots
),
\end{equation*}
where $v_i,y_i\in V'$, $u_i,w_i\in\neighbors{\thistlev}$, and
\begin{equation*}
\levelfunc(u_1)
	= \dots
	= \levelfunc(u_\delta)
	=
\levelfunc(v_1)
	= \dots
	= \levelfunc(v_\gamma)
>
\levelfunc(w_1)
	= \dots
	= \levelfunc(w_\beta)
= \levelfunc(y_1)
	= \dots
	= \levelfunc(y_\alpha).
\end{equation*}
For the sake of brevity, we use $\levelfunc(u)$ to denote $\levelfunc(u_k)$ for all $k\in[1,\delta]$,
$\levelfunc(v)$ to denote $\levelfunc(v_k)$ for all $k\in[1,\gamma]$, $\levelfunc(w)$ to denote $\levelfunc(w_k)$ for all $k\in[1,\beta]$ and $\levelfunc(y)$ to denote $\levelfunc(y_k)$ for all $k\in[1,\alpha]$.

The next step is to relocate vertices in $V'$, so as to increase the cost of the arrangement. That is, in $\arr$, find the vertex in $V'$ closest to $\thistlev$ to its left and relocate it. The first such vertex in $\arr$ is $y_\alpha$, the second is $y_{\alpha-1}$ and so on until reaching $y_1$. Then, the next such vertex before $y_1$ is $v_\gamma$, and so on. Notice that relocating a $v_j$ (or other vertex in $V'$ of higher level value than $\levelfunc(v)$) before all $y_k$ have been relocated would contradict \cref{lemma:bip_and_nonbip_arrangements:maximal_nonbip:one_thistle_and_neighbors} since $\levelfunc(v)>\levelfunc(y)$. During the process, keep track of the best arrangement after each vertex is relocated.

The cost of this step depends on the number of regions in the arrangement of vertices $(d_i, c_i)_{i=1}^{\kappa}$ of $d_i\ge1$ neighbors of $\thistlev$ followed by $c_i\ge1$ vertices from $V'$, with $\kappa \le \leftdeg{\thistlev}$ by construction; for the sake of simplicity, and only in the cost analysis of this step, we assume that these regions are determined by whether the vertices are neighbors of $\thistlev$ or not, rather than by level value. The cost of moving all vertices is
\begin{equation*}
c_1 + \sum_{i=2}^{\kappa} c_i \cdot \sum_{j=1}^{i - 1} d_j
	\le \sum_{i=1}^{\kappa} c_i \cdot \sum_{j=1}^{i} d_j
	\le d \sum_{i=1}^{\kappa} c_i
	\le d(n - d - 1).
\end{equation*}
The cost of this step is therefore $\bigO{nd}$.

In fact, there is no need to relocate all vertices in $V'$ located at the left of $\thistlev$. One can stop relocating vertices as soon as the cost decreases since relocating more vertices will further decrease the cost. To show this, first notice that relocating a vertex $y_k$ will increase the cost of the arrangement only when $\levelfunc(y_k)<\levelfunc(\thistlev)$ since the cost of swapping $y_k$ and $\thistlev$ is (\cref{eq:properties_max_arrs:known:vertex_swap:cuts:middle})
\begin{equation*}
\varphi_{\arr}(y_k \rightarrow \thistlev)
	= \levelfunc(\thistlev) - \levelfunc(y_k)
	= \levelfunc(\thistlev) - \levelfunc(y).
\end{equation*}
Obviously, if relocating any $y_k$ increases the cost of $\arr$, then so does relocating all $y_1,\dots,y_{k-1}$ since they all have the same level value. On the other hand, if relocating any $y_k$ decreases the cost, then so does relocating all $y_1,\dots,y_{k-1}$. Therefore, it is easy to see that one can relocate all vertices of the same level value in block. After relocating all $y_k$, we obtain (\cref{eq:properties_max_arrs:known:vertex_swap:cuts:middle})
\begin{equation*}
\arr_2 = (
	\dots,
	u_1,
	\dots,
	u_\delta,
	v_1,
	\dots,
	v_\gamma,
	w_1,
	\dots,
	w_\beta,
	\thistlev,
	y_1,
	\dots,
	y_\alpha,
	\dots
),
\end{equation*}
and the change in cost is
\begin{equation*}
\varphi_{\arr}(y_* \rightarrow \thistlev)
	= \sum_{k=1}^{\alpha} (\levelfunc(\thistlev) - \levelfunc(y_k))
	= \alpha \left( \levelfunc(\thistlev) - \levelfunc(y) \right).
\end{equation*}

We now prove the correctness of the stopping criterion. First, assume that relocating all $y_i$ (thus producing $\arr_2$) decreases the cost of the arrangement, that is, $\D[\arr][\tree]>\D[\arr_2][\tree]$. Under this assumption, it is easy to see that $\levelfunc(y) > \levelfunc(\thistlev)$ and thus $\levelfunc(v) > \levelfunc(y) > \levelfunc(\thistlev)$. We now show that further relocating any of $v_k$, thus producing $\arr_3$
\begin{equation*}
\arr_3 = (
	\dots,
	u_1,
	\dots,
	u_\delta,
	w_1,
	\dots,
	w_\beta,
	\thistlev,
	v_1,
	\dots,
	v_\gamma,
	y_1,
	\dots,
	y_\alpha,
	\dots
),
\end{equation*}
further decreases the cost, that is, $\D[\arr_2][\tree]>\D[\arr_3][\tree]$. The cost of relocating a single $v_k$ is
\begin{equation*}
\varphi_{\arr_2}(v_k \rightarrow \thistlev)
	= \sum_{i=1}^{\beta} (\levelfunc(w_i) - \levelfunc(v_k)) + \levelfunc(\thistlev) - \levelfunc(v_k)
	= \beta(\levelfunc(w) - \levelfunc(v_k)) + \levelfunc(\thistlev) - \levelfunc(v_k).
\end{equation*}
It is easy to see that $\varphi_{\arr_2}(v_j \rightarrow \thistlev)<0$ since
\begin{itemize}
\item $\levelfunc(v) > \levelfunc(w)$ by construction of $\arr$, and
\item $\levelfunc(v) > \levelfunc(\thistlev)$ since $\levelfunc(v)>\levelfunc(\thistlev)$ due to our assumption that $\varphi_{\arr}(y_* \rightarrow \thistlev)$.
\end{itemize}
Therefore, the cost of relocating all $v_j$,
\begin{equation*}
\varphi_{\arr_2}(v_* \rightarrow \thistlev)
	= \sum_{k=1}^{\gamma} \varphi_{\arr_2}(v_k \rightarrow \thistlev)
	= \gamma \left( \beta (\levelfunc(w) - \levelfunc(v)) + \levelfunc(\thistlev) - \levelfunc(v) \right).
\end{equation*}
Relocating more vertices further decreases the cost.

Finally, we need to prove the correctness of the criterion after performing all the relocations that increase the cost of the arrangement. Consider the result
\begin{equation*} 
\arr' = (
	\dots,
	\overbrace{p_1,\dots, p_\varepsilon}^{\in V'},
	\underbrace{q_1, \dots, q_\theta}_{\in\neighbors{\thistlev}},
	\overbrace{r_1, \dots, r_\mu}^{\in V'},
	\underbrace{s_1, \dots, s_\nu}_{\in\neighbors{\thistlev}},
	\thistlev,
	\dots
),
\end{equation*}
such that $\D[\arr'][\tree]>\D[\arr][\tree]$, $p_i, r_i \in V'$, $q_i, s_i\in\neighbors{\thistlev}$, where
\begin{equation*}
\levelfunc(p_1) = \dots = \levelfunc(p_\varepsilon)
>
\levelfunc(q_1) = \dots = \levelfunc(q_\theta) = \levelfunc(r_1) = \dots = \levelfunc(r_\mu)
>
\levelfunc(s_1) = \dots = \levelfunc(s_\nu).
\end{equation*}
and the vertices $r_k$ are the first vertices in $\arr$ that will decrease the cost of $\arr'$ if relocated. We prove now that if relocating all $r_k$ in $\arr'$, thus producing
\begin{equation*} 
\arr'' = (\dots, p_1,\dots, p_\varepsilon, q_1, \dots, q_\theta, s_1, \dots, s_\nu, \thistlev, r_1, \dots, r_\mu, \dots),
\end{equation*}
does not increase the cost of $\arr'$ ($\D[\arr'][\tree]\ge\D[\arr''][\tree]$) then further relocating $p_i$ in $\arr''$, thus producing
\begin{equation*} 
\arr''' = (\dots, q_1, \dots, q_\theta, s_1, \dots, s_\nu, \thistlev, p_1,\dots, p_\varepsilon, r_1, \dots, r_\mu, \dots),
\end{equation*}
will decrease the cost of $\arr''$ (thus $\D[\arr''][\tree]>\D[\arr'''][\tree]$). First (\cref{eq:properties_max_arrs:known:vertex_swap:cuts:middle}),
\begin{align*}
\varphi_{\arr'}(r_k \rightarrow \thistlev)
	&= \sum_{i=1}^\nu ( \levelfunc(s_i) - \levelfunc(r_k) ) + \levelfunc(\thistlev) - \levelfunc(r_k) \\
	&= -(\nu + 1)\levelfunc(r) + \nu\cdot\levelfunc(s) + \levelfunc(\thistlev), \\
\varphi_{\arr''}(p_k \rightarrow \thistlev)
	&= \sum_{i=1}^\theta (\levelfunc(q_i) - \levelfunc(p_k)) + \sum_{i=1}^\nu ( \levelfunc(s_i) - \levelfunc(p_k) ) + \levelfunc(\thistlev) - \levelfunc(p_k) \\
	&= -(\theta + \nu + 1)\levelfunc(p) + \theta\cdot\levelfunc(q) + \nu\cdot\levelfunc(s) + \levelfunc(\thistlev).
\end{align*}
We now show that if $\varphi_{\arr'}(r_k \rightarrow \thistlev)\le0$ then $\varphi_{\arr''}(p_k \rightarrow \thistlev)<0$ for a single $p_k$. Quite simply, notice that
\begin{equation*}
\varphi_{\arr''}(p_k \rightarrow \thistlev)
	= -(\nu + 1)\levelfunc(p) + \nu\cdot\levelfunc(s) + \levelfunc(\thistlev) + \theta(\levelfunc(q) - \levelfunc(p)),
\end{equation*}
for which it is easy to see that $\theta(\levelfunc(q) - \levelfunc(p)) < 0$ since $\levelfunc(p) > \levelfunc(q)$, and, easily, enough,
\begin{equation*}
-(\nu + 1)\levelfunc(p) + \nu\cdot\levelfunc(s) + \levelfunc(\thistlev)
<
-(\nu + 1)\levelfunc(r) + \nu\cdot\levelfunc(s) + \levelfunc(\thistlev)
=
\varphi_{\arr'}(r_k \rightarrow \thistlev)
\le
0
\end{equation*}
since $\levelfunc(p) > \levelfunc(r)$. Thus moving a single $p_k$ decreases the cost of the arrangement, and so does moving all $p_k$.
\end{enumerate}

Therefore, the exploration of all possible arrangements with $\thistlev$ as a thistle vertex with level value determined by $\assignment$ is complete and Steps \ref{step:bip_and_nonbip_arrangements:maximal_nonbip_one_thistle:1}, \ref{step:bip_and_nonbip_arrangements:maximal_nonbip_one_thistle:2}, \ref{step:bip_and_nonbip_arrangements:maximal_nonbip_one_thistle:3} find the maximal arrangement with $\thistlev$ and $\assignment$ fixed.
\qed
\end{proof}

Now we can easily tackle {\tt 1-thistle MaxLA} by applying \cref{algo:bip_and_nonbip_arrangements:NonBipartite_MaxLA_one_thistle} to potential thistles. For this, we define a set of vertices to test as thistle vertices. This set, denoted as $\epotthistlesT$, includes all potential thistles as defined above, and one internal vertex from antenna.

\begin{algorithm}
	\caption{{\tt 1-thistle MaxLA} in time $\bigO{\numepotthistlesT n\maxdegreesetT{\epotthistlessymbol}2^{\maxdegreesetT{\epotthistlessymbol}}}$ and space $\bigO{n}$.}
	\label{algo:bip_and_nonbip_arrangements:one_thistle_MaxLA}
	\DontPrintSemicolon
	
	\KwIn{$\tree=(V_1\cup V_2, E)$ a free tree.}
	\KwOut{A maximal non-bipartite arrangement of $\tree$ with exactly one thistle vertex.}
	
	\SetKwProg{Fn}{Function}{ is}{end}
	\Fn{\textsc{1-Thistle-MaxLA}$(\tree)$} {
		$\arr \gets \emptyset$ \tcp{empty arrangement of $n$-positions}
		\For {$\thistlev\in\epotthistlesT$} {
			$\arr_1\gets\textsc{1-Given-Thistle-MaxLA}(\tree, \thistlev)$ \tcp{\cref{algo:bip_and_nonbip_arrangements:NonBipartite_MaxLA_one_thistle}}
			$\arr\gets\max\{\arr, \arr_1\}$ \;
		}
		\Return $\arr$
	}
\end{algorithm}

\begin{theorem}
\label{thm:bip_and_nonbip_arrangements:1_thistle_MaxLA}
For a given tree $\tree\in\freetrees$, \cref{algo:bip_and_nonbip_arrangements:one_thistle_MaxLA} solves {\tt 1-thistle MaxLA} in space $\bigO{n}$ and time
\begin{itemize}
\item $\bigO{\numepotthistlesT n\maxdegreesetT{\epotthistlessymbol} 2^{\maxdegreesetT{\epotthistlessymbol}}}=\bigO{n^2\maxdegreeT 2^{\maxdegreeT}}$ on individual trees,
\item Typically $\bigO{n^3 \cdot \log{n}}$ over $\freetrees$,
\end{itemize}
where $\maxdegreesetT{\epotthistlessymbol}$ is the maximum degree over the vertices in $\epotthistlesT$, and $\maxdegreeT$ is the maximum degree of $\tree$.
\end{theorem}
\begin{proof}
Recall that $\epotthistlesT$ includes one degree-$2$ vertex from each antenna of the tree as vertices to inspect; the proof of \pathoptimizationO{} applies to {\em maximum} arrangements. We just need one internal vertex of each branchless path since, when it acts as a thistle, it can be exchanged with other internal vertices of the same branchless path by swapping vertices as in the proof of \pathoptimization. The correctness of the algorithm follows directly from \cref{thm:bip_and_nonbip_arrangements:maximal_nonbip_one_thistle}.

It is easy to see that it has time complexity
\begin{equation*}
\sum_{\thistlev\in\epotthistlesT} \bigO{n\degree{\thistlev}2^{\degree{\thistlev}}}
	= \bigO{\numepotthistlesT n \maxdegreesetT{\epotthistlessymbol} 2^{\maxdegreesetT{\epotthistlessymbol}}}.
\end{equation*}
In the worst case, the current definition of potential thistle vertices leads to $\maxdegreesetT{\epotthistlessymbol}=\maxdegreeT$ and $\numepotthistlesT=\bigO{n}$, a simpler upper bound of the cost is $\bigO{n^2\maxdegreeT2^{\maxdegreeT}}$.

Now, \textcite{Goh1994a} found that the maximum degree of an unlabeled free tree is typically $\bigTh{\log{n}}$ over $\freetrees$. Then the exponential cost of the algorithm will typically vanish and thus have typical complexity of $\bigO{\numepotthistlesT n^2 \cdot \log{n}}=\bigO{n^3 \cdot \log{n}}$.

%~ There is clearly a dependency between $\numpotthistlesT$ and the degrees of vertices in $\potthistlesT$ (obviously, $\numpotthistlesT \leq n - \maxdegreesetT{\potthistlessymbol}$). It is easy to see that the effective time complexity of the algorithm is at least cubic in $n$. By Jensen's inequality, we have that 
%~ \begin{equation}
%~ \frac{1}{n}\sum_{\thistlev\in\potthistlesT} 2^{\degree{\thistlev}}
	%~ \ge 2^{ (1/n) \sum_{\thistlev\in\potthistlesT}\degree{\thistlev}}
	%~ = 2^{2- 2/n}
	%~ = \bigTh{1}.
%~ \end{equation}
%~ Then the cost of the algorithm is 
%~ \begin{equation}
%~ n\sum_{\thistlev\in\potthistlesT} \frac{1}{n}\bigO{n2^{\degree{\thistlev}}}
	%~ \ge n^2\sum_{\thistlev\in\potthistlesT} \bigTh{1}
	%~ = \bigTh{n^2 \numpotthistlesT}
%~ \end{equation}
%~ Then we conclude that the cost of the algorithm is $\bigOm{n^2 \numpotthistlesT}$.
\qed
\end{proof}

\textcite{Nurse2018a,Nurse2019a} devised an algorithm to solve {\tt MaxLA} on directed trees, and then applied it to undirected trees; they showed that their algorithm has time complexity $\bigO{n^{4\maxdegreeT}}$. While their algorithm solves {\tt MaxLA} for all undirected trees, its cost is worse than exponential. Now, when considering that the maximum degree of an unlabeled free tree is typically $\bigTh{\log{n}}$ \parencite{Goh1994a} the complexity of the algorithm by Nurse and De Vos reduces to $\bigO{n^{4\log{n}}}$ while the complexity of \cref{algo:bip_and_nonbip_arrangements:one_thistle_MaxLA} reduces to $\bigO{n^3 \cdot \log{n}}$. Therefore, given the low proportion of trees with high maximum degree, we expect our algorithm to perform better (both in the worst and typical cases) than that by Nurse and De Vos at the expense of not solving {\tt MaxLA} for all trees.

%------------------------------------------------%
% </automatic inline of '4-3-non-bipartite.tex'> %
%------------------------------------------------%
%----------------------------------------------%
% </automatic inline of '4-0-bipartition.tex'> %
%----------------------------------------------%
%---------------------------------------------------%
% <automatic inline of '5-0-classes-of-graphs.tex'> %
%---------------------------------------------------%
\section{{\tt MaxLA} for specific classes}
\label{sec:max_for_classes}

In this section we identify two classes of graphs for which the solution to {\tt MaxLA} can be calculated in time and space $\bigO{n}$: $k$-regular graphs $\kregularclass$ and $k$-linear trees ($k\le2$).

%---------------------------------------------%
% <automatic inline of '5-1-0-k-regular.tex'> %
%---------------------------------------------%
\subsection{$k$-regular graphs for $k\le 2$}
\label{sec:max_for_classes:k_regular}

We provide results for $k\le2$: $\kregularclass[0]$ are graphs of $n$ vertices and no edges; $\kregularclass[1]$ are the $n$-vertex graphs ($n$ even) that are the disjoint union of $n/2$ pairs on vertices linked by and edge; $\kregularclass[2]$ are the $n$-vertex graphs whose connected components are all cycle graphs.

The cases for $k\le1$ are trivial. First, the solution to {\tt MaxLA} for any $\kregularclass[0]$ has cost $0$. Second, the solution to {\tt MaxLA} for any $G\in\kregularclass[1]$ is simple.

\begin{theorem}
Let $G\in\kregularclass[1]$ be a $1$-regular graph. We have that
\begin{equation*}
\DMax{G} = \frac{n^2}{4},
\end{equation*}
and the maximum arrangement is bipartite with level signature $(+1,\dots,+1,-1,\dots,-1)$.
\end{theorem}
\begin{proof}
Given any $G\in\kregularclass[1]$, the vertices of each connected component can only have level values $+1$ and $-1$ (because all vertices have degree $1$). Thus the maximum arrangement of $G$ is obtained by merging the maximum arrangements of each connected component following \cref{propos:properties_max_arrs:known:components_arranged_maximally,propos:properties_max_arrs:known:non_increasing_levsig,propos:properties_max_arrs:known:permutation_of_equal_level}. The result is a bipartite arrangement with $n/2$ vertices of level $+1$, followed by $n/2$ vertices of level $-1$. Constructing such arrangement of any $g\in\kregularclass[1]$ can be done in time $\bigO{n}$ and space $\bigO{1}$. Notice that we do not apply {\tt Bipartite MaxLA} (\cref{algo:bip_and_nonbip_arrangements:maximal_bip:algorithm:Bipartite_MaxLA}) directly for $G$ since it is defined for {\em connected} bipartite graphs. It is easy to see that the cost of the arrangement is
\begin{equation*}
\DMax{G} = (n-1) + (n-3) + \cdots + 1 = \frac{n^2}{4}.
\end{equation*}
\qed
\end{proof}

%-----------------------------------------------%
% <automatic inline of '5-1-1-two-regular.tex'> %
%-----------------------------------------------%
\subsubsection{$2$-regular}
\label{sec:max_for_classes:k_regular:2_regular}

Here we solve {\tt MaxLA} for the class of $2$-regular graphs $\kregularclass[2]$. We first derive a simple formula for the value of {\tt MaxLA} for cycle graphs $\cycle\in\cycleclass\subseteq\kregularclass[2]$. This result is a simple application of the characterization of maximum arrangements in \cref{sec:properties_max_arrs:known,sec:properties_max_arrs:new} and we use it to solve {\tt MaxLA} more easily for disconnected $2$-regular graphs. We also characterize the arrangements that yield the maximum cost for any $\cycle\in\cycleclass$.

\begin{lemma}
\label{thm:max_for_classes:k_regular:2_regular:cycle}
Let $\cycle\in\cycleclass$ be a cycle graph of $n\ge3$ vertices. We have that
\begin{equation*}
\DMax{\cycle} = \left\lfloor \frac{n^2}{2} \right\rfloor = \DMax{\pathgraph} + 1
\end{equation*}
for any $\pathgraph\in\pathgraphclass$ (\cref{eq:introduction:DMax_path_graphs}), and
\begin{enumerate}[(i)]
\item \label{thm:max_for_classes:k_regular:2_regular:cycle:even} For $n$ even, maximum arrangements are always bipartite, that is, $\maxarrset{\cycle}=\maxbiparrset{\cycle}$ and have level signature $(+2,\dots,+2,-2,\dots,-2)$.
\item \label{thm:max_for_classes:k_regular:2_regular:cycle:odd} For $n$ odd, maximum arrangements are always non-bipartite that is, $\maxarrset{\cycle}=\maxnonbiparrset{\cycle}$; these arrangements contain only one thistle vertex and have level signature $(+2,\dots,+2,0,-2,\dots,-2)$.
\end{enumerate}
\end{lemma}
\begin{proof}
We construct a maximum arrangement with $k$ thistle vertices. Since the sum of all level values is $0$ (\cref{eq:preliminaries:sum_levels_equals_0}) and the only level values possible are $+2$, $-2$ and $0$, the number of vertices with level value $+2$ must be the same as the number of vertices with level value $-2$. Let $x$ be said number and note that $2x + k = n$. Thus when $n$ is odd then $k$ is odd, and when $n$ is even then $k$ is even.

We shall see that the construction satisfies \Nurse. Due to \cref{propos:properties_max_arrs:known:non_increasing_levsig}, and irrespective of the parity of $n$, thistle vertices will be placed at the center of the arrangement with $(n-k)/2$ vertices to both sides, those to the left have level value $+2$ and those to the right have level value $-2$ (\cref{fig:max_for_classes:k_regular:2_regular:cycle:arrangement_for_cycle}), thus forming three groups of vertices. By \cref{propos:properties_max_arrs:known:permutation_of_equal_level}, permuting the vertices in each level group does not change the cost of the arrangement.

\begin{figure}
	\centering
	\includegraphics{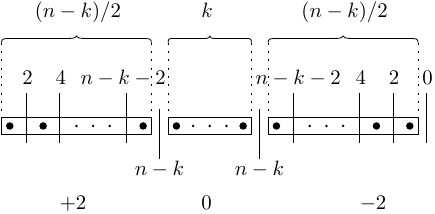}
	\caption{Proof of \cref{thm:max_for_classes:k_regular:2_regular:cycle}. An arrangement of a cycle graph $\cycle\in\cycleclass$ with $k$ thistle vertices in it. The vertices are distributed in a non-increasing order by level value, forming three groups of vertices `$+2$', `$0$', and `$-2$'. The number of vertices in each group is indicated atop the figure. Numbers directly above or below each vertical line indicates the cut width at that position of the arrangement.}
	\label{fig:max_for_classes:k_regular:2_regular:cycle:arrangement_for_cycle}
\end{figure}

To calculate the cost of an arrangement $\arr$ with $k$ thistle vertices (\cref{fig:max_for_classes:k_regular:2_regular:cycle:arrangement_for_cycle}) we only need to sum the cut widths (\cref{eq:preliminaries:arr_cost_as_sum_of_cuts}) which we know since these can be obtained using the level values (\cref{eq:preliminaries:cut_values:sum_of_levels}). Cut widths and level values are shown in \cref{fig:max_for_classes:k_regular:2_regular:cycle:arrangement_for_cycle}. The sum of all cut widths is
\begin{equation*}
\D[\arr][\cycle]
	= \sum_{i=1}^{(n - k)/2} 2i + (k - 1)(n - k) + \sum_{i=1}^{(n - k)/2} 2i = \frac{n^2 - k^2}{2}.
\end{equation*}
Now,
\begin{itemize}
\item[(\ref{thm:max_for_classes:k_regular:2_regular:cycle:even})] For $n$ even, the maximum cost is achieved at $k=0$, thus $\DMax{\cycle} = n^2/2$ and the arrangement is bipartite.
\item[(\ref{thm:max_for_classes:k_regular:2_regular:cycle:odd})] For $n$ odd, the maximum cost is achieved at $k=1$, thus $\DMax{\cycle} = (n^2 - 1)/2$ and the arrangement is necessarily non-bipartite since cycle graphs are not bipartite for odd $n$.
\end{itemize}
\qed
\end{proof}

Now we tackle the case the class of disconnected $2$-regular graphs $\kregularclass[2]\setminus\cycleclass$; to this aim, we use \cref{thm:max_for_classes:k_regular:2_regular:cycle} to obtain the maximum arrangement of each cycle in any such graph, and then merge these arrangements following \cref{propos:properties_max_arrs:known:components_arranged_maximally,propos:properties_max_arrs:known:non_increasing_levsig,propos:properties_max_arrs:known:permutation_of_equal_level} (\cref{algo:max_for_classes:k_regular:2_regular:disconnected_2_regular}).\footnote{Notice that this does not work in general. Consider the disjoint union of two path graphs of $2$ vertices each, say $a-b-c$ and $d-e-f$. The two possible arrangements constructed as in \cref{algo:max_for_classes:k_regular:2_regular:disconnected_2_regular} are $(b,d,f,a,c,e)$ with cost $14$, and $(b,d,e,f,a,c)$ with cost $12$.}

\begin{algorithm}
	\caption{{\tt MaxLA} for disconnected $2$-regular graphs in time and space $\bigO{n}$}
	\label{algo:max_for_classes:k_regular:2_regular:disconnected_2_regular}
	\DontPrintSemicolon
	
	\KwIn{$G$ a disconnected $2$-regular graph.}
	\KwOut{A maximum arrangement of $G$.}
	
	\SetKwProg{Fn}{Function}{ is}{end}
	\Fn{\textsc{MaxLA-Disconnected-$2$-Regular}$(G)$} {
		$\cycle_1, \dots, \cycle_p \gets$ All connected components of $G$. \tcp{Cost: $\bigO{n}$}
		
		$\maxarr_1, \dots, \maxarr_p \gets$ A maximum arrangement for each $\cycle_i$ (\cref{thm:max_for_classes:k_regular:2_regular:cycle}) \tcp{Cost: $\bigO{n}$}
		
		$\maxarr \gets$ Merge all arrangements $\maxarr_1, \dots, \maxarr_p$ as per \cref{propos:properties_max_arrs:known:components_arranged_maximally,propos:properties_max_arrs:known:non_increasing_levsig,propos:properties_max_arrs:known:permutation_of_equal_level} \tcp{Cost: $\bigO{n}$}
		\Return $\maxarr$
	}
\end{algorithm}

Next, we prove in \cref{thm:max_for_classes:k_regular:2_regular:disconnected_2_regular} the correctness of \cref{algo:max_for_classes:k_regular:2_regular:disconnected_2_regular}.
\begin{theorem}
\label{thm:max_for_classes:k_regular:2_regular:disconnected_2_regular}
Let $G\in\kregularclass[2]\setminus\cycleclass$ be any $n$-vertex disconnected $2$-regular graph. By definition, $G$ is the disjoint union of several cycle graphs $\cycle_1, \dots, \cycle_p$, with $p>1$.
\begin{enumerate}[(i)]
\item \label{thm:max_for_classes:k_regular:2_regular:disconnected_2_regular:level_signature} The set of maximum arrangements of $G$ all share the same level signature.
\item \label{thm:max_for_classes:k_regular:2_regular:disconnected_2_regular:construction} A maximum arrangement can be constructed in time and space $\bigO{n}$.
\item \label{thm:max_for_classes:k_regular:2_regular:disconnected_2_regular:types} When all $\cycle_i$ are bipartite, $\maxarrset{G}=\maxbiparrset{G}$. Conversely, when some of $\cycle_i$ is not bipartite, $\maxarrset{G}=\maxnonbiparrset{G}$; the number of thistle vertices equals the number of non-bipartite $\cycle_i$.
\end{enumerate}
\end{theorem}
\begin{proof}
~\\
\begin{enumerate}
\item[(\ref{thm:max_for_classes:k_regular:2_regular:disconnected_2_regular:level_signature})] Let $G$ be the disjoint union of several cycles. Notice that the level signature of the maximum linear arrangement of a cycle graph of even number of vertices is unique $(2,\dots,2,-2,\dots,-2)$ and so is of a cycle graph of odd number of vertices $(2,\dots,2,0,-2,\dots,-2)$. Thus if several maximum arrangements, each of a disjoint cycle graph, are combined following \cref{propos:properties_max_arrs:known:components_arranged_maximally,propos:properties_max_arrs:known:non_increasing_levsig,propos:properties_max_arrs:known:permutation_of_equal_level} there is also a unique non-decreasing level signature which give us necessarily the maximum linear arrangement.

\item[(\ref{thm:max_for_classes:k_regular:2_regular:disconnected_2_regular:construction})] The construction of maximum arrangement of $G$ can be done in time and space $\bigO{n}$ by (1) applying \cref{thm:max_for_classes:k_regular:2_regular:cycle} to each cycle individually (which leaves us with arrangements with vertex levels $+2$, $0$ and $-2$), and (2) combining all the resulting arrangements in a linear pass over all said arrangements.

\item[(\ref{thm:max_for_classes:k_regular:2_regular:disconnected_2_regular:types})] Obviously, if all $\cycle_i$ are bipartite, the maximum arrangements of the cycles will all be bipartite, and so will be their combination. Only one non-bipartite cycle is required to make the combination non-bipartite. The number of thistles in the maximum arrangement of $G$ equals the number of non-bipartite cycles since these are the only ones that produce cycles in their individual maximum arrangement, which is kept in their combination.
\end{enumerate}
\qed
\end{proof}
%------------------------------------------------%
% </automatic inline of '5-1-1-two-regular.tex'> %
%------------------------------------------------%

%----------------------------------------------%
% </automatic inline of '5-1-0-k-regular.tex'> %
%----------------------------------------------%
%--------------------------------------------%
% <automatic inline of '5-2-0-k-linear.tex'> %
%--------------------------------------------%
\subsection{$k$-linear trees for $k\le 2$}
\label{sec:max_for_classes:k_linear}

$k$-linear $\klinearclass$ trees are trees whose vertices of degree $\ge3$ all lie on a single induced path \parencite{Johnson2020a}. Concerning $k$-linear trees, we provide results for $k\le2$: path graphs $\pathgraphclass=\klinearclass[0]$, spider graphs $\spiderclass=\klinearclass[1]$, and $2$-linear trees $\twolinearclass=\klinearclass[2]$. A spider graph is a subdivision of a star graph (\cref{fig:max_for_classes:k_linear:spider_two_linear_drawing}(a)); when $n\ge3$, the only vertex of degree $3$ or higher is called the {\em hub}, denoted as $\h$. Moreover, a maximal path that starts at $\h$ and ends at a leaf is called a {\em leg} of the spider \parencite{Bennett2019a}. Spider graphs are also known as {\em generalized stars} and the legs are also known as {\em arms} \parencite{Johnson2018a}. $2$-linear trees are trees where there are exactly two vertices of degree $3$ or more \parencite{Johnson2020a} (\cref{fig:max_for_classes:k_linear:spider_two_linear_drawing}(b)). We call these vertices the hubs denoted, respectively, $\h$ and $\g$ where, w.l.o.g., we assume that $\degree{\h}\ge\degree{\g}$. We call the path of vertices between $\h$ and $\g$ the {\em bridge} of the $2$-linear tree; we also call each maximal path of vertices starting at a hub (excluding the bridge) a {\em leg} of that hub.\footnote{$2$-linear trees can be seen as two spider graphs joined by a path connecting both hubs.}

In spite of being vanishingly small among all $n$-vertex trees as $n$ tends to infinity, spider graphs and $2$-linear trees are interesting due to their number being exponential in $n$ \parencite{Johnson2020a,OEIS_kLinearTrees,OEIS_SpiderTrees}.

\begin{figure}
	\centering
	\includegraphics{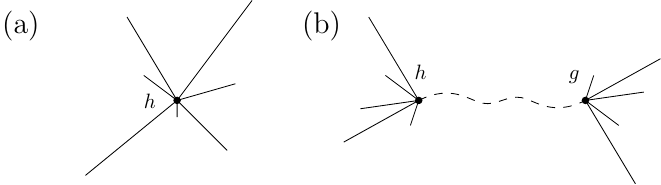}
	\caption{(a) A spider graph. Solid lines emanating from the hubs represent the legs of the spider. (b) A $2$-linear tree. Solid lines indicate the legs of the `spiders' whose hubs are $\h$ and $\g$, with $\degreeh,\degreeg\ge 3$. The dashed curve joining the two hubs $\h$ and $\g$ depicts the bridge of the $2$-linear tree, a branchless path where the endpoints have degree $\ge3$.}
	\label{fig:max_for_classes:k_linear:spider_two_linear_drawing}
\end{figure}

%-----------------------------------------------%
% <automatic inline of '5-2-1-path-graphs.tex'> %
%-----------------------------------------------%
\subsubsection{$0$-linear: path graphs}
\label{sec:max_for_classes:k_linear:path}

In \cref{sec:appendix:proof:classes:path_graphs}, we prove \cref{eq:introduction:DMax_path_graphs} as a warm up exercise for the next two sections. The proof uses arguments presented in this article, which ease and shorten the proof compared to previous results \parencite[Appendix A]{Ferrer2021a}.

\begin{corollary}
\label{cor:max_for_classes:k_linear:path}
Let $\pathgraph\in\pathgraphclass$ be an $n$-vertex path graph. $\pathgraph$ is maximizable only by bipartite arrangements, more formally, $\maxarrset{\pathgraph}=\maxbiparrset{\pathgraph}$, and \cref{eq:introduction:DMax_path_graphs} holds.
\end{corollary}
%------------------------------------------------%
% </automatic inline of '5-2-1-path-graphs.tex'> %
%------------------------------------------------%
%------------------------------------------------%
% <automatic inline of '5-2-2-spider-trees.tex'> %
%------------------------------------------------%
\subsubsection{$1$-linear: spider graphs}
\label{sec:max_for_classes:k_linear:spider}

In this section we prove that MaxLA is solvable in time $\bigO{n}$ for spider graphs. For this, we first prove that the hub of a spider graph cannot be a thistle in a maximum arrangement (\cref{thm:max_for_classes:k_linear:spider:no_nonbipartite_arrs}). The proof is given in \cref{sec:appendix:proof:classes:spiders}. Finally, we prove that maximum arrangements of spider graphs are always bipartite (\cref{cor:max_for_classes:k_linear:spider:linear_time_algo}).

% ------------------------------------------------------------------------------
\begin{theorem}
\label{thm:max_for_classes:k_linear:spider:no_nonbipartite_arrs}
Let $\spider\in\spiderclass$ be a spider graph and let $\maxarr$ be any of its maximum arrangements. The hub cannot be a thistle vertex in $\maxarr$.
\end{theorem}
% ------------------------------------------------------------------------------

Notice that $k$-quasi star graphs and quasi star graphs (introduced in \cref{sec:introduction}) are subclasses of $\spiderclass$. Finally, we prove the main claim of this section.

\begin{corollary}
\label{cor:max_for_classes:k_linear:spider:linear_time_algo}
All maximum arrangements of any $\spider\in\spiderclass$ are bipartite, therefore, {\tt MaxLA} is solvable in time and space $\bigO{n}$ for any $\spider\in\spiderclass$.
\end{corollary}
\begin{proof}

From \cref{thm:max_for_classes:k_linear:spider:no_nonbipartite_arrs} the hub cannot be a thistle in a maximum arrangement. From \pathoptimization{} we know that vertices of degree $2$ cannot be thistles in a maximum arrangement either; recall that leaves cannot be thistles. Therefore, we can conclude that no vertex of a $\spider\in\spiderclass$ can be a thistle in a maximum arrangement and thus $\maxarrset{\spider}=\maxbiparrset{\spider}$. That is, every maximum arrangement of all spider graphs is a maximal bipartite arrangement. From \cref{thm:bip_and_nonbip_arrangements:maximal_bip:algorithm} (and \cref{algo:bip_and_nonbip_arrangements:maximal_bip:algorithm:Bipartite_MaxLA}) we know that maximal bipartite arrangements can be constructed in time and space $\bigO{n}$ for all trees.
\qed
\end{proof}
%-------------------------------------------------%
% </automatic inline of '5-2-2-spider-trees.tex'> %
%-------------------------------------------------%
%----------------------------------------------------%
% <automatic inline of '5-2-3-two-linear-trees.tex'> %
%----------------------------------------------------%
\subsubsection{$2$-linear trees}
\label{sec:max_for_classes:k_linear:two_linear}

Now we prove that {\tt MaxLA} is polynomial-time solvable for all $2$-linear trees. We first prove that the hubs cannot be thistle vertices in maximum arrangements, which we use later to prove that {\tt MaxLA} is solvable in time $\bigO{n}$ for $2$-linear trees. To make the article more streamlined, the proof of \cref{thm:max_for_classes:k_linear:two_linear:single_thistle} is found in \cref{sec:appendix:proof:classes:two_linear_trees}.

% ------------------------------------------------------------------------------
\begin{theorem}
\label{thm:max_for_classes:k_linear:two_linear:single_thistle}
Let $\twolinear\in\twolinearclass$ be a $2$-linear tree and let $\maxarr$ be any of its maximum arrangements. There is at most one thistle vertex in $\maxarr$ and said vertex is located in the bridge of $\twolinear$.
\end{theorem}
% ------------------------------------------------------------------------------

It is easy to see now that {\tt MaxLA} is solvable for any $\twolinear\in\twolinearclass$ in time $\bigO{n}$ based on the characterization of its maximum arrangements given in \cref{thm:max_for_classes:k_linear:two_linear:single_thistle} and \pathoptimization. \cref{algo:max_for_classes:k_linear:two_linear:n_squared} simply takes the maximum between a maximal bipartite arrangement and a maximal non-bipartite arrangement where the thistle vertex is an arbitrary vertex in the bridge of $\twolinear$. This is formalized in \cref{cor:max_for_classes:k_linear:two_linear:MaxLA_solvable_n}.

\begin{corollary}
\label{cor:max_for_classes:k_linear:two_linear:MaxLA_solvable_n}
{\tt MaxLA} is solvable in time and space $\bigO{n}$ for any $\twolinear\in\twolinearclass$.
\end{corollary}
\begin{proof}
Let $\twolinear\in\twolinearclass$ be a $2$-linear tree. A maximum arrangement is the arrangement with the higher cost between a maximal bipartite arrangement and a maximal non-bipartite arrangement. Recall that maximal bipartite arrangements of $n$-vertex connected bipartite graphs can be constructed in time and space $\bigO{n}$ (\cref{thm:bip_and_nonbip_arrangements:maximal_bip:algorithm,algo:bip_and_nonbip_arrangements:maximal_bip:algorithm:Bipartite_MaxLA}).

It remains to show that the construction of a maximal non-bipartite arrangement of $\twolinear$ can be done in time $\bigO{n}$. \cref{thm:max_for_classes:k_linear:two_linear:single_thistle} states that the only potential thistle vertex of $\twolinear$ is located in its bridge, a branchless path. Recall that a maximal non-bipartite arrangement of a single thistle vertex of degree $2$ can be constructed in time and space $\bigO{n}$ (\cref{thm:bip_and_nonbip_arrangements:maximal_nonbip_one_thistle,algo:bip_and_nonbip_arrangements:NonBipartite_MaxLA_one_thistle}). Therefore, the time and space needed to construct a maximal non-bipartite arrangement of $\twolinear$ is $\bigO{n}$.
\qed
\end{proof}

\begin{algorithm}
	\caption{{\tt MaxLA} for every $2$-linear trees in time $\bigO{n}$ and space $\bigO{n}$.}
	\label{algo:max_for_classes:k_linear:two_linear:n_squared}
	\DontPrintSemicolon
	
	\KwIn{$\twolinear\in\twolinearclass$ a $2$-linear tree.}
	\KwOut{A maximum arrangement of $\twolinear$.}
	
	\SetKwProg{Fn}{Function}{ is}{end}
	\Fn{\textsc{MaximumArrangementTwoLinear}$(\twolinear)$} {
		\tcp{Maximal bipartite arrangement of $\twolinear$. \cref{algo:bip_and_nonbip_arrangements:maximal_bip:algorithm:Bipartite_MaxLA}.}
		$\maxbiparr \gets$ \textsc{BipartiteMaxLA}$(\twolinear)$ \tcp{Cost: $\bigO{n}$}
		
		$\thistlev\gets$ an arbitrary internal vertex from the bridge of $\twolinear$ \tcp{Cost: $\bigO{n}$}
		
		\tcp{Maximal non-bipartite arrangement of $\twolinear$ with exactly one thistle. \cref{algo:bip_and_nonbip_arrangements:NonBipartite_MaxLA_one_thistle}.}
		$\maxnonbiparr\gets$\textsc{1-Given-Thistle-MaxLA}$(\twolinear, \thistlev)$ \tcp{Cost: $\bigO{n}$}
		\tcp{Return the maximum arrangement, breaking ties arbitrarily.}
		
		\Return max$(\twolinear;\maxbiparr,\maxnonbiparr)$ \tcp{Cost: $\bigO{n}$}
	}
\end{algorithm}
%-----------------------------------------------------%
% </automatic inline of '5-2-3-two-linear-trees.tex'> %
%-----------------------------------------------------%
%---------------------------------------------%
% </automatic inline of '5-2-0-k-linear.tex'> %
%---------------------------------------------%
%----------------------------------------------------%
% </automatic inline of '5-0-classes-of-graphs.tex'> %
%----------------------------------------------------%
%-------------------------------------------%
% <automatic inline of '6-conclusions.tex'> %
%-------------------------------------------%
\section{Conclusions}
\label{sec:conclusions}

In this paper we have tackled {\tt MaxLA} using different approaches.

In \cref{sec:properties_max_arrs:new}, we have put forward new properties of maximum arrangements of graphs: we have shown that the vertices of antennas cannot be thistle vertices in maximum arrangements (\pathoptimizationO), and that at most one of the internal vertices of bridges can be a thistle vertex in a maximum arrangement (\pathoptimizationOO). In \alternation{} we have specified how the level values are distributed along the vertices of a branchless path and in \propagation{} we have shown that the level values of vertices in a branchless path can be expressed as a function of the level values of other vertices.
Solving {\tt MaxLA} for graphs containing sufficiently long paths is now easier.

In \cref{sec:bip_and_nonbip_arrangements}, we have presented a division of the $n!$ arrangements of any graph into {\em bipartite} and {\em non-bipartite} arrangements, which encompasses previous known solutions of {\tt MaxLA}. In \cref{sec:bip_and_nonbip_arrangements:maximal_bip}, we have devised an optimal algorithm to solve {\tt Bipartite MaxLA} for any connected bipartite graph in time $\bigO{n}$. In \cref{sec:bip_and_nonbip_arrangements:maximal_bip:relationship}, we have improved the $2$-approximation factor for {\tt bipartite MaxLA} by \textcite{Hassin2001a} down to $3/2$ in trees. In \cref{sec:bip_and_nonbip_arrangements:maximal_nonbip}, we have presented some properties of maximal non-bipartite arrangements (\cref{propos:bip_and_nonbip_arrangements:maximal_nonbip:near_Nurse,lemma:bip_and_nonbip_arrangements:maximal_nonbip:one_thistle_and_neighbors}); we have also introduced a new variant of {\tt MaxLA} called {\tt 1-thistle MaxLA} and devised an algorithm \cref{algo:bip_and_nonbip_arrangements:one_thistle_MaxLA,thm:bip_and_nonbip_arrangements:1_thistle_MaxLA}) to solve it for trees that runs in time $\bigO{n^2\maxdegreeT2^{\maxdegreeT}}$; typically, over $\freetrees$, this algorithm runs in time $\bigO{n^3 \cdot \log{n}}$.

Concerning results on specific classes of graphs, we applied, in \cref{sec:max_for_classes}, the new characterization presented in \cref{sec:properties_max_arrs:new} to solve {\tt MaxLA} in four classes of graphs. We have (1) proven that cycle graphs are maximizable only by bipartite (resp. non-bipartite) arrangements when $n$ is even (resp. odd) (\cref{thm:max_for_classes:k_regular:2_regular:cycle}) and used it to prove that {\tt MaxLA} can be solved in time and space $\bigO{n}$ for disconnected $2$-regular graphs (\cref{thm:max_for_classes:k_regular:2_regular:disconnected_2_regular}), (2) proven that $k$-linear trees with $k \in \{0, 1\}$, that is path graphs and spider graphs, are maximizable only by bipartite arrangements (\cref{cor:max_for_classes:k_linear:path}, and \cref{thm:max_for_classes:k_linear:spider:no_nonbipartite_arrs,cor:max_for_classes:k_linear:spider:linear_time_algo} respectively), (3) proven that {\tt MaxLA} can be solved in time $\bigO{n}$ for $2$-linear trees (\cref{thm:max_for_classes:k_linear:two_linear:single_thistle,cor:max_for_classes:k_linear:two_linear:MaxLA_solvable_n}).
%--------------------------------------------%
% </automatic inline of '6-conclusions.tex'> %
%--------------------------------------------%
%-------------------------------------------%
% <automatic inline of '7-future-work.tex'> %
%-------------------------------------------%
\section{Future work}
\label{sec:future_work}

In this paper we could not solve {\tt MaxLA} completely, and there is still more work to be done to find a polynomial time algorithm that solves {\tt MaxLA} for trees, should such an algorithm exist. Future research should strive to devise said algorithm. Other interesting lines of research are the generalization of \cref{algo:bip_and_nonbip_arrangements:maximal_bip:algorithm:Bipartite_MaxLA} to disconnected bipartite graphs which may help improve the complexity of {\tt 1-thistle MaxLA}. Furthermore, notice that {\tt bipartite MaxLA} and {\tt 1-thistle MaxLA} can be seen as particular cases of {\tt$k-$thistle MaxLA}, for $k=0$ and $k=1$, respectively, where the goal is to find a maximum arrangement among those with exactly $k$ thistles. Future work should include the investigation of {\tt $k-$thistle MaxLA} for $k\ge2$. Our preliminary evidence suggests that $k$ cannot be too high with respect to $n$ in order to offer a competitive approximation to unconstrained {\tt MaxLA} (notice that a large number of thistle vertices would jeopardize the ability to reach the unconstrainted optimum; it is easy to see that if many vertices are thistles, very long edges cannot be formed).

Although we improved the approximation factor by \textcite{Hassin2001a} in trees, future work should investigate tighter upper bounds on the approximation factor of {\tt bipartite MaxLA}. It should be also investigated whether the combination of {\tt bipartite MaxLA} and {\tt 1-thistle MaxLA} is advantageous empirically with respect to the algorithm by Nurse and De Vos for {\tt MaxLA} since the combination has lower running time and may solve {\tt MaxLA} frequently enough. Therefore, future work should characterize the trees that are maximizable by {\tt $k-$thistle MaxLA} with $k \in \{0,1\}$ and evaluate the practical impact of the present work, for instance, the frequency with which $k=0$ solves {\tt MaxLA} and the frequency with which $k \leq 1$ solves {\tt MaxLA}. Although we could not find a polynomial time algorithm for {\tt MaxLA} on trees, our work has paved the way for speeding up brute force exact algorithms (such as Constraint Satisfaction models) by applying the new properties presented in this article.
%--------------------------------------------%
% </automatic inline of '7-future-work.tex'> %
%--------------------------------------------%

\section*{Declarations}

\paragraph{Funding} The authors' research is supported by a recognition 2021SGR-Cat (01266 LQMC) from AGAUR (Generalitat de Catalunya). LAP is supported by Secretaria d'Universitats i Recerca de la Generalitat de Catalunya and the Social European Fund and a Ph.D. contract extension funded by Banco Santander. JLE is supported by grant PID2022-138506NB-C22 funded by Agencia Estatal de Investigación. LAP and RFC are supported by the grant PID2024-155946NB-I00 funded by Ministerio de Ciencia, Innovación y Universidades (MICIU), Agencia Estatal de Investigación (AEI/10.13039/501100011033) and the European Social Fund Plus (ESF+).

\paragraph{Data availability} There exists no data associated with this manuscript.

\paragraph{Conflict of interest} The authors declare that they have no conflict of interest.

\appendix
%-------------------------------------------------------%
% <automatic inline of '8A-proof-properties-known.tex'> %
%-------------------------------------------------------%
\section{Proofs of known properties}
\label{sec:appendix:proof:properties:known}

\begin{proof}[\cref{lemma:properties_max_arrs:known:vertex_swap}]
To prove \cref{lemma:properties_max_arrs:known:vertex_swap}, we use \cref{fig:proof_vertex_swap_lemma:illustration} as a visual aid to the proof of each item. As in its formulation, let $\graph=(V,E)$ be any graph where $V=\{u_1,\dots,u_n\}$, and $\arr$ an arrangement of its vertices
\begin{equation*}
\arr=(u_1,\dots,u_{i-1},v,u_{i+1},\dots,u_{j-1},w,u_{j+1},\dots,u_n)
\end{equation*}
where $v=u_i$ and $w=u_j$ for two fixed $i,j\in[n]$, and let
\begin{equation*}
\arr'=(u_1,\dots,u_{i-1},w,u_{i+1},\dots,u_{j-1},v,u_{j+1},\dots,u_n)
\end{equation*}
be the resulting arrangement of swapping vertices $w$ and $v$ and leaving the other vertices in their original positions.

\begin{figure}
	\centering
	\includegraphics{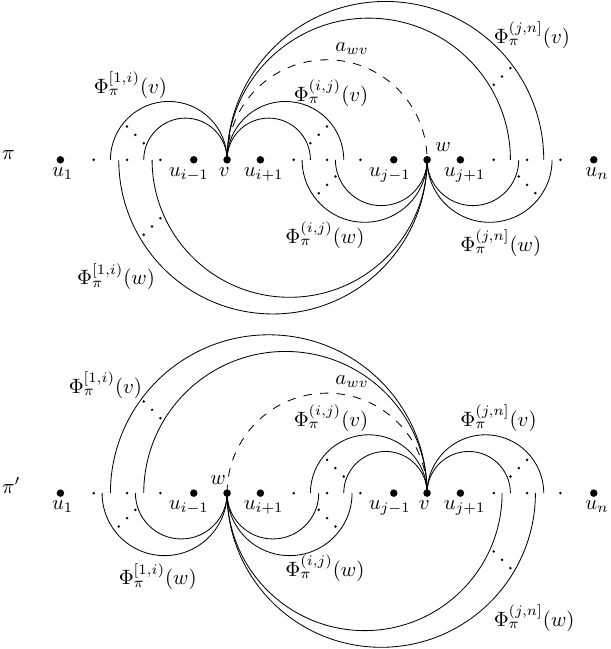}
	\caption{\cref{lemma:properties_max_arrs:known:vertex_swap}.}
	\label{fig:proof_vertex_swap_lemma:illustration}
\end{figure}

We first prove the equations related to level values.

\begin{itemize}
\item
	\begin{proof}[\cref{eq:properties_max_arrs:known:vertex_swap:levels:left,eq:properties_max_arrs:known:vertex_swap:levels:right}]
	It is easy to see that the levels of the vertices in $[1,i) \cup (j,n]$ do not change: the only vertices that change positions are $v$ and $w$ but these do not perturb the directional degree of any vertex in $[1,i) \cup (j,n]$.
	\qed
	\end{proof}

\item
	\begin{proof}[\cref{eq:properties_max_arrs:known:vertex_swap:levels:middle}]
	For this, simply consider any vertex $u_k$, for $k\in(i,j)$, and the four cases of existence of edge $vu_k$ and of edge $wu_k$:
	\begin{alignat*}{3}
	(1) & \quad &    a_{vu_k}=0, & \quad a_{wu_k}=0,    & \qquad \level{u_k}[\arr'] - \level{u_k} &= 0, \\
	(2) & \quad &    a_{vu_k}=0, & \quad a_{wu_k}=1,    & \qquad \level{u_k}[\arr'] - \level{u_k} &= -2, \\
	(3) & \quad &    a_{vu_k}=1, & \quad a_{wu_k}=0,    & \qquad \level{u_k}[\arr'] - \level{u_k} &= +2, \\
	(4) & \quad &    a_{vu_k}=1, & \quad a_{wu_k}=1,    & \qquad \level{u_k}[\arr'] - \level{u_k} &= 0.
	\end{alignat*}
	With these results, \cref{eq:properties_max_arrs:known:vertex_swap:levels:middle} follows easily.
	\qed
	\end{proof}

\item
	\begin{proof}[\cref{eq:properties_max_arrs:known:vertex_swap:levels:w}]
	First, notice that
	\begin{align}
	\label{eq:proof_vertex_swap_lemma:w:arr1}
	\level{w}        &= |\neighborsarr{w}{(j,n]}| - \left(a_{vw} + |\neighborsarr{w}{(i,j)}| + |\neighborsarr{w}{[1,i)}|\right) \\
	\label{eq:proof_vertex_swap_lemma:w:arr2}
	\level{w}[\arr'] &= |\neighborsarr{w}{(j,n]}| + a_{vw} + |\neighborsarr{w}{(i,j)}| - |\neighborsarr{w}{[1,i)}|.
	\end{align}
	Then, it is easy to see that \cref{eq:properties_max_arrs:known:vertex_swap:levels:w} follows from the difference of \cref{eq:proof_vertex_swap_lemma:w:arr2} and \cref{eq:proof_vertex_swap_lemma:w:arr1}.
	\qed
	\end{proof}

\item
	\begin{proof}[\cref{eq:properties_max_arrs:known:vertex_swap:levels:v}]
	As before, first notice that
	\begin{align}
	\label{eq:proof_vertex_swap_lemma:v:arr1}
	\level{v}        &= |\neighborsarr{v}{(j,n]}| + a_{vw} + |\neighborsarr{v}{(i,j)}| - |\neighborsarr{v}{[1,i)}| \\
	\label{eq:proof_vertex_swap_lemma:v:arr2}
	\level{v}[\arr'] &= |\neighborsarr{v}{(j,n]}| - \left(a_{vw} + |\neighborsarr{v}{(i,j)}| + |\neighborsarr{v}{[1,i)}| \right).
	\end{align}
	Then, it is easy to see that \cref{eq:properties_max_arrs:known:vertex_swap:levels:v} follows from the difference of \cref{eq:proof_vertex_swap_lemma:v:arr2} and \cref{eq:proof_vertex_swap_lemma:v:arr1}.
	\qed
	\end{proof}

Now we prove the equations related to cut widths.

\item
	\begin{proof}[\cref{eq:properties_max_arrs:known:vertex_swap:cuts:left,eq:properties_max_arrs:known:vertex_swap:cuts:right}]
	It is easy to see that the edges in the cuts in the positions in $[1,i)$ and in $[j,n)$ do not change after swapping vertices $v$ and $w$.
	\qed
	\end{proof}

\item
	\begin{proof}[\cref{eq:properties_max_arrs:known:vertex_swap:cuts:middle}]
	First, recall that (\cref{eq:preliminaries:cut_values:sum_of_levels}), for any $k\in[i,j)$,
	\begin{align*}
	\cut{k}        &= \level{u_1} +        \cdots + \level{u_{i-1}} +        \level{v} +        \level{u_{i+1}} +        \cdots + \level{u_k} \\
	\cut{k}[\arr'] &= \level{u_1}[\arr'] + \cdots + \level{u_{i-1}}[\arr'] + \level{w}[\arr'] + \level{u_{i+1}}[\arr'] + \cdots + \level{u_k}[\arr'].
	\end{align*}
	Applying \cref{eq:properties_max_arrs:known:vertex_swap:levels:left,eq:properties_max_arrs:known:vertex_swap:levels:middle} we get that
	\begin{equation*}
	\cut{k}[\arr'] - \cut{k}
		= \level{w}[\arr'] - \level{v} + 2\sum_{p=i+1}^k (a_{vu_p} - a_{wu_p}).
	\end{equation*}
	By further applying \cref{eq:properties_max_arrs:known:vertex_swap:levels:w} and regrouping terms inside the summation we get that
	\begin{equation*}
	\cut{k}[\arr'] - \cut{k}
		= \level{w} - \level{v} + 2\left(a_{vw} + |\neighborsarr{w}{(i,j)}| + |\neighborsarr{v}{(i,k]}| - |\neighborsarr{w}{(i,k]}|\right).
	\end{equation*}
	\qed
	\end{proof}

\end{itemize}
\qed
\end{proof}

\begin{proof}[\cref{propos:properties_max_arrs:known:components_arranged_maximally}]
By contrapositive. Let $\arr$ be an arrangement of $G$ where $\arr(H_i)$, the arrangement $\arr$ restricted to $H_i$, is not maximum. Then we can choose an arrangement of $H_i$ with higher cost than that of $\arr(H_i)$, and rearrange the vertices of $H_i$ within $\arr$ accordingly. The resulting arrangement has a higher cost than $\arr$ thus contradicting its maximality.
\qed
\end{proof}

The proofs of the remaining properties are all by contradiction.

\begin{proof}[\cref{propos:properties_max_arrs:known:non_increasing_levsig}]
Consider a maximum arrangement $\arr$ of $\graph$ where, by way of contradiction, there are two vertices $v$ and $w$ such that $\level{v} < \level{w}$ and $\arr(v) + 1 = \arr(w)$. Let $i=\arr(v)$ and $j=\arr(w)$. Swap vertices $v$ and $w$ to obtain $\arr'$. Due to \cref{lemma:properties_max_arrs:known:vertex_swap} we have that
\begin{alignat*}{2}
\cut{k}[\arr'] - \cut{k} &= 0,                               & \quad & \forall           k\in[1,i)\cup[j,n] \\
\cut{i}[\arr'] - \cut{i} &= \level{w} - \level{v} + 2a_{wv}. & \quad &
\end{alignat*}
Since $\level{v}<\level{w}$, we have that $\cut{i}[\arr'] - \cut{i}>0$. Therefore $\D[\arr'] > \D$ (\cref{eq:preliminaries:arr_cost_as_sum_of_cuts}).
\qed
\end{proof}

\begin{proof}[\cref{propos:properties_max_arrs:known:no_neighs_in_same_level}]
Consider a maximum arrangement $\arr$ of $\graph$,
\begin{equation*}
\arr = (u_1,\dots,u_{a-1},u_a,\dots,u_{i-1},v,u_{i+1}, \dots, u_{j-1},w,u_{j+1},\dots,u_b,u_{b+1},\dots,u_n),
\end{equation*}
in which $\level{u_{a-1}}>\level{u_a}=\level{u_b}>\level{u_{b+1}}$ for some $a,b$ such that $1\le a\le i<j\le b\le n$, and assume that $vw\in E(\graph)$. By \cref{propos:properties_max_arrs:known:non_increasing_levsig} we have that $\level{u_a}=\cdots=\level{u_b}$.

First, consider the case in which $v$ and $w$ have no neighbors in the interval $(i,j)$. Swapping $v$ and $w$ produces
\begin{equation*}
\arr' = (u_1,\dots,u_{a-1},u_a,\dots,u_{i-1},w,u_{i+1}, \dots, u_{j-1},v,u_{j+1},\dots,u_b,u_{b+1},\dots,u_n)
\end{equation*}
in which, by \cref{lemma:properties_max_arrs:known:vertex_swap},
\begin{equation*}
\cut{k}[\arr'] - \cut{k} = 0   \;\; \forall k\in[1,i), \quad
\cut{k}[\arr'] - \cut{k} = 2   \;\; \forall k\in[i,j), \quad \text{and} \quad
\cut{k}[\arr'] - \cut{k} = 0   \;\; \forall k\in[j,n].
\end{equation*}
Then $\cut{k}[\arr'] > \cut{k}$ for all $k\in[i,j)$ which implies that $\D[\arr'] > \D$ (\cref{eq:preliminaries:arr_cost_as_sum_of_cuts}).

In case either of $v$ or $w$ have neighbors in the interval $(i,j)$ then there exists another pair of vertices $u_p$ and $u_q$, such that $|p-q|<|i-j|$, both $p,q\in[i,j]$, $u_pu_q\in E(\graph)$ and vertices $u_p$ and $u_q$ have no neighbors in the interval $(p,q)$. We can apply the argument above to this pair of vertices to reach the same contradiction.
\qed
\end{proof}

\begin{proof}[\cref{propos:properties_max_arrs:known:permutation_of_equal_level}]
Consider a maximum arrangement $\maxarr$ of $\graph$,
\begin{equation*}
\arr = (u_1,\dots,u_{p-1},u_p,\dots,u_{i-1},v,u_{i+1}, \dots, u_{j-1},w,u_{j+1},\dots,u_q,u_{q+1},\dots,u_n),
\end{equation*}
in which $\level{u_{p-1}}>\level{u_p}=\level{u_q}>\level{u_{q+1}}$ for some $1\le p<q\le n$. By \cref{propos:properties_max_arrs:known:non_increasing_levsig}, the vertices $u_p,\dots,u_q$ have the same level value, that is, $\level{u_p}=\cdots=\level{u_q}$. Due to \cref{propos:properties_max_arrs:known:no_neighs_in_same_level}, $a_{vw}=0$, and also $v$ and $w$ do not have neighbors in the interval of positions $(i,j)$ in $\arr$. Let $\arr'$ be the result of swapping $v$ and $w$ in $\arr$. Due to \cref{lemma:properties_max_arrs:known:vertex_swap}, $\cutsignatureG[\arr']=\cutsignatureG$ and therefore, due to \cref{eq:preliminaries:arr_cost_as_sum_of_cuts}, $\D[\arr'] = \D$.
\qed
\end{proof}
%--------------------------------------------------------%
% </automatic inline of '8A-proof-properties-known.tex'> %
%--------------------------------------------------------%
%----------------------------------------------------%
% <automatic inline of '8B-proof-classes-paths.tex'> %
%----------------------------------------------------%
\section{Proof of \cref{eq:introduction:DMax_path_graphs}}
\label{sec:appendix:proof:classes:path_graphs}

\begin{proof}
In a maximum arrangement $\arr$ for $\pathgraph\in\pathgraphclass$, the non-leaf vertices must have level $\pm2$ (\pathoptimizationO); the leaves of $\pathgraph$ can only have level $\pm1$ (by definition). Also, recall that the level values must alternate sign (\alternationib). Therefore, we have that $|\level{v}|=\degree{v}$ for all vertices $v$, and thus every maximum arrangement is a maximal bipartite arrangement, hence $\maxarrset{\pathgraph}=\maxbiparrset{\pathgraph}$.

We derive an expression for the value of $\DMax{\pathgraph}$ in a similar way to how we did it for cycle graphs (\cref{thm:max_for_classes:k_regular:2_regular:cycle}). Due to \Nurse, the vertices have to be arranged in $\arr$ by non-increasing level value, breaking ties arbitrarily, and there are no two adjacent vertices in the tree with the same level value. For $n$ even,
\begin{equation*}
\DMax{\pathgraph}
	= \sum_{i=1}^{(n-2)/2} 2i + (n-1) + \sum_{i=1}^{(n-2)/2} 2i
	= \frac{n^2}{2} - 1,
\end{equation*}
and for $n$ odd,
\begin{equation*}
\DMax{\pathgraph}
	= \sum_{i=1}^{\lceil(n-2)/2\rceil} 2i + (n-2) + \sum_{i=1}^{\lfloor(n-2)/2\rfloor} 2i
	= \frac{n^2 - 1}{2} - 1,
\end{equation*}
hence \cref{eq:introduction:DMax_path_graphs}.
\qed
\end{proof}

\cref{eq:introduction:DMax_path_graphs} had been derived previously  \parencite[Appendix A]{Ferrer2021a}. The value was obtained by constructing a maximum arrangement using a result by \textcite{Chao1992a} which characterizes maximum and minimum permutations $\bm{\alpha}$ of $k$ numbers $A_k=\{\alpha_1,\dots,\alpha_k\}$ valued with a linear, convex or concave (strictly) increasing function $f$. The cost of a permutation was defined by \textcite{Chao1992a} as
\begin{equation*}
L_f(\bm{\alpha}) = \sum_{i=1}^{k-1} f(|\alpha_i - \alpha_{i+1}|).
\end{equation*}
The construction by \textcite[Appendix A]{Ferrer2021a} to calculate $\DMax{\pathgraph}$ consisted of setting $\{\alpha_1,\dots,\alpha_k\}$ to $\{1,\dots,n\}$ (where number $i$ denotes the $i$-th vertex of $\pathgraph$, and $\{i,i+1\}\in E(\pathgraph)$, $1\le i<n$) and setting $f$ to the identity function $g(x)=x$. This construction yielded a maximal bipartite arrangement.
%-----------------------------------------------------%
% </automatic inline of '8B-proof-classes-paths.tex'> %
%-----------------------------------------------------%
%------------------------------------------------------%
% <automatic inline of '8C-proof-classes-spiders.tex'> %
%------------------------------------------------------%
\section{Proof of \cref{thm:max_for_classes:k_linear:spider:no_nonbipartite_arrs}}
\label{sec:appendix:proof:classes:spiders}

Prior to proving that spider graphs are maximizable only by bipartite arrangements we define a simple, yet useful tool concerning the displacement of a single vertex.

\paragraph{Moving a single vertex} Here we want to study the rate of change of the sum of the lengths of the edges incident to a vertex $u$ when it is moved to the left or to the right in the arrangement (by consecutive swaps with its neighboring vertex), assuming that $u$ is not adjacent to any of the vertices it is swapped with. We denote this sum as
\begin{equation*}
\iD{u} = \sum_{uw\in E} \dl{uw}.
\end{equation*}
In any arrangement $\arr = (\dots, u, v, \dots)$ of a graph $\graph=(V,E)$, where $uv\notin E$, moving $u$ one position to the right (that is, swapping $u$ and $v$) produces an arrangement $\arr' = (\dots, v, u, \dots)$ such that
\begin{equation*}
\iD{u}[\arr'] - \iD{u} = -\level{u}.
\end{equation*}

% ------------------------------------------------------------------------------
We now show that maximum arrangements of spider graphs are all bipartite arrangements. Let $\spider\in\spiderclass$ be any spider graph. Let $\h$ be its hub, the only vertex such that $\degreeh=\degree{\h}\ge 3$, and let $\arr$ be a maximum arrangement of $\spider$. Recall that every maximum arrangement must satisfy \Nurse{} and that the vertices in the legs of the spider (which are antennas emanating from $\h$ of length $k-1$ in $\spider$) cannot be thistle vertices due to \pathoptimizationO. Therefore, in order to prove the claim it only remains to prove that $\h$ is not a thistle vertex in $\arr$. The possible values for the level of $\h$ are: $\pm\degreeh$, $\pm2$, $\pm1$, $0$. We prove the claim by dealing with each case separately. Due to symmetry, we deal only with the positive values.

\begin{enumerate}[(i)]
\item \label{thm:appendix:proof:classes:spider:case_level_Delta} $\level{\h}=\degreeh$. In this case, $\h$ is not a thistle vertex and $\arr$ is a bipartite arrangement.

\item \label{thm:appendix:proof:classes:spider:case_level_2} $\level{\h}=2$. Notice that $\arr$ cannot be maximum since there must be a vertex $u\in\neighbors{\h}$ such that $\level{u}=2$, which contradicts \cref{propos:properties_max_arrs:known:no_neighs_in_same_level}.

\item \label{thm:appendix:proof:classes:spider:case_level_0} $\level{\h}=0$. We show that such an arrangement is not maximum by obtaining one with higher cost. Now, due to \Nurse{} we assume w.l.o.g. that the starting distribution of the hub's neighbors is as indicated in \cref{fig:appendix:proof:classes:spider:hub_displacement:level_0}(O). In the whole of \cref{fig:appendix:proof:classes:spider:hub_displacement:level_0}, the values $\hptwo$, $\hpone$, $\hmone$ and $\hmtwo$ indicate both the set and number of neighbors of $\h$ of level $+2$, $+1$, $-1$, and $-2$ respectively; the values $\nptwo$, $\npone$, $\nmone$ and $\nmtwo$ indicate both the set and number of non-neighbor vertices of $\h$, with level $+2$, $+1$, $-1$, and $-2$ respectively. Therefore $\neighbors{\h}=\hptwo\cup\hpone\cup\hmone\cup\hmtwo$, and $V=\{\h\}\cup\hptwo\cup\hpone\cup\hmone\cup\hmtwo\cup\nptwo\cup\npone\cup\nmone\cup\nmtwo$. Since we assumed that $\level{\h}=0$, we have $\hptwo+\hpone=\hmone+\hmtwo$. Notice we cannot have $\hptwo=\hpone=0$ or $\hmtwo=\hmone=0$. Moreover, the values $\npone$ and $\nptwo$ denote the number of vertices of level $1$ and $2$, respectively, in $\arr$ not connected to $\h$ with an edge in $\spider$. \cref{fig:appendix:proof:classes:spider:hub_displacement:level_0} illustrates a sequence of steps (labeled from (A) to (E)) to move $\h$ from the `middle' of $\arr$ (O) to the leftmost end of $\arr$ (E).\footnote{Symmetric steps can be applied to move $\h$ to the rightmost end.} In \cref{fig:appendix:proof:classes:spider:hub_displacement:level_0}, every `local' increment is indicated next to each set of edges involved in the step. We denote the resulting arrangement (E) with $\arr'$. The sums of local increments at each step of \cref{fig:appendix:proof:classes:spider:hub_displacement:level_0} are the following.
\begin{flushleft}
\begin{tabular}{rll}
A)	& $A - O = \hpone^2 \ge 0$, 					& \\
B)	& $B - A = \npone(2\hpone - 1)$,				& \\
C)	& $C - B = \hptwo(\hptwo + 2\hpone - 1) \ge 0$,	& because we cannot have $\hptwo=\hpone=0$, \\
D)	& $D - C = \npone\hptwo \ge 0$,					& \\
E)	& $E - D = \nptwo(\degreeh - 2) \ge 0$,			& because $\degreeh\ge3.$
\end{tabular}
\end{flushleft}
Notice that in step (D), the value $\iD{v}$ for those $v\in\hptwo$ does not change. Now, it is easy to see that when $\hpone>0$ we have $A-O>0$ and $B-A\ge0$. Therefore, when $\hpone>0$, $\arr'$ always has a higher cost than $\arr$. In case $\hpone=0$ it easily follows that the total sum of the increments is $R=\npone(\hptwo - 1) + \hptwo(\hptwo - 1) + \nptwo(\degreeh - 2)$. Since $\hpone=0$ we must have $\hptwo>0$ and thus $R\ge0$. Whether $R=0$ or $R>0$, there will be at least one thistle vertex in $\arr'$, located in the legs, which arose from moving the vertices in $\hptwo$, therefore $\arr'$ cannot be maximum (\pathoptimizationO) and thus neither is $\arr$.\footnote{Notice that in order to prove that $R\ge0$ we needed the `readjustment' step in (D) where we move vertices in $\hptwo$ to the region in the arrangement corresponding to vertices of level $0$. Without it, we would have $R'=-\npone + \hptwo(\hptwo - 1) + \nptwo(\degreeh - 2)$ thus having to deal with the negative term $-\npone$ unfolding an unnecessarily complex analysis.}
\begin{figure}
	\centering
	\includegraphics{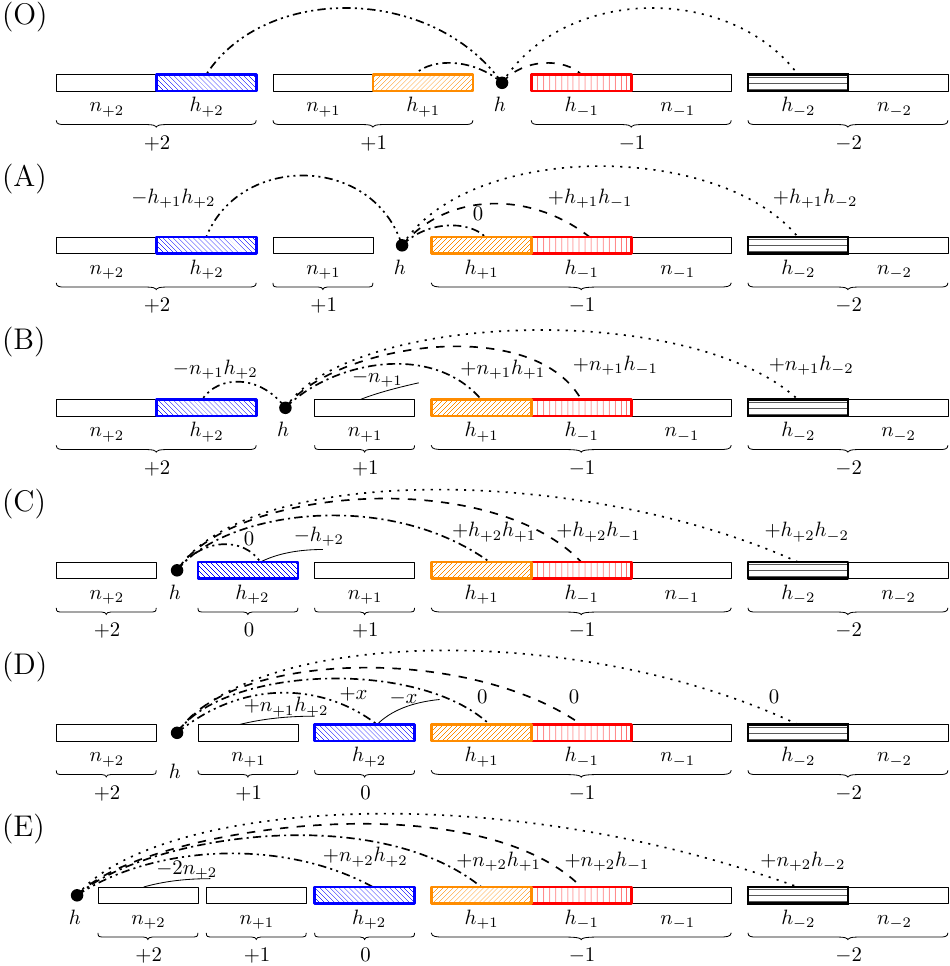}
	\caption{Proof of \cref{thm:max_for_classes:k_linear:spider:no_nonbipartite_arrs}. Steps to increase the cost of an arrangement $\arr$ of a spider graph $\spider\in\spiderclass$ with $\level{\h}=0$. At every step we indicate the increase (or decrease) in length of the edges with respect the previous step. The values $\hptwo$, $\hpone$, $\hmone$ and $\hmtwo$ indicate both the set and number of neighbors of $\h$ of level $+2$, $+1$, $-1$, and $-2$ respectively. The values $\nptwo$, $\npone$, $\nmone$ and $\nmtwo$ indicate both the set and number of vertices in the tree non-neighbors of $\h$ with level $+2$, $+1$, $-1$, and $-2$ respectively.}
	\label{fig:appendix:proof:classes:spider:hub_displacement:level_0}
\end{figure}

\item \label{thm:appendix:proof:classes:spider:case_level_1} $\level{\h}=1$. This case is similar to case (\ref{thm:appendix:proof:classes:spider:case_level_0}) and thus we only show the initial and final arrangements in \cref{fig:appendix:proof:classes:spiders:hub_displacement:level_1}. The format of \cref{fig:appendix:proof:classes:spiders:hub_displacement:level_1} is the same as that of \cref{fig:appendix:proof:classes:spider:hub_displacement:level_0}, except that now $\neighbors{\h}=\hptwo\cup\hmone\cup\hmtwo$ and $\hptwo+1=\hmone+\hmtwo$ since we assumed that $\level{\h}=1$. The increments we obtain for every step in \cref{fig:appendix:proof:classes:spiders:hub_displacement:level_1} are the following
\begin{flushleft}
\begin{tabular}{rll}
A)	& $A - O = \hptwo^2 \ge 0$, 			& \\
B)	& $B - A = \npone\hptwo \ge 0$,			& \\
C)	& $C - B = \hptwo(\degreeh - 2) \ge 0$,	& because $\degreeh\ge3$. \\
\end{tabular}
\end{flushleft}
The analysis is straightforward. We only need to show that in this case $\hptwo>0$. This follows from the fact that if $\hptwo=0$ then $1=\hmone+\hmtwo$. But that implies $\hmone=1$ and $\hmtwo=0$, or the other way around, which cannot satisfy $\degreeh=\hmone+\hmtwo\ge3$. Thus it is sufficient to see that the first and last increments are strictly positive and thus $\arr$ cannot be maximum.
\begin{figure}
	\centering
	\includegraphics{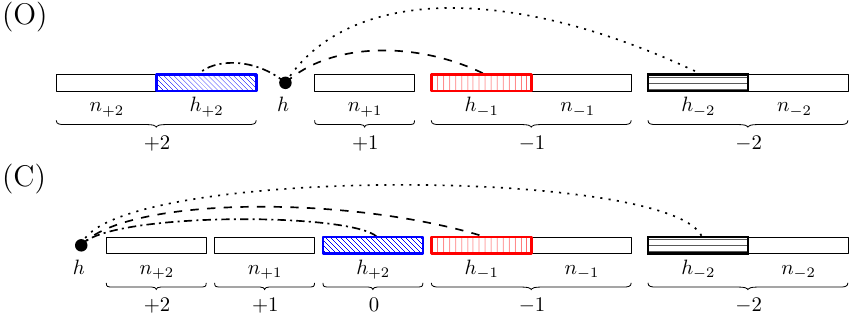}
	\caption{Proof of \cref{thm:max_for_classes:k_linear:spider:no_nonbipartite_arrs}. Initial and final arrangements of case (\ref{thm:appendix:proof:classes:spider:case_level_1}). Intermediate steps A and B are not shown. The format of the figure is the same as that of \cref{fig:appendix:proof:classes:spider:hub_displacement:level_0}.}
	\label{fig:appendix:proof:classes:spiders:hub_displacement:level_1}
\end{figure}
\end{enumerate}

The case-by-case analysis above show that a maximum arrangement of $\spider$ cannot have thistle vertices in a maximum arrangement of its vertices.
% ------------------------------------------------------------------------------
%-------------------------------------------------------%
% </automatic inline of '8C-proof-classes-spiders.tex'> %
%-------------------------------------------------------%
%---------------------------------------------------------%
% <automatic inline of '8D-proof-classes-two-linear.tex'> %
%---------------------------------------------------------%
\section{Proof of \cref{thm:max_for_classes:k_linear:two_linear:single_thistle}}
\label{sec:appendix:proof:classes:two_linear_trees}

Here we prove \cref{thm:max_for_classes:k_linear:two_linear:single_thistle}. This is simply a proof that the hubs cannot be thistle vertices in maximum arrangements, and thus only a vertex of the bridge can. In order to do this, we simply take one of the two hubs, assume a specific level value, and then move it from its pre-assumed location in the arrangement to an end of the arrangement while showing that the cost increases. Since there are many combinations of pairs of level values for the two hubs, we first classify all the cases, and reduce them to just 3 classes of cases, and solve each class independently.

Let $\twolinear\in\twolinearclass$ be a $2$-linear tree, let $\h$ and $\g$ be its two hubs, let $\degreeh=\degree{\h}\ge3$ and $\degreeg=\degree{\g}\ge3$ be their degrees, respectively, and let $\arr$ be any of its maximum arrangements. Let $P$ be the bridge of $\twolinear$, the only branchless path that is not an antenna; let $k$ be its length in edges. Notice that in order to prove the claim we only need to prove that $\h$ nor $\g$ are not thistle vertices in $\arr$ since (\pathoptimizationO) the vertices located in the legs of $\twolinear$ cannot be thistle vertices in $\arr$ and (\pathoptimizationOO) at most one internal vertex of $P$ can be a thistle vertex.

The proof is very similar to the proof of \cref{thm:max_for_classes:k_linear:spider:no_nonbipartite_arrs} (\cref{sec:appendix:proof:classes:spiders}), but we apply it to the hubs $\h$ and $\g$. Hub $\h$ can only have level values $\pm\degreeh$, $\pm2$, $\pm1$, $0$, and hub $\g$ can only have level values $\pm\degreeg$, $\pm2$, $\pm1$, $0$. We do not consider the cases when $\h$ and $\g$ have level $\pm\degreeh$ and $\pm\degreeg$ at the same time. Moreover, recall that the cases $\pm2$ are not possible in maximum arrangements for the same reasons explained in part (\ref{thm:appendix:proof:classes:spider:case_level_2}) of the proof of \cref{thm:max_for_classes:k_linear:spider:no_nonbipartite_arrs} (in \cref{sec:appendix:proof:classes:spiders}). We now list the cases $\proofcase_i=(a,b)$ to be solved based on the level values $a$ and $b$ of the leftmost hub and the rightmost hub.
\begin{center}
\begin{tabular}{rcr|rcr|rcr|rcr}
$\proofcase_1$ & $(\degreeg,1)$  & \proofcasepone &     $\proofcase_4$ & $(1,1)$         & \proofcasemone &                  &                 &           &                  &                  &           \\
$\proofcase_2$ & $(\degreeg,0)$  & \proofcasezero &     $\proofcase_5$ & $(1,0)$         & \proofcasezero &     $\proofcase_8$    & $(0,0)$         & \proofcasezero &                  &                  &           \\
$\proofcase_3$ & $(\degreeg,-1)$ & \proofcasemone &     $\proofcase_6$ & $(1,-1)$        & \proofcasemone &     $\proofcase_9$    & $(0,-1)$        & \proofcasemone &     $\proofcase_{11}$ & $(-1,-1)$        & \proofcasemone \\
          &                 &           &     $\proofcase_7$ & $(1,-\degreeh)$ & \proofcasemone &     $\proofcase_{10}$ & $(0,-\degreeh)$ & \proofcasezero &     $\proofcase_{12}$ & $(-1,-\degreeh)$ & \proofcasepone \\
\end{tabular}
\end{center}
In the list of cases above we assume that $\g$ is the leftmost hub and $\h$ is the rightmost hub. The circled letters \proofcasemone, \proofcasezero, \proofcasepone{} indicate the group of cases to which each particular case belongs. The cases are grouped based on how they are solved. In order to solve a case $\proofcase$, that is, prove that the cost of an arrangement $\arr$ of $\rho$ can be improved, $\arr$ is first mirrored ($m$), if needed, so that the level of the rightmost hub has the same level value as the rightmost hub of the other cases in the same group.
\begin{itemize}
\item[\proofcasemone] The rightmost hub has level $-1$ (\cref{fig:appendix:proof:classes:two_linear:single_thistle:cases_to_solve}(a)),
\item[\proofcasezero] The rightmost hub has level $0$ (\cref{fig:appendix:proof:classes:two_linear:single_thistle:cases_to_solve}(b)),
\item[\proofcasepone] The rightmost hub has level $+1$ (\cref{fig:appendix:proof:classes:two_linear:single_thistle:cases_to_solve}(c)).
\end{itemize}
Second, the resulting rightmost hub is moved to the desired end of the arrangement: either to the leftmost end of the arrangement ($\leftarrow$) or to the rightmost end of the arrangement ($\rightarrow$). Now follows, for each case, the operations to be performed in order to solve them.
\begin{center}
\begin{tabular}{rcl|rcl|rcl}
\multicolumn{3}{c|}{\proofcasemone}                    & \multicolumn{3}{c|}{\proofcasezero}                      & \multicolumn{3}{c}{\proofcasepone}                       \\
           &                 &                    &              &                 &                    &              &                  &                   \\
 $\proofcase_3$ & $(\degreeg,-1)$ & $\rightarrow$      & $\proofcase_2$    & $(\degreeg,0)$  & $\rightarrow$      & $\proofcase_1$    & $(\degreeg,1)$   & $\leftarrow$      \\
 $\proofcase_4$ & $(1,1)$         & $m$, $\rightarrow$ & $\proofcase_5$    & $(1,0)$         & $\rightarrow$      & $\proofcase_{12}$ & $(-1,-\degreeh)$ & $m$, $\leftarrow$ \\
 $\proofcase_6$ & $(1,-1)$        & $\rightarrow$      & $\proofcase_8$    & $(0,0)$         & $\rightarrow$      &              &                  &                   \\
 $\proofcase_7$ & $(1,-\degreeh)$ & $m$, $\rightarrow$ & $\proofcase_{10}$ & $(0,-\degreeh)$ & $m$, $\rightarrow$ &              &                  &                   \\
 $\proofcase_9$ & $(0,-1)$        & $\rightarrow$      &              &                 &                    &              &                  &                   \\
 $\proofcase_{11}$ & $(-1,-1)$    & $\rightarrow$      &              &                 &                    &              &                  &                   \\
\end{tabular}
\end{center}
All groups are shown in \cref{fig:appendix:proof:classes:two_linear:single_thistle:cases_to_solve} which is organized in pairs of initial and final arrangements. In that figure we use $\g$ and $\h$ to label the leftmost and rightmost hubs, respectively. Moreover, the values $\hptwo$, $\hpone$, $\hmone$ and $\hmtwo$ indicate both the set and number of neighbors of $\h$ of level $+2$, $+1$, $-1$, and $-2$ respectively; none of these vertices is in $P$. The values $\nptwo$, $\npone$, $\nmone$ and $\nmtwo$ indicate both the set and number of vertices non-neighbors of $\h$ with level $+2$, $+1$, $-1$, and $-2$ respectively. $\nzero$ denotes both the set and number of vertices of level $0$ not including $\h$. Notice that $\nzero\subseteq\{\g,\w\}$, where $\w$ is one of the internal vertices of $P$ which, due to \pathoptimizationOO, can have level $0$ in a maximum arrangement. If $k>2$ then we can assume, w.l.o.g. thanks to \pathoptimizationOO, that $\w$ is a neighbor of $\g$ and thus $\nzero\cap\neighbors{\h}=\emptyset$. We deal with the cases $k=1$ and $k=2$ in the analysis of each group.

Now we show that for each group of arrangements, the cost can always be improved. As in the proof of \cref{thm:max_for_classes:k_linear:spider:no_nonbipartite_arrs}, for each group we move a hub to a desired location where it is no longer a thistle vertex and show that either the result has a higher cost, or the cost is the same but the result is not maximum; while moving the hub, we rearrange the necessary vertices to occupy the appropriate positions in the arrangement according to their level value. In each group, we denote the initial arrangement as $\arr$, and the final arrangement as $\arr'$. In the following analyses, we use $a_{xy}\in\{0,1\}$ to denote whether or not $xy\in E$ for any two vertices $x,y$.
\begin{itemize}
\item[\proofcasemone] $\level{\h}=-1$; it is illustrated in \cref{fig:appendix:proof:classes:two_linear:single_thistle:cases_to_solve}(a) and has three steps ($A$ to $C$). Here, $\hptwo\cup\hpone\cup\hmtwo\subseteq\neighbors{\h}$ and $\hptwo+\hpone=1+\hmtwo$. If $\level{\g}=-1$, then $\g\in\nmone$ and the position of $\g$ is affected when the vertices in $\hmtwo$ are readjusted. However, $\g$ is moved to the right, whereby we deduce that $\iD{\g}[\arr']>\iD{\g}$. Now, if $k=1$ and $\level{\g}=0$ then $\nzero=\{\g\}$ and $\hptwo\cup\hpone\cup\hmtwo\cup\nzero=\neighbors{\h}$; if $k=1$ and $\level{\g}\neq0$ then $\nzero=\emptyset$ and $\hptwo\cup\hpone\cup\hmtwo=\neighbors{\h}$; if $k=2$ then $\w\in\nzero$ and $\awh=1$; plus, $\dl{\w\h}[\arr']-\dl{\w\h}\ge 0$. Disregarding the increment of $\iD{\g}[]$ (when $\level{\g}=-1$) and the (strictly positive) increment of the length of $\w\h$ (except in the last step), now follow lower bounds of the increments in every step.
\begin{flushleft}
\begin{tabular}{rll}
A)	& $A - O \ge \agh\hmtwo + \hmtwo^2 \ge 0$, 			& \\
B)	& $B - A \ge \nmone\hmtwo \ge 0$,					& \\
C)	& $C - B \ge \agh\nmtwo + \nmtwo(\degreeh - 2) \ge 0$,& because $\degreeh\ge3$. \\
\end{tabular}
\end{flushleft}
Notice that the vertices in $\hmtwo$ have now level $0$ in $\arr'$. Since we must have $\hmtwo>0$, $\arr'$ cannot be maximum due to \pathoptimizationO. Therefore, since the cost of $\arr'$ is at least the cost of $\arr$, then $\arr$ cannot be maximum either.

\item[\proofcasezero] $\level{\h}=0$; it is illustrated in \cref{fig:appendix:proof:classes:two_linear:single_thistle:cases_to_solve}(b) and has six steps ($A$ to $F$). Here, $\neighbors{\h}=\hptwo\cup\hpone\cup\hmone\cup\hmtwo$ and $\hptwo+\hpone=\hmone+\hmtwo$. If $\level{\g}=0$, the position of $\g$ is affected when the neighbors $\hmone$ are readjusted. However, when $\level{\g}=0$, $\iD{\g}[\arr']=\iD{\g}$. If $k=1$ then $\nzero=\emptyset$ (\cref{propos:properties_max_arrs:known:no_neighs_in_same_level}) and thus $\level{\g}\ge1$ and the position of $\g$ is never affected. In case $k=2$ then $\w\in\nzero$ and $\awh=1$. Notice that the length of edge $\w\h$ would only increase in case it existed. Disregarding the (strictly positive) increment of the length of edge $\w\h$ (except in the last step), now follow lower bounds of the increments in every step.
\begin{flushleft}
\begin{tabular}{rll}
A)	& $A - O \ge \agh\hmone + \hmone^2 \ge 0$, 					& \\
B)	& $B - A \ge \nzero\hmtwo\ge0$,								& \\
C)	& $C - B \ge \agh\nmone + \nmone(2\hmone - 1)$,				& \\
D)	& $D - C \ge \agh\hmtwo + \hmtwo(\hmtwo + 2\hmone - 1)\ge0$,& because we cannot have $\hmtwo=\hmone=0$, \\
E)	& $E - D \ge \nmone\hmtwo\ge0$,								& \\
F)	& $F - E \ge \agh\nmtwo + \nmtwo(\degreeh - 2)\ge0$,		& because $\degreeh\ge3$. \\
\end{tabular}
\end{flushleft}
If $\hmone>0$ then $A-O>0$ and $C-B>0$. If $\hmone=0$ then $\hmtwo>0$ and the total sum of increments is $R=\nmone(\hmtwo - 1 + \agh) + \hmtwo(\hmtwo - 1) + \nmtwo(\degreeh - 2) + \agh\hmtwo + \agh\nmtwo$. It is easy to see that $R\ge0$. Since $\arr'$ is not maximum (due to the vertices in $\hmtwo$ having level 0 in $\arr'$ and \pathoptimizationO) then $\arr$ is not maximum.

\item[\proofcasepone] $\level{\h}=1$; it is illustrated in \cref{fig:appendix:proof:classes:two_linear:single_thistle:cases_to_solve}(c) and has four steps ($A$ to $D$). Here, $\level{\g}=\degreeg$,  and $\hptwo+1=\hmone+\hmtwo$. The position of $\g$ is always affected since we assume that for any $2$-linear tree, $\degreeh\ge\degreeg$. Here, if $k=1$ then $\nzero=\emptyset$ and $\neighbors{\h}=\hptwo\cup\hpone\cup\hmone\cup\hmtwo$. If $k=2$ then $\h$ may have a neighbor $w\in\nzero$, and, as before, $\dl{\w\h}[\arr']-\dl{\w\h}\ge 0$. Disregarding this edge (except in the second-to-last step), the increments of the four steps we identified are the following.
\begin{flushleft}
\begin{tabular}{rll}
A)	& $A - O \ge -\agh\hptwo + \hptwo^2 \ge 0$, 			& \\
B)	& $B - A \ge \npone\hptwo^2\ge0$,						& \\
C)	& $C - B \ge -\agh\nptwo + \nptwo(\degreeh - 2)\ge0$,	& because $\degreeh\ge3$, \\
D)	& $D - C \ge \degreeh - \degreeg\ge0$,					& because $\degreeh\ge\degreeg$. \\
\end{tabular}
\end{flushleft}
As before, since $\arr'$ is not maximum (due to the vertices in $\hmtwo$ having level 0 in $\arr'$ and \pathoptimizationO) and the cost of $\arr'$ is at least the cost of $\arr$, then $\arr$ is not maximum.
\end{itemize}

\begin{figure}
	\centering
	\includegraphics[scale=0.975]{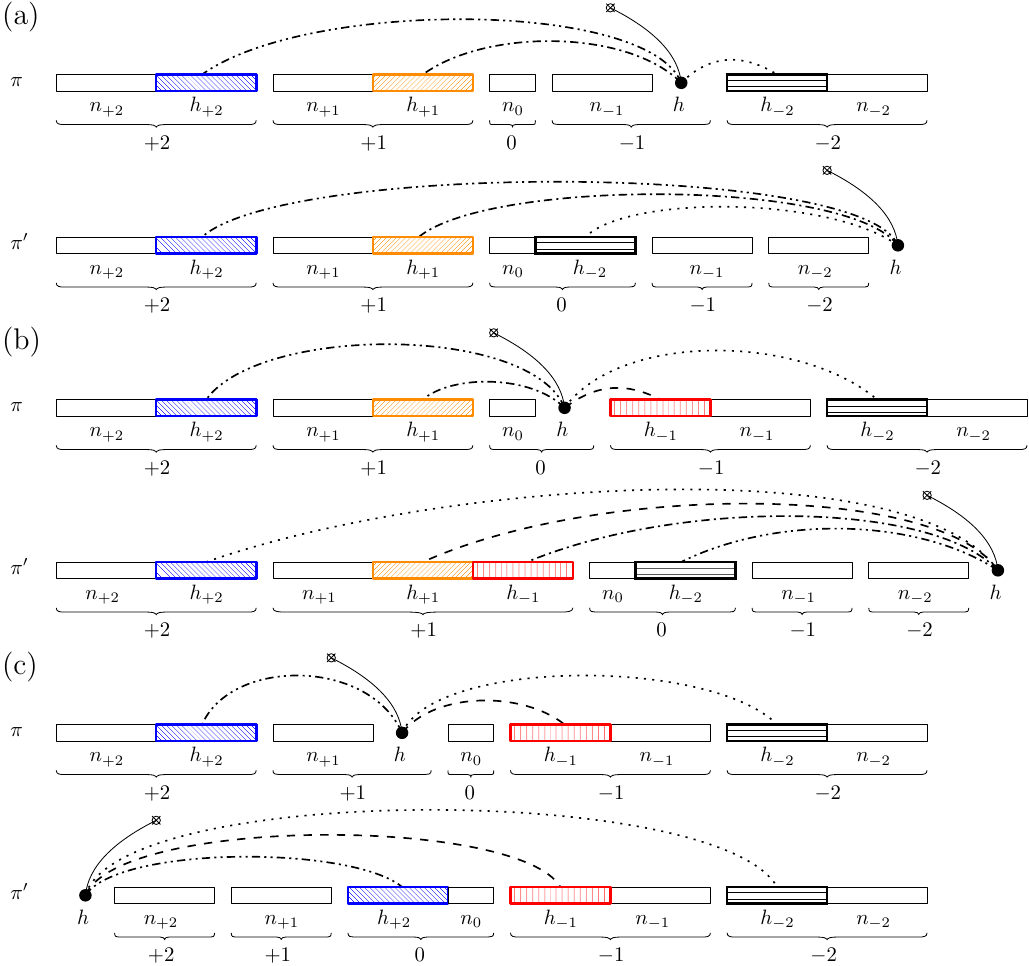}
	\caption{Proof of \cref{thm:max_for_classes:k_linear:two_linear:single_thistle}. Pairs of arrangement before moving $\h$, and final arrangement after moving $\h$. In the figure, hub $\g$ is not shown, but is assumed to be such that $\arr(\g)<\arr(\h)$; the edge connecting $\g$ and $\h$ (when it exists in the tree) is depicted with a solid line emanating from $\h$ and ending at a crossed circle. The values $\hptwo$, $\hpone$, $\hmone$ and $\hmtwo$ indicate both the set and number of neighbors of $\h$ of level $+2$, $+1$, $-1$, and $-2$ respectively. The values $\nptwo$, $\npone$, $\nmone$ and $\nmtwo$ indicate both the set and number of vertices non-neighbors of $\h$ with level $+2$, $+1$, $0$, $-1$, and $-2$ respectively. (a) $\level{\h}=-1$. (b) $\level{\h}=0$. (c) $\level{\h}=
	1$.}
	\label{fig:appendix:proof:classes:two_linear:single_thistle:cases_to_solve}
\end{figure}

The case-by-case analysis above shows that an arrangement $\arr$ cannot be maximum if the hubs of $\twolinear$ are thistle vertices in $\arr$; thus the hubs cannot be thistle vertices.
%----------------------------------------------------------%
% </automatic inline of '8D-proof-classes-two-linear.tex'> %
%----------------------------------------------------------%

%%%%%%%%%%%%%%%%%%%%%%%%%%%%%%%%%%%%%%%%%%%%%%%%%%%%%%%%%%%%%%%%%%%%%%%%%%%%%%%%

% ----------------------------------------------------------------------
% Bibliography
\printbibliography

\end{document}